\pgfplotsset{width=3.5cm, compat=1.8, grid style={dashed}}
\renewcommand{\P}{\mathcal{P}}
\newcommand{\prb}{\mathbf{Pr}}
\newcommand{\Ab}{\mathbf{A}}
\newcommand{\Ib}{\mathbf{I}}
\newcommand{\xb}{\mathbf{x}}
\newcommand{\bb}{\mathbf{b}}
\newcommand{\M}{\mathcal{M}}
\newcommand{\C}{\mathcal{C}}
\newcommand{\Act}{\operatorname{Act}}
\newcommand{\supp}{\operatorname{supp}}
\newcommand{\goal}{\operatorname{Goal}}
\newcommand{\fail}{\operatorname{fail}}
\newcommand{\Cyl}{\operatorname{Cyl}}
\renewcommand{\S}{\mathfrak{S}}
\renewcommand{\Pr}{\mathrm{Pr}}
\newcommand{\fin}{{\operatorname{fin}}}
\newcommand{\Paths}{\operatorname{Paths}}
\newcommand{\Post}{\operatorname{post}}
\newcommand{\Pre}{\operatorname{pre}}
\newcommand{\dtpw}{\textsf{dtpw}}
\newcommand{\dppw}{\textsf{dppw}}
\newcommand{\utw}{\textsf{utw}}
\newcommand{\utpw}{\textsf{utpw}}
\newcommand{\Dw}{\textsf{Dw}}
\newcommand{\dtw}{\textsf{dtw}}
\newcommand{\dtps}{\textsf{DTP}}
\newcommand{\partsubsysmap}{\texttt{psubsys}}
\newcommand{\minval}{\operatorname{min-val}}
\newcommand{\maxval}{\operatorname{max-val}}
\newcommand{\interface}{\operatorname{inc}}
\newcommand{\outint}{\operatorname{out}}
\newcommand{\pointval}{\operatorname{val}}
\newcommand{\pre}{\operatorname{parent}}
\newcommand{\sucs}{\operatorname{children}}
\newcommand{\reach}{\operatorname{reach}}
\newcommand{\dom}{\operatorname{dom}}
\newcommand{\closure}{\operatorname{cl}}
\newcommand{\out}{\operatorname{exit}}
\newcommand{\lst}[2]{\operatorname{\textbf{l}}_{#1}^{#2}}
\newcommand{\rst}[2]{\operatorname{\textbf{r}}_{#1}^{#2}}
\newcommand{\llay}[1]{\operatorname{left}_{#1}}
\newcommand{\rlay}[1]{\operatorname{right}_{#1}}
\newcommand{\lay}[1]{\operatorname{layer}_{#1}}
\newcommand{\gurobi}{\textsc{Gurobi}}
\newcommand{\cbc}{\textsc{Cbc}}
\newcounter{simoncomments}
\colorlet{simonColor}{Yellow!30!white}
\colorlet{florianColor}{Brown!30!white}
\colorlet{jakobColor}{Magenta!30!white}
\theoremstyle{plain}
\newtheorem{thm}{Theorem}
\newtheorem{theorem}[thm]{Theorem}
\newtheorem{lemma}[thm]{Lemma}
\newtheorem{definition}[thm]{Definition}
\theoremstyle{definition}
\newtheorem{remark}[thm]{Remark}
\title{Witnessing subsystems for probabilistic systems \\ with low tree width}
\author{Simon Jantsch \qquad Jakob Piribauer \qquad Christel Baier
  \institute{Technische Universität Dresden\thanks{This work was supported by DFG grant 389792660 as part of TRR~248 -- CPEC, see \url{https://perspicuous-computing.science}, the Cluster of Excellence EXC 2050/1 (CeTI, project ID 390696704, as part of Germany’s Excellence Strategy), DFG-projects BA-1679/11-1, BA-1679/12-1 and the Research Training Group QuantLA (GRK 1763).}}
  \email{\{simon.jantsch, jakob.piribauer, christel.baier\}@tu-dresden.de}}
\begin{document}
\maketitle

\begin{abstract}
  A standard way of justifying that a certain probabilistic property holds in a system is to provide a witnessing subsystem (also called critical subsystem) for the property.
  Computing \emph{minimal} witnessing subsystems is NP-hard already for acyclic Markov chains, but can be done in polynomial time for Markov chains whose underlying graph is a tree.
  This paper considers the problem for probabilistic systems that are \emph{similar} to trees or paths.
  It introduces the parameters \emph{directed tree-partition width} (\dtpw) and \emph{directed path-partition width} (\dppw) and shows that computing minimal witnesses remains NP-hard for Markov chains with bounded \dppw{} (and hence also for Markov chains with bounded \dtpw{}).
  By observing that graphs of bounded \dtpw{} have bounded width with respect to all known tree similarity measures for directed graphs, the hardness result carries over to these other tree similarity measures.
  Technically, the reduction proceeds via the conceptually simpler \emph{matrix-pair chain problem}, which is introduced and shown to be NP-complete for nonnegative matrices of fixed dimension.
  Furthermore, an algorithm which aims to utilise a given directed tree partition of the system to compute a minimal witnessing subsystem is described.
  It enumerates partial subsystems for the blocks of the partition along the tree order, and keeps only necessary ones.
  A preliminary experimental analysis shows that it outperforms other approaches on certain benchmarks which have directed tree partitions of small width. 
\end{abstract}

\section{Introduction}

The ability to justify and explain a verification result is an important feature in the context of verification.
For example, classical model checking algorithms for linear temporal logic (LTL) return a \emph{counterexample} if the system does not satisfy the property.
In this case a counterexample is typically an ultimately periodic trace of the system which does not satisfy the formula.
\emph{What} constitutes a valid counterexample is dependent on both the kind of system and the kind of specification that is considered~\cite{ClarkeV2003}.

For \emph{probabilistic systems}, which can be modeled using \emph{discrete-time Markov chains} (DTMC) or the more general \emph{Markov decision processes} (MDP), a single trace of the system is usually not enough to justify that a given (probabilistic) property holds\cite{AbrahamBDJKW2014}.
A typical class of properties in the probabilistic setting are \emph{probabilistic reachability constraints}, where one asks whether the (maximal or minimal, for MDPs) probability to reach a set of goal states satisfies a threshold condition.
The motivation for considering reachability (apart from the fact that it is a fundamental property) is that one can treat $\omega$-regular properties using methods for reachability over the product of the system with an automaton for the property~\cite{WimmerJAKB2014}.
To justify that the probability to reach the goal is higher than some threshold in a DTMC one can return a \emph{set of traces} of the system whose probability exceeds the threshold\cite{HanK2007}.
Another notion of counterexamples for probabilistic reachability constraints are \emph{witnessing subsystems} (also called \emph{critical subsystems})~\cite{WimmerJABK2012,WimmerJAKB2014,FunkeJB2020}.
The idea is to justify a lower bound on the reachability probability by providing a \emph{subsystem} which \emph{by itself} already exceeds the threshold.
Apart from providing explanations to a human working with the system model, probabilistic explanations have been used in automated analysis frameworks such as \emph{counterexample-guided abstraction refinement}~\cite{HermannsWZ2008} or \emph{counterexample-guided inductive synthesis}~\cite{AndriushchenkoCJK2021a,CeskaHJK2019}.
In all these applications it is important to find \emph{small} explanations (all paths of the system or the entire system as subsystem are trivial explanations in case the property holds).

Computing \emph{minimal} witnessing subsystems in terms of number of states is computationally difficult.
The corresponding decision problem, henceforth called the \emph{witness problem}, is NP-complete already for acyclic DTMCs~\cite{FunkeJB2020}.
Known algorithms rely on \emph{mixed-integer linear programming} (MILP)~\cite{WimmerJABK2012,WimmerJAKB2014,FunkeJB2020} or \emph{vertex enumeration}~\cite{FunkeJB2020}.
On the other hand, the problem is in P for DTMCs whose underlying graph is a tree~\cite{FunkeJB2020}.
This leads to the natural question of whether efficient algorithms exist for systems whose underlying graph is \emph{similar to} a tree.
A parameter that measures this is the \emph{tree width} of a graph, which has been studied extensively in graph theory~\cite{Bodlaender1997}.
Several NP-hard problems for graphs, for example the $3$-coloring problem, are in P for graphs with bounded tree width~\cite{Bodlaender1997}.
Faster algorithms for standard problems in probabilistic model checking were proposed for systems of small tree width~\cite{AsadiCGMP2020a,ChatterjeeL2013}.
Algorithms for non-probabilistic quantitative verification problems on models with low tree width were considered in~\cite{ChatterjeeIP2021}.
One motivation for studying such systems is that \emph{control-flow graphs} of programs in languages such as JAVA and C, under certain syntactic restrictions, are known to have bounded tree width~\cite{Thorup1998,GustedtMT2002}.
A stronger notion than tree width is \emph{path width}~\cite{RobertsonS1983}, which intuitively measures how similar a graph is to a path and has been applied in fields such as graph drawing~\cite{DujmovicFKLMNRRWW2008} and natural language processing\cite{KornaiT1992}.

The standard notion of tree width is defined for undirected graphs.
Related notions have been considered for directed graphs, although here the theory is not as mature and as of now there is no standard notion~\cite{JohnsonRST2001, Safari2005, Reed1999}.
A stronger notion than tree width for undirected graphs is that of \emph{tree-partition width}~\cite{Wood2009,Seese1985,Halin1991}, which requires a partition of the graph whose induced quotient structure is a tree.
To the best of our knowledge, \emph{tree-partition width} has not been studied for directed graphs so far.

\noindent \textbf{Contributions.}%
\begin{itemize}
\item The paper introduces a tree-similarity measure called directed tree-partition width (\dtpw), which can be seen as the directed analogue to the tree-partition width for undirected graphs that is known from the literature. We show that deciding whether there exists a directed tree partition with width at most $k$ is NP-complete (\Cref{sec:treelikesystems}).
\item 
The second main contribution is NP-completeness for the witness problem in  Markov chains with bounded \dtpw{} (\Cref{sec:hardnesswitness}).
This implies that the problem is NP-hard for bounded-width Markov chains with respect to the following tree-similarity measures for directed graphs: directed tree width (from\cite{JohnsonRST2001}), bounded D-width (from\cite{Safari2005}) and bounded undirected tree width.
The reduction proceeds via the conceptually simpler $d$-dimensional matrix-pair chain problem, which we introduce and show to be NP-complete (\Cref{sec:hardnessmcp}) for fixed $d$, even for nonnegative matrices.
\item We describe a dedicated algorithm that computes minimal witnesses by proceeding bottom-up along a given directed tree partition.
  It enumerates partial subsystems for the blocks of the partition, but keeps only necessary ones (\Cref{sec:algorithm}).
  On certain instances that do have a good tree-decomposition, our prototype implementation significantly outperforms the standard MILP approach (\Cref{sec:experiments}).
\end{itemize}

\section{Preliminaries}

\noindent \textbf{Graphs, partitions and quotients.}
A set of subsets $\{S_1,\ldots,S_n\}$ of a given set $S$ is a \emph{partition} of $S$ if $S_i \cap S_j = \varnothing$ for all $1 \leq i < j \leq n$ and $\bigcup_{1 \leq i \leq n} S_i = S$.
The elements of a partition are called \emph{blocks}.
Given a graph $G = (V,E)$ and a partition $\mathcal{P} = \{V_1,\ldots,V_n\}$ of $V$, the \emph{quotient} of $G$ under $\mathcal{P}$ is the graph $(\mathcal{P},E_{\mathcal{P}})$, where $(V_i,V_j) \in E_{\mathcal{P}}$ iff there is some $(s,t) \in E$ such that $s \in V_i$ and $t \in V_j$.
A directed graph is a \emph{tree} if its underlying undirected graph is a tree and every node has in-degree at most one.
A directed graph is a \emph{path} if it is a tree and all vertices have outdegree at most one.

\noindent \textbf{Probabilistic systems.}
A \emph{discrete-time Markov chain} (DTMC) $\M$ is a tuple $(S,P,\iota)$ where $S$ is a set of states, $P : S \times S \to \mathbb{Q}_{\geq 0}$ is the \emph{probabilistic transition matrix}, which needs to satisfy $\sum_{s' \in S} P(s,s') \leq 1$ for all $s \in S$, and where the \emph{initial distribution} $\iota$
satisfies $\sum_{s\in S} \iota (s)\leq 1$.
\footnote{We require the transition matrix and initial distribution to be \emph{sub-stochastic} here. 
To obtain a \emph{stochastic} transition matrix and initial distribution one can add a state $\fail$ together with edges to $\fail$ carrying the missing probability. Using sub-stochastic DTMCs technically simplifies the treatment of subsystems and allows to ignore the $\fail$ state in the underlying graph.}
A state $s$ with $\sum_{s' \in S} P(s,s')=0$ is called a trap state. A path is a finite or infinite sequence $s_0 s_1 s_2 \ldots \in S^{\leq\omega}$ such that $\iota(s_0) > 0$ and for all $i \geq 0: \; P(s_i,s_{i+1}) > 0$.
A \emph{maximal path} is  a path that is infinite or that ends in a trap state.
The set of maximal paths of $\M$ is called $\Paths(\M)$ and carries a (sub-)probability measure whose associated $\sigma$-algebra is generated by the cylinder sets $\Cyl(\tau) = \{\pi \in \Paths(\M) \mid \tau \text{ is a prefix of } \pi\}$, which have probability $\Cyl(s_0 \ldots s_n) = \iota(s_0) \cdot \prod_{0 \leq i < n} P(s_i,s_{i+1})$.
The probability of a measurable set $\Pi \subseteq \Paths(\M)$ is denoted by $\Pr_{\M}(\Pi)$.
A \emph{Markov decision process} (MDP) $\M$ is a tuple $(S,\Act,P,\iota)$ where $S$ is a set of states, $\Act : S \to 2^{A}$ is a function that assigns to each state a finite set of \emph{actions} from the set $A$, $P : S \times A \times S \to \mathbb{Q}_{\geq 0}$ satisfies $\sum_{s' \in S} P(s,\alpha,s') \leq 1$ for all $(s,\alpha)$ with  $s \in S$ and $\alpha \in \Act(s)$, and $\iota$ satisfies $\sum_{s\in S} \iota (s)\leq 1$.
The  paths  of $\M$ are finite or infinite sequences  $s_0 \alpha_0 s_1 \alpha_i \ldots \in (S \times A)^{\leq \omega} $ such that $\iota(s_0) > 0$ and for all $i \geq 0: \; P(s_i,\alpha_i,s_{i+1}) > 0$ and $\alpha_i \in \Act(s_i)$.
The set $\Paths_\fin(\M)$ denotes the finite paths ending in a state.
The set $\Paths(\M)$ is the set of maximal paths, which are infinite or end in a state $s$ with $\sum_{s^\prime\in S}P(s,\alpha,s^\prime)=0$ for all actions $\alpha\in \Act(s)$.
A \emph{scheduler} of $\M$ is a function $\S : \Paths_\fin(\M) \to A$ satisfying $\S(s_0 \alpha_0 \ldots s_n) \in \Act(s_n)$.
Every scheduler $\S$ of $\M$ induces a (possibly infinite) Markov chain $\M_\S$ and thereby a (sub-)probability measure on $\Paths(\M)$.
The maximal, respectively minimal, probabilities of some path property $\Pi \subseteq \Paths(\M)$ are defined as $\prb^{\max}_\M(\Pi) = \sup_{\S} \Pr_{\M_{\S}}(\Pi)$ and $\prb^{\min}_\M(\Pi) = \inf_{\S} \Pr_{\M_{\S}}(\Pi)$, where $\S$ ranges over all schedulers of $\M$.
For $\omega$-regular properties this notation is justified as the supremum, respectively infimum, is attained by some scheduler.
For more details see~\cite[Chapter 10]{BaierK2008}.
The underlying graph of an MDP $\M = (S,\Act,P,\iota)$ has vertices $S$ and edges: $\{(s,s') \in S\times S \mid \text{there exists } \alpha \in \Act \text{ such that } P(s,\alpha,s') > 0\}$.
We denote by $\M(s)$ the MDP one gets by replacing the initial distribution in $\M$ by the dirac distribution on $s \in S$.

\noindent \textbf{Witnessing subsystems.} Let $\M = (S,\Act,P,\iota)$ be an MDP.
A \emph{subsystem} of $\M$ is an MDP $\M' = (S',\Act,P',\iota')$ where $S' \subseteq S$ 
and for all $(s,\alpha)$ with $s \in S'$ and $\alpha \in \Act(s)$ and $s' \in S' : \ P'(s,\alpha,s') \in \{P(s,\alpha,s'), 0\}$.
Similarly, we require for all $s \in S : \ \iota'(s) \in \{\iota(s),0\}$.
In words, edges of the subsystem either retain the probability of the original system, or have probability zero. This again results in a sub-stochastic MDP.
The subsystem \emph{induced by} a set of states $S' \subseteq S$ is defined as $\M_{S'} = (S',\Act,P',\iota')$ where $P'(s,\alpha,s') = P(s,\alpha,s')$ if $s,s' \in S'$, and otherwise $P'(s,\alpha,s') = 0$, and similarly for $\iota'$.
Given an MDP $\M' = (S',\Act,P',\iota')$, we assume that there is a set of trap states $\goal$.
By $\lozenge \goal$, we denote the set of paths that contain a state $t\in \goal$. 
Any subsystem $\M'$ of $\M$ satisfies $\prb_{\M'}^{\max}(\lozenge \goal) \leq \prb_{\M}^{\max}(\lozenge \goal)$ and $\prb_{\M'}^{\min}(\lozenge \goal) \leq \prb_{\M}^{\min}(\lozenge \goal)$ as the probability to reach $\goal$ cannot increase under any scheduler by setting transition probabilities to $0$.
The subsystem $\M'$ is a \emph{witness} for $\prb^{*}_{\M}(\lozenge \goal) \geq \lambda$ if $\prb^{*}_{\M'}(\lozenge \goal) \geq \lambda$, where $* \in \{\min,\max\}$ and $\lambda \in [0,1]$.
The definition for DTMCs is analogous.

\section{Directed tree- and path-partition width}
\label{sec:treelikesystems}

We propose a natural extension of tree-partition width~\cite{Wood2009} to directed graphs.
In what follows, let $G = (V,E)$ be a fixed finite directed graph.

\begin{definition}[Directed tree partition]
  A partition $\mathcal{P} = \{V_1,\ldots,V_n\}$ of $V$ is a \emph{directed tree partition} of $G$ if the quotient of $G$ under $\mathcal{P}$ is a tree.
  We denote by $\dtps(G)$ the set of directed tree partitions of $G$.
\end{definition}

\begin{definition}[Directed tree-partition width (\dtpw)]
  The \emph{directed tree-partition width} of graph $G$ is:
  \[ \emph{\dtpw}(G) := \min_{\mathcal{P} \in \dtps(G)} \; \max_{S \in \mathcal{P}} \; |S|\]
\end{definition}
Replacing \emph{tree} by \emph{path} in definitions 1 and 2 leads to the notions of \emph{directed path partition} and \emph{directed path-partition width} (\dppw).
Any strongly connected component of a graph needs to be included in a single block of the partition, which distinguishes this notion from other notions of tree width for directed graphs.
In particular, it is not the case that the standard tree width of an undirected graph $G_u$ equals the directed tree-partition width of the directed graph one gets by including both edges $(u,v)$ and $(v,u)$ whenever $u$ and $v$ are connected in $G_u$.

\begin{restatable}{proposition}{relationToOtherNotions}
  If a class $\C$ of graphs has bounded directed tree-partition width, then $\C$ has bounded directed tree width (from\cite{JohnsonRST2001}), bounded D-width (from\cite{Safari2005}) and bounded undirected tree width.
\end{restatable}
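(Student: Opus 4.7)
My plan is to reduce all three comparisons to the undirected tree width of the underlying undirected graph $G_u$ of $G$. The central step is to prove the concrete bound $\text{tw}(G_u) \leq 2 \cdot \dtpw(G) - 1$ by an explicit construction. Suppose $\mathcal{P} = \{V_1,\ldots,V_n\}$ is a directed tree partition of $G$ with $|V_i| \leq k$ and quotient tree $T$. Because every node of $T$ has in-degree at most one, there is a unique root and a well-defined parent function on the non-root blocks. I would use $T$, viewed as an undirected tree, as the decomposition tree, and assign to each node $V_i$ the bag $B_i := V_i \cup V_{\text{parent}(i)}$ (setting $B_{\text{root}} := V_{\text{root}}$), giving bags of size at most $2k$.

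Next I would verify the two tree-decomposition axioms. Every intra-block edge is covered by the bag of that block. A cross-block edge connecting $V_i$ and $V_j$ witnesses an edge of the quotient; since $T$ is a tree, exactly one direction among $(V_i,V_j),(V_j,V_i)$ can appear, so the child bag among $B_i,B_j$ contains both endpoints. For the connectedness axiom, a vertex $v \in V_i$ belongs exactly to $B_i$ and to $B_j$ for every child $j$ of $i$ in $T$, which forms a star at $i$ and is therefore a connected subtree. This establishes the desired bound on undirected tree width, and hence that $\C$ has bounded undirected tree width whenever it has bounded $\dtpw$.

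For the remaining two claims I would invoke the well-known fact that undirected tree width dominates both the directed tree width of~\cite{JohnsonRST2001} and the D-width of~\cite{Safari2005} up to constant factors. In both cases an undirected tree decomposition of $G_u$ can be lifted to an admissible decomposition in the directed sense by rooting the decomposition tree as an arborescence and verifying the respective guarding or cops-and-robber condition on the same bags. Combined with the first step, this yields bounded directed tree width and bounded D-width from bounded $\dtpw$.

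The main obstacle I anticipate is not the undirected reduction, which is a routine bag-merging argument, but rather stating and invoking the comparison inequalities for directed tree width and D-width precisely: their definitions involve non-trivial guarding/strategy conditions, and one must check that lifting an undirected decomposition to an admissible directed one incurs only a bounded blow-up so that ``bounded width'' is genuinely preserved.
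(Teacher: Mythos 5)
Your construction of an undirected tree decomposition of $G_u$ of width $2\,\dtpw(G)-1$ (one node per block, each bag being the block merged with its parent's block) is correct and yields the same quantitative bound as the paper, which instead observes $\utpw(G_u) \leq \dtpw(G)$ and then cites Seese's inequality $2\,\utpw(G_u) \geq \utw(G_u)+1$. Either route is fine for the undirected case.

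For D-width and directed tree width, however, your plan genuinely diverges from the paper's and is underspecified in one place. The paper never touches $G_u$ for these two: since the quotient of a directed tree partition is acyclic, every strongly connected component (in fact every strongly connected subgraph) of $G$ lies inside a single block, so the partition read as a tree of bags is \emph{already} a $d$-decomposition, giving $\Dw(G)\leq\dtpw(G)$ in one sentence; combining this with Safari's Corollary~1, $\dtw(G)\leq\Dw(G)$, finishes the proof. You instead route both comparisons through $\utw(G_u)$ and assert as well-known that $\dtw(G)\leq\utw(G_u)$ and $\Dw(G)\leq\utw(G_u)$. The first is indeed standard. The second is true, but it is not a commonly cited inequality (it follows because strongly connected subgraphs of $G$ are connected in $G_u$, and in any tree decomposition the bags meeting a connected subgraph induce a connected subtree), and the justification you sketch for it --- ``rooting the decomposition tree as an arborescence and verifying the respective guarding or cops-and-robber condition'' --- does not apply to D-width, whose admissibility condition is purely bag-based with no guarding or game component. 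So you would need to supply the $\Dw$-vs-$\utw$ comparison yourself, whereas the paper's direct observation that a directed tree partition is already a $d$-decomposition is both shorter and avoids the dependence on $G_u$ entirely.
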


Deciding whether a directed tree partition with small width exists is NP-complete.
The reduction goes from the \emph{oneway bisection problem}~\cite{FeigeY2003} in directed graphs, which asks whether there exists a partition of the given graph into two equally-sized vertex sets $V_0,V_1$ such that all edges go from $V_0$ to $V_1$.

\begin{restatable}{proposition}{treePartitionWithNPHardness}
  The two problems \emph{(1)} decide $\emph{\dppw(G)} \leq k$ and \emph{(2)} decide $\emph{\dtpw(G)} \leq k$, given a directed graph $G$ and $k \in \mathbb{N}$, are both NP-complete.
\end{restatable}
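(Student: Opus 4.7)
For NP-membership I would take a partition $\mathcal{P}$ of $V$ as a polynomial-size certificate, then in polynomial time check that $|S| \leq k$ for all $S \in \mathcal{P}$ and that the quotient of $G$ under $\mathcal{P}$ is a tree (resp.\ a path), which only requires inspecting the inter-block edges and the in-/out-degrees in the quotient.

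For NP-hardness, the plan is to reduce from the oneway bisection problem~\cite{FeigeY2003}. Given an instance $H = (V_H, E_H)$ with $|V_H| = 2n$, I would construct $G = (V, E)$ with $V = V_H \cup A \cup B$, where $A = \{a_1, \ldots, a_n\}$ and $B = \{b_1, \ldots, b_n\}$ are fresh ``anchor'' vertices, and
\[
E = E_H \cup \{(a, v) : a \in A, v \in V_H\} \cup \{(v, b) : v \in V_H, b \in B\} \cup \{(a, b) : a \in A, b \in B\}.
\]
Set $k = 2n$. I would then argue that $H$ admits a oneway bisection iff $\dppw(G) \leq 2n$ iff $\dtpw(G) \leq 2n$.

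The forward direction is direct: from a oneway bisection $V_H = V_0 \sqcup V_1$, the two-block partition $\{A \cup V_0,\ V_1 \cup B\}$ has both blocks of cardinality $2n$, and every edge of $G$ is either internal to a block or goes from the first block to the second, giving a directed path partition of width $2n$ (and hence also a directed tree partition of the same width). For the converse, starting from any directed tree partition $\{U_1, \ldots, U_m\}$ of $G$ of width $\leq 2n$, I would use the density of the anchor edges $A \times B$, $A \times V_H$, and $V_H \times B$ to show, by a case analysis on the tree quotient, that there is an index $p$ with $A \subseteq U_p$, $B \subseteq U_{p+1}$ adjacent in the quotient tree, $V_H \subseteq U_p \cup U_{p+1}$, and every other block empty. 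The width bound combined with $|V| = 4n$ then forces $|V_H \cap U_p| = |V_H \cap U_{p+1}| = n$, and the path shape of the quotient makes all $H$-edges go from $V_H \cap U_p$ to $V_H \cap U_{p+1}$; therefore $V_0 := V_H \cap U_p$ and $V_1 := V_H \cap U_{p+1}$ form a oneway bisection of $H$. The same construction handles both $\dppw$ and $\dtpw$, since the anchor gadgets force any admissible tree partition to collapse to the described two-block path shape.

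The hardest step will be the structural analysis of the tree quotient --- in particular, ruling out configurations where $A$-vertices or $B$-vertices are spread across distinct branches, which I would handle by exploiting the unique-parent property of tree quotients together with the complete-bipartite pattern of anchor edges. A secondary subtlety is guaranteeing that no $H$-edge remains internal to $U_p$ or $U_{p+1}$ (which would invalidate the oneway bisection); I would address this by subdividing each $H$-edge with an auxiliary vertex connected to the anchors so that any intra-block $H$-edge violates the path-quotient constraint.
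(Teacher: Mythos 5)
Your approach is essentially the paper's: both reduce from the oneway bisection problem, use the NP certificate of a partition for membership, and engineer a gadget that forces every width-bounded directed tree partition to collapse to a two-block directed path whose two blocks yield the bisection. The difference is in the weight of the gadget: the paper adds just two universal vertices, a source $i$ with $i \to v$ for all $v$ and a sink $e$ with $v \to e$ for all $v$ (plus $i \to e$), sets $k = |V|/2 + 1$, and rules out path partitions of length $1$ or $3$ by a short case analysis; you add $2n$ anchor vertices in two complete-bipartite cohorts $A, B$ and argue via the unique-parent property. Both arguments go through, so this is the same proof in spirit, just with a heavier construction.

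One concrete point to flag: your ``secondary subtlety'' --- that an $H$-edge internal to $U_p$ or $U_{p+1}$ would invalidate the oneway bisection --- rests on a misreading of the problem. Oneway bisection only forbids edges from $V_1$ \emph{back} to $V_0$; edges internal to $V_0$ or $V_1$ are perfectly allowed, and the paper's own backward direction relies on this (it only checks that there are no quotient edges from the second block to the first). Consequently the subdivision fix you propose is both unnecessary and potentially harmful: introducing auxiliary vertices changes $|V|$ and upsets the counting argument that forces the two blocks to have equal size. Drop that step and your reduction is clean.
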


\section{Hardness of the witness problem for DTMCs with low directed path-partition width}
\label{sec:hardnesswitness}
The witness problem for probabilistic reachability in DTMCs is defined as follows.
\begin{definition}[Witness problem]
  The \emph{witness problem} takes as input a DTMC $\M$, $k \in \mathbb{N}$ and $\lambda \in \mathbb{Q}$, and asks whether there exists a witnessing subsystem for $\Pr_{\M}(\lozenge \goal) \geq \lambda$ with at most $k$ states.
\end{definition}
The problem is NP-complete for acyclic DTMCs, but in P for DTMCs whose underlying graph is a tree\cite{FunkeJB2020}.
In this section we prove NP-hardness of the witness problem for DTMCs with directed path-partition width at most 6 (and hence also for DTMCs with directed tree-partition at most 6). The idea is to relate different subsets of the blocks of a directed path partition to matrices that describe how the reachability probability is passed on through the system.
This leads us to a natural intermediate problem:

\begin{definition}[$d$-dimensional matrix-pair chain problem]
  The $d$-dimensional matrix-pair chain problem ($d$-MCP) takes as input a sequence $(M_0^1,M_1^1),\ldots,(M_0^n,M_1^n)$, where $M_i^j \in \mathbb{Q}^{d \times d}$, a starting vector $\iota \in \mathbb{Q}^{1 \times d}$, final vector $f \in \mathbb{Q}^{d \times 1}$, and $\lambda \in \mathbb{Q}$ (with all numbers encoded in binary) and asks whether there exists a tuple $(\sigma_1,\ldots, \sigma_n) \in \{0,1\}^n$ such that:
  \[ \iota \cdot M_{\sigma_1}^1 \cdots M_{\sigma_n}^n \cdot f \geq \lambda \]
  The \emph{nonnegative} variant of the problem restricts all input numbers to be nonnegative.
\end{definition}
In the following we will show that the $2$-MCP is NP-hard and can be reduced to the \emph{nonnegative} $3$-MCP.
The reason that we are interested in the nonnegative variant is that we would like to reduce it to the witness problem, where one can naturally encode (sub)stochastic matrices, but it is unclear how to deal with negative values.
\newpage
\noindent The chain of polynomial reductions that the argument uses is the following:
\begin{figure}[h!]
  \centering
  \resizebox{.7\linewidth}{!}{
  \begin{tikzpicture}
    \node[draw,rounded rectangle,scale=1,align=center,minimum size=1.5cm] (mcp) {$2$-MCP};
    \node[draw,rounded rectangle,scale=1,align=center,minimum size=1.5cm] (nnmcp) [right= 2cm of mcp] {nonnegative $3$-MCP};
    \node[draw,rounded rectangle,scale=1,align=center,minimum size=1.5cm] (mwp) [right= 2cm of nnmcp] {witness problem \\ $\dppw{} = 6$};
  
    \draw[->,thick,shorten <= 0.1cm,shorten >= 0.1cm] (mcp.east) -- (nnmcp.west);
    \draw[->,thick,shorten <= 0.1cm,shorten >= 0.1cm] (nnmcp.east) -- (mwp.west);
\end{tikzpicture}}
\end{figure}
\subsection{Hardness of the matrix-pair chain problem}
\label{sec:hardnessmcp}

To show NP-hardness of the $2$-MCP we reduce from the \emph{partition} problem, which is among Karp's 21 NP-complete problems\cite{Karp1972}.
Given a finite set $S = \{s_1, \ldots, s_n\} \subseteq \mathbb{Z}$, whose elements are encoded in binary, it asks whether there exists $W \subseteq S$ such that $\sum W = \sum (S \setminus W)$, where $\sum X = \sum_{x \in X} x$.
The main idea of the polynomial reduction is to relate each $s_i$ to a pair of matrices $M^{i}_0,M^{i}_1$ where $M^{i}_0$ realizes a clockwise rotation by an angle which corresponds to the 
 value of $s_i$, and $M^{i}_1$ realizes the counter-clockwise rotation by the same angle.
Then for all $\sigma_1, \ldots, \sigma_n \in \{0,1\}^n$ we have that $W = \{s_i \mid \sigma_i = 1\}$ satisfies $\sum W = \sum (S \setminus W)$ iff $M^1_{\sigma_1} \cdots M^n_{\sigma_n}$ equals the identity matrix.

In the reduction we choose initial vector $\iota = (1/2, 1/2)$, final vector $f = (1/2, 1/2)^T$ and threshold $\lambda = 1/2$.
In this way, the chosen chain of rotation matrices $M^1_{\sigma_1} \cdots M^n_{\sigma_n}$ is applied to $f$ before it is checked whether the resulting vector $v$ satisfies $(1/2, 1/2)\cdot v \geq 1/2$. This is  the case iff $v=(1/2,1/2)^T$ and hence iff $M^1_{\sigma_1} \cdots M^n_{\sigma_n}$ is equal to the identity matrix. For an illustration of this idea, see Figure \ref{fig:twoMCP}.

\begin{figure}
  \centering
  \resizebox{0.4\linewidth}{!}{
    \includegraphics{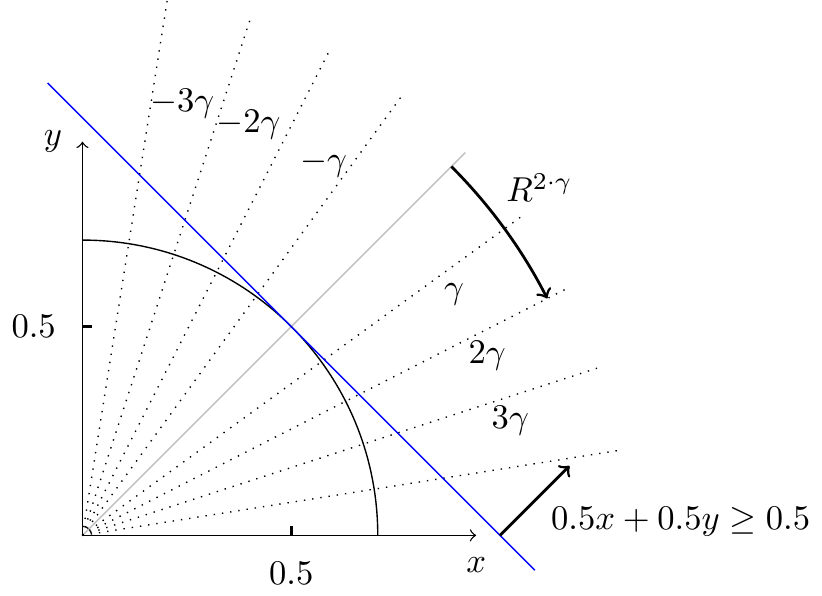}
  }
  \caption{Sketch for~\Cref{prop:mcp}. Matrices are rotations by angles that are multiples of $\gamma$. In order to satisfy the threshold, the final point needs to lie in the halfspace on the right of the blue line. This is only possible if the angles of the rotation matrices sum up to zero.}\label{fig:twoMCP}
\end{figure}

Rotation matrices, however, may have irrational entries in general.
It is shown in \cite{CannyDR1992} that for any rational rotation angle $\varphi$ and $\epsilon \in \mathbb{Q}_{>0}$, a rotation matrix $R^{\theta}$ (rotating by angle $\theta$) with rational entries can be computed in time polynomial in $\log (1/ \epsilon)$ such that $|\varphi - \theta| < \epsilon$.
Using such rational matrices which rotate by approximately the desired angles and by estimating the resulting precision of the matrix multiplication, we can provide a slightly smaller threshold $\lambda^\prime< \lambda$ to complete the reduction from the partition problem to the 2-MCP with rational matrices (a detailed proof can be found in Appendix \ref{app:proofsMCP}).

\begin{restatable}{proposition}{twoMCP}
  \label{prop:mcp}
  The two-dimensional matrix-pair chain problem ($2$-MCP) is NP-complete.
\end{restatable}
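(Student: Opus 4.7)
The plan is to handle NP-membership and NP-hardness separately. Membership is straightforward: a witness is the bit-string $(\sigma_1,\ldots,\sigma_n) \in \{0,1\}^n$, which has linear size; evaluating the product $\iota \cdot M_{\sigma_1}^1 \cdots M_{\sigma_n}^n \cdot f$ with fixed dimension $d=2$ involves $n$ constant-size rational matrix multiplications, so bit-sizes of intermediate products grow at most polynomially and the comparison against $\lambda$ can be done in polynomial time.

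For NP-hardness I would reduce from the partition problem. Given $S = \{s_1,\ldots,s_n\} \subseteq \mathbb{Z}$ encoded in binary, fix a rational $\gamma > 0$ small enough that $\gamma \cdot \sum_i |s_i| < 2\pi$; for instance $\gamma = 2\pi/(\sum_i |s_i| + 1)$ works and has bit-size polynomial in the input. Pretending for a moment that the planar rotations $R^{\pm s_i \gamma}$ have rational entries, I would set $M_0^i = R^{-s_i \gamma}$, $M_1^i = R^{+s_i \gamma}$, $\iota = (1/2,1/2)$, $f = (1/2,1/2)^T$, and $\lambda = 1/2$. A direct calculation yields $\iota \cdot R^\varphi f = (1/2)\cos\varphi$, so the product equals $(1/2)\cos\bigl(\gamma \sum_i (-1)^{\sigma_i} s_i\bigr)$, and by the choice of $\gamma$ this hits the threshold $1/2$ iff $\sum_i (-1)^{\sigma_i} s_i = 0$, i.e.\ iff $\{s_i \mid \sigma_i = 1\}$ is a balanced partition of $S$.

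The main obstacle is that rotations by generic rational angles do not have rational entries, so the matrices $M_0^i, M_1^i$ as described are not valid inputs. I would handle this by invoking the construction of Canny, Donald and Renegar \cite{CannyDR1992}, which produces, for any rational angle $\varphi$ and any $\epsilon > 0$, a rational $2 \times 2$ rotation $R^\theta$ with $|\theta - \varphi| < \epsilon$ in time polynomial in $\log(1/\epsilon)$. Since planar rotations are isometries, multiplying $n$ such $\epsilon$-approximations introduces an aggregate error of at most $O(n\epsilon)$ in operator norm, and hence in the scalar $\iota \cdot (\cdot) \cdot f$. In a no-instance the exact accumulated angle satisfies $|\varphi| \geq \gamma$, producing a gap $\delta = (1/2)(1 - \cos\gamma) = \Theta(\gamma^2)$ between the ideal values of yes- and no-instances.

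To finish, I would pick $\epsilon$ polynomially small in $\gamma$ and $n$ so that $O(n\epsilon) < \delta/3$, and replace the threshold by $\lambda' = 1/2 - \delta/2$: then yes-instances still satisfy $\iota \cdot M \cdot f \geq \lambda'$ (the truth value $1/2$ is perturbed by at most $\delta/3$) while no-instances stay strictly below $\lambda'$ (their true value is at most $1/2 - \delta$, perturbed by at most $\delta/3$). The remaining routine step is to check that $\log(1/\gamma)$, $\log(1/\epsilon)$, and the bit-sizes of $\lambda'$ and all matrix entries are polynomial in the bit-size of the partition instance, which follows from $\sum_i |s_i| \leq n \cdot 2^L$ where $L$ is the encoding length of the $s_i$.
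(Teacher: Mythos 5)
Your proposal follows the same route as the paper's proof: the same reduction from the partition problem via planar rotations with angle increment $\gamma$, the same observation that $\iota\, R^{\varphi} f = \tfrac{1}{2}\cos\varphi$ so the threshold $1/2$ is hit exactly when the accumulated angle is $0$, the same appeal to the Canny--Donald--Renegar rational rotation approximation, and the same error-accounting threshold adjustment $\lambda' < 1/2$. The one concrete slip is in the rationality of two of your parameters: $\gamma = 2\pi/(\sum_i |s_i| + 1)$ is irrational, so it does not have a finite (let alone polynomial) bit-size and cannot be supplied as the rational input angle that the Canny--Donald--Renegar procedure requires; likewise your $\lambda' = 1/2 - \delta/2$ with $\delta = \tfrac{1}{2}(1-\cos\gamma)$ is irrational. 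Both are easy to repair: pick any rational $\gamma$ with $0 < \gamma \sum_i |s_i| < 2\pi$ (the paper takes $\gamma = 3/(4m)$, exploiting $3 < \pi$), and pick a rational $\lambda'$ of polynomial bit-size in the separating interval, which exists because the gap between the largest perturbed no-instance value and the smallest perturbed yes-instance value is at least $2^{-\mathrm{poly}}$ in the input size. With those replacements your argument coincides with the paper's.
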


The proof of~\Cref{prop:mcp} depends crucially on the fact that negative numbers are allowed in the $2$-MCP, as we would not be able to use the rotation matrices otherwise.
To encode the $2$-MCP into the \emph{nonnegative} $3$-MCP, one can embed the two-dimensional dynamics of a given instance of $2$-MCP into the two-dimensional plane with normal vector $(1,1,1)$ in three dimensions.
To obtain nonnegative matrices, we push vectors further into the direction $(1,1,1)$ at each matrix multiplication step, while preserving the original $2$d-dynamics when projecting onto the subspace orthogonal to $(1,1,1)$. 

To sketch this idea in more detail, let  $(M_0^1,M_1^1),\ldots,(M_0^n,M_1^n)$, with $M_i^j \in \mathbb{Q}^{2 \times 2}$ for all $(i,j) \in \{0,1\} \times \{1,\ldots,n\}$, and $\iota \in \mathbb{Q}^{1 \times 2},f \in \mathbb{Q}^{2 \times 1}$ be an instance of $2$-MCP.
For some $\kappa\in \mathbb{Q}$, we define 
  \begin{equation*}
    N_i^j = 
    B \begin{pmatrix}
      M_i^j & \mathbf{0} \\ \mathbf{0} & \kappa
    \end{pmatrix} B^{-1}, \quad
    \iota' = 
    \begin{pmatrix}
      \iota & \kappa
    \end{pmatrix} B^{-1}, \quad
    f' =
    B\begin{pmatrix}
    f \\ \kappa
    \end{pmatrix} \;\; \text{ and } \;\; \lambda' = \lambda + \kappa^{n+2}
  \end{equation*}
where we use the matrix 
 \[B = 
  \begin{pmatrix}
    1 & 1 & 1 \\
    -1 & 1 & 1 \\
    0 & -2 & 1
  \end{pmatrix}
  \qquad \text{with inverse } \quad B^{-1} = 
  1/6\cdot \begin{pmatrix}
    3 & -3 & 0 \\
    1 & 1 & -2 \\
    2 & 2 & 2
  \end{pmatrix}
  \]
  to change the basis. Note that the columns of $B$ are orthogonal to each other and that the third standard basis vector is mapped to $(1,1,1)$ under the change of basis.
For any $\sigma_1,\ldots,\sigma_n \in \{0,1\}^n$, it is now easy to compute that:
  \begin{align*}
    \iota^\prime \cdot N_{\sigma_1}^1 \cdots N_{\sigma_n}^n \cdot f^\prime =
     \iota\cdot  M_{\sigma_1}^1 \cdots M_{\sigma_n}^n \cdot f + \kappa^{n+2}.
   \end{align*}
So, the constructed instance of the 3-MCP is a yes-instance if and only if the original instance of the 2-MCP is one.
By choosing $\kappa$ large enough, we furthermore can make sure that all matrices $N_j^i$ are nonnegative. This completes the proof of the following proposition. Details can be found in Appendix \ref{app:proofsMCP}.

\begin{restatable}{proposition}{threeMCP}
  \label{prop:nonnegativematrixchain}
  The nonnegative three-dimensional matrix-pair chain problem (nonnegative $3$-MCP) is NP-complete.
\end{restatable}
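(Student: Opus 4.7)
Membership in NP is immediate: after guessing $\sigma \in \{0,1\}^n$, the quantity $\iota \cdot N_{\sigma_1}^1 \cdots N_{\sigma_n}^n \cdot f$ can be computed in polynomial time since for fixed dimension $d = 3$ the bit-lengths of the intermediate products grow only polynomially. Thus the interesting direction is NP-hardness, which I would obtain by a polynomial reduction from the $2$-MCP of Proposition \ref{prop:mcp}, following the recipe already sketched in the excerpt: embed the two-dimensional dynamics into the plane orthogonal to $(1,1,1)^T$ in $\mathbb{R}^3$ and then translate every iterate in the direction $(1,1,1)^T$ by a common factor $\kappa$, large enough to push all quantities into the nonnegative orthant.

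The first step is the algebraic identity
\[
  \iota' \cdot N_{\sigma_1}^1 \cdots N_{\sigma_n}^n \cdot f' \;=\; \iota \cdot M_{\sigma_1}^1 \cdots M_{\sigma_n}^n \cdot f \;+\; \kappa^{n+2}.
\]
I would prove this by telescoping: the factors $B^{-1} B$ between consecutive $N_i^j = B \cdot \mathrm{diag}(M_i^j, \kappa) \cdot B^{-1}$ cancel, leaving
\[
  (\iota \;\; \kappa) \cdot \mathrm{diag}\bigl(M_{\sigma_1}^1 \cdots M_{\sigma_n}^n,\; \kappa^{n}\bigr) \cdot (f \;\; \kappa)^T,
\]
whose block structure expands exactly into $\iota \cdot M_{\sigma_1}^1 \cdots M_{\sigma_n}^n \cdot f + \kappa^{n+2}$. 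Consequently, setting $\lambda' = \lambda + \kappa^{n+2}$ turns any yes-instance of the $2$-MCP into a yes-instance of the constructed $3$-MCP and vice versa.

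The main obstacle is the second step: choosing $\kappa$ so that $N_i^j$, $\iota'$ and $f'$ are all entrywise nonnegative while keeping $\kappa$ of polynomial bit-length. The key observation is that the third column of $B$ is the all-ones vector $(1,1,1)^T$ and the third row of $B^{-1}$ is $\tfrac{1}{3}(1,1,1)$, so the rank-one correction contributed by the $\kappa$ entry in $\mathrm{diag}(M_i^j,\kappa)$ is exactly $\tfrac{\kappa}{3} J$, where $J$ is the $3\times 3$ all-ones matrix. The remaining contribution $B \cdot \mathrm{diag}(M_i^j, 0) \cdot B^{-1}$ has entries that are bounded by a fixed constant times $\max_{i,j} \|M_i^j\|_\infty$. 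Hence taking any $\kappa$ strictly larger than $3 \cdot C \cdot \max_{i,j} \|M_i^j\|_\infty$ for the appropriate constant $C$ forces every entry of every $N_i^j$ to be positive; an analogous argument handles $\iota'$ and $f'$. Such a $\kappa$ can be encoded in binary with size polynomial in the input, completing the polynomial reduction.

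The final small verification is that the threshold $\lambda' = \lambda + \kappa^{n+2}$ and the constructed vectors and matrices all admit binary encodings of size polynomial in the original $2$-MCP instance, which follows because $\log \kappa$ is polynomial in the input size and $n+2$ is linear in $n$.
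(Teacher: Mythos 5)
Your proposal is correct and follows essentially the same route as the paper's proof: a reduction from the $2$-MCP using the same change-of-basis matrix $B$, the same telescoping cancellation to get $\iota' N_{\sigma_1}^1 \cdots N_{\sigma_n}^n f' = \iota M_{\sigma_1}^1 \cdots M_{\sigma_n}^n f + \kappa^{n+2}$, and the same strategy of choosing $\kappa$ large enough to force nonnegativity. Your rank-one observation (that the $\kappa$-part of $N_i^j$ contributes exactly $\tfrac{\kappa}{3}J$ because the third column of $B$ is $(1,1,1)^T$ and the third row of $B^{-1}$ is $\tfrac{1}{3}(1,1,1)$) is the same fact the paper establishes by explicitly computing $N_i^j = A_i^j + \tfrac{2\kappa}{6}\mathbf{1}^{3\times 3}$, just phrased more conceptually; the remaining bookkeeping on bit-lengths and the handling of $\iota'$ and $f'$ matches the paper's.
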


\subsection{Hardness of the witness problem}\label{sec:hardness_witness}

\begin{figure}
  \centering
  \includegraphics[scale=0.9]{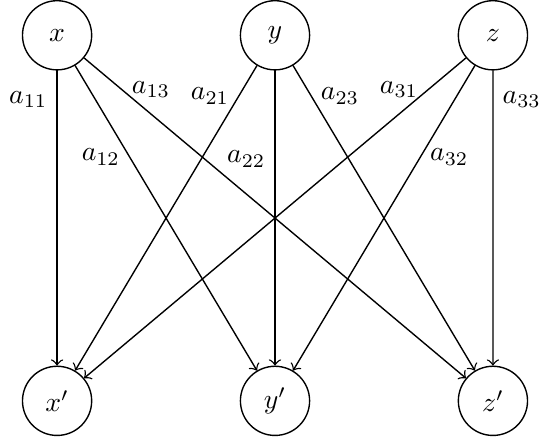}
  \caption{A gadget to encode matrix multiplication. Let $M$ be a substochastic matrix with entries $(M)_{ij} = a_{ij} \in \mathbb{Q}_{\geq 0}$.
  If the probability of states $(x',y',z')$ to reach some goal state is $(v'_x,v'_y,v'_z)$, then these probabilities in states $(x,y,z)$ are
  $M \cdot (v'_x, v'_y, v'_z)^T$.
  We will abbreviate this gadget (i.e. the transitions) by a double-arrow annotated with the matrix ($\xRightarrow{M}$).}
  \label{fig:matrmult}
\end{figure}

The aim of this section is to prove that the witness problem is NP-hard for Markov chains with bounded path-partition width.
The proof goes by a polynomial reduction from the nonnegative 3-MCP.
Let $(M_0^1,M_1^1),\ldots, (M_0^n,M_1^n)$, $\iota,f$ and $\lambda$ be an instance of this problem. For technical reasons explained later, we assume that all entries of the input matrices and vectors are in the range $[1/12 - \epsilon, 1/12]$ for some $\epsilon$ that satisfies:
\begin{equation}
  \label{eqn:epsbound}
  0 < 12 \epsilon < 1/3 \cdot \big(1/12 - \epsilon\big)^{n+2}
\end{equation}
In~\Cref{sec:matrassumptions} we show that the nonnegative $3$-MCP problem remains NP-hard under these assumptions.

\subsubsection{Structure of the reduction}

As a first step,~\Cref{fig:matrmult} shows how one can encode the multiplication of a (suitable) vector with a (suitable) matrix in a Markov chain.
Using this gadget,~\Cref{fig:mainstruct} shows the main structure of the reduction from the nonnegative 3-MCP.
The double arrows represent instances of the gadget from~\Cref{fig:matrmult}.
The initial distribution assigns probability $\iota(x)$ to both states $\rst{1}{x}$ and $\lst{1}{x}$, and similarly for $y$ and $z$.
The final edge from state $x_{n{+}1}$ to $\goal$ has probability $f(x)$, and analogously for all other states on the final layer.
The above assumption on the entries of the matrices and vectors guarantees that the sums of initial probabilities and outgoing probabilities of any state are below $1$.
Hence the result is indeed a Markov chain which we call $\M_1$.
Furthermore, the directed tree-partition width and directed path-partition width of $\M_1$ are both independent of the 3-MCP instance (for the proof, see \Cref{app:hardness_witness}):
\begin{restatable}{lemma}{constrWidth}
  \label{lem:constrwidth}
  $\dppw(\M_1) = \dtpw(\M_1) = 6$.
\end{restatable}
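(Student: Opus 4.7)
The plan is to show the matching bounds $\dppw(\M_1)\le 6$ (upper) and $\dtpw(\M_1)\ge 6$ (lower). Since every directed path is a directed tree, $\dtpw(G)\le\dppw(G)$ holds for every directed graph $G$, so combined these two bounds yield $\dppw(\M_1)=\dtpw(\M_1)=6$.

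For the upper bound I exhibit an explicit directed path partition. Let $V_i=\{\rst{i}{x},\rst{i}{y},\rst{i}{z},\lst{i}{x},\lst{i}{y},\lst{i}{z}\}$ collect the six states of the $i$-th layer, for $i\in\{1,\ldots,n+1\}$, and let $V_{n+2}=\{\goal\}$. Inspecting the construction, the matrix gadgets of \Cref{fig:matrmult} only add edges between consecutive right-copies ($\rst{i}{\cdot}\to\rst{i+1}{\cdot}$) and between consecutive left-copies ($\lst{i}{\cdot}\to\lst{i+1}{\cdot}$), with final edges from the last layer into $\goal$. Hence every edge either stays inside some $V_i$ or crosses from $V_i$ to $V_{i+1}$, so the quotient under $\{V_1,\ldots,V_{n+1},\{\goal\}\}$ is the directed path $V_1\to V_2\to\cdots\to V_{n+1}\to\{\goal\}$. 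Every block has at most six states, giving $\dppw(\M_1)\le 6$.

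For the lower bound I argue that every directed tree partition of $\M_1$ has a block of size at least $6$. Consider such a partition $\mathcal{Q}$ and let $B^*$ denote the block containing $\goal$. Since the quotient is a tree, $B^*$ has in-degree at most one, so every in-neighbor of $\goal$ that is not in $B^*$ must lie in the unique parent block $B^{**}$ of $B^*$. The six in-neighbors of $\goal$ are $\rst{n+1}{\cdot}\cup\lst{n+1}{\cdot}$, and our assumption that all matrix entries lie in $[1/12-\epsilon,1/12]$ (hence are strictly positive) makes each gadget a complete bipartite $K_{3,3}$ on each side; thus the in-neighbors of any right (resp.\ left) vertex at layer $i$ are all three right (resp.\ left) vertices at layer $i-1$. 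Starting from $B^*$ and walking upward through the quotient tree, one propagates these constraints: whenever a block contains a right (resp.\ left) layer-$i$ vertex, the unique parent of that block must absorb every layer-$(i-1)$ right (resp.\ left) vertex not already placed inside the block itself.

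The main obstacle will be the case analysis driving this propagation. Concretely, one splits by how $\rst{n+1}{\cdot}\cup\lst{n+1}{\cdot}$ is distributed between $B^*$ and $B^{**}$ and, in each configuration, verifies that either $B^*$ already exceeds five states, or the required $K_{3,3}$-predecessors push a sixth vertex into $B^{**}$ or into one of its ancestors. The key delicate case is when both $B^*$ and $B^{**}$ contain vertices from both chains simultaneously, since then the predecessors at layer $n$ consist of the whole of $\rst{n}{\cdot}\cup\lst{n}{\cdot}$, which combined with the in-degree-one restriction forces a six-element ancestor block.
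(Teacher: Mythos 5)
Your upper bound is essentially the paper's (partition by layers), although it rests on a misreading of the construction: in $\M_1$ the matrix gadgets do \emph{not} connect left-copies only to left-copies and right-copies only to right-copies. Each layer-$i$ state (left or right) has outgoing edges to \emph{all} of $\llay{i+1}\cup\rlay{i+1}$, since a subsystem that picks $\llay{i}$ at layer $i$ is free to pick either $\llay{i+1}$ or $\rlay{i+1}$ at layer $i+1$ and must in both cases obtain a multiplication by $M_0^i$. The misreading is harmless for the upper bound (the layer partition still has width $6$), but it invalidates the lower-bound argument as written, which hinges on ``the in-neighbors of any right vertex at layer $i$ are the three right vertices at layer $i-1$''.

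Even setting the graph structure aside, the lower bound is not actually proved: you outline a case split driven by how $\rst{n+1}{\cdot}\cup\lst{n+1}{\cdot}$ is distributed between $B^*$ and $B^{**}$ and declare a ``key delicate case'', but you never carry the analysis through. The paper's argument avoids this entirely with two short observations you did not use. First, any directed tree partition of $\M_1$ is automatically a \emph{path} partition, because every state can reach $\goal$, so the quotient tree cannot branch (a leaf block not containing $\goal$ would contain states that cannot reach $\goal$ in the quotient). Second, once the quotient is a path $B_1,\dots,B_m$, fix any interior state $\lst{i}{x}$ with $1<i<n$ in some block $B_j$: its $6$ in-neighbors ($\llay{i-1}\cup\rlay{i-1}$), its $6$ out-neighbors ($\llay{i+1}\cup\rlay{i+1}$), and the $6$ states of its own layer all must lie in $B_{j-1}\cup B_j\cup B_{j+1}$, and by pigeonhole one of those three blocks contains at least $6$ states. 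I would recommend replacing your tree-propagation sketch with this shorter argument.
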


\begin{figure}
  \centering
  \resizebox{0.8\linewidth}{!}{
    \includegraphics{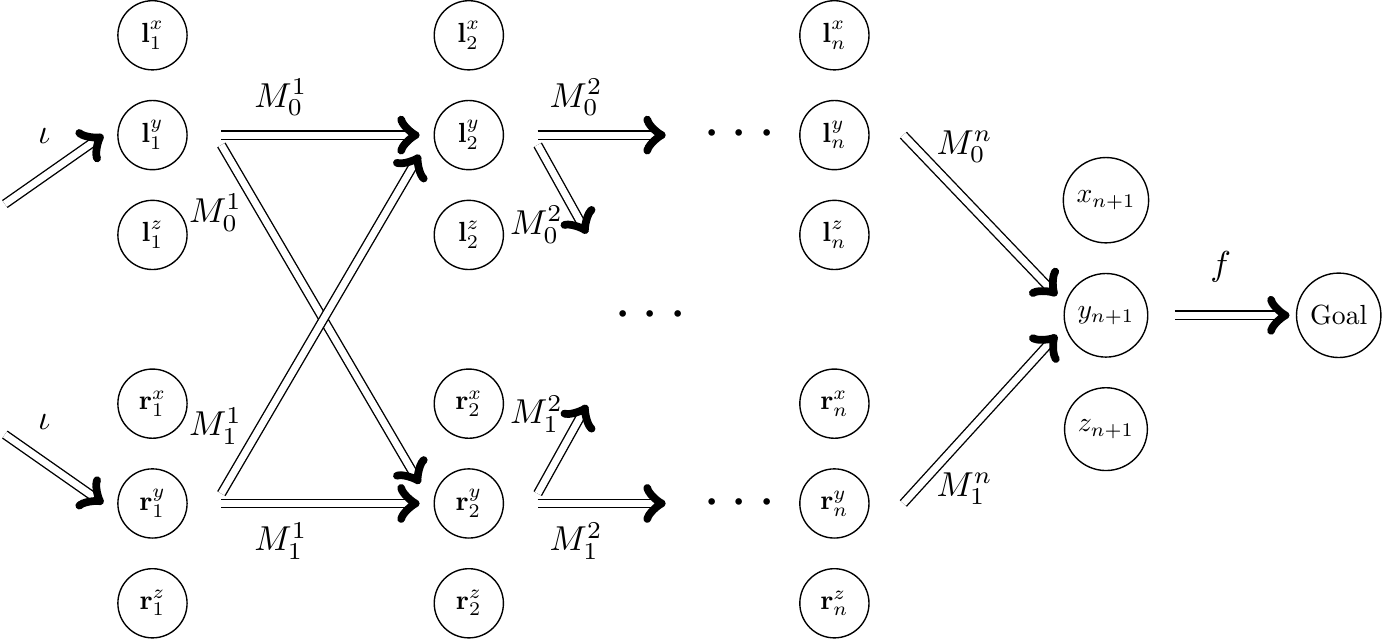}
  }
  \caption{Main structure of the reduction. The \emph{good subsystems} contain either the states $\{\lst{j}{x},\lst{j}{y},\lst{j}{z}\}$ or $\{\rst{j}{x},\rst{j}{y},\rst{j}{z}\}$ for each layer $j$ and thereby correspond to a choice $\sigma_1 \ldots \sigma_n$ in the matrix-pair chain problem.}
  \label{fig:mainstruct}
\end{figure}
Let $\llay{i} = \{\lst{i}{x},\lst{i}{y},\lst{i}{z}\}$ and $\rlay{i} = \{\rst{i}{x},\rst{i}{y},\rst{i}{z}\}$.
A subsystem $S' \subseteq S$ is called \emph{good} if it includes the states $\{x_{n+1},y_{n+1},z_{n+1}\}$ and for all $1 \leq i \leq n$:
\[\text{\emph{either}} \quad \llay{i} \subseteq S' \;\; \text{and} \;\; \rlay{i} \cap \ S' = \varnothing \qquad \text{\emph{or}} \qquad \rlay{i} \subseteq S' \;\; \text{and} \;\; \llay{i} \cap \, S' = \varnothing\]
That is, $S'$ chooses  exactly one of the sets $\llay{i}$ and $\rlay{i}$ in each layer $i$, with $1 \leq i \leq n$.
Good subsystems have exactly $3n + 4$ states (including $\goal$).
Subsystems that are not good are called \emph{bad}.
There is a one-to-one correspondance between good subsystems and matrix sequences in the matrix-pair chain problem.
For a given sequence $\boldsymbol{\sigma} = \sigma_1,\ldots,\sigma_n \in \{0,1\}^n$, let:
\[S_{\boldsymbol{\sigma}} = \{x_{n{+}1},y_{n{+}1},z_{n{+}1}\} \cup \bigcup_{\substack{1 \leq i \leq n\\\sigma_i = 0}} \llay{i} \cup \bigcup_{\substack{1 \leq i \leq n\\\sigma_i = 1}} \rlay{i}\]
In the following we denote by $\Pr^{\M_1}_{S'}(\lozenge \goal)$ the probability of reaching $\goal$ under the subsystem induced by $S'$ in $\M_1$.
The following lemma shows that the probability of reaching $goal$ in a good subsystems coincides with the corresponding matrix product (see~\Cref{app:hardness_witness} for the proof).
\begin{restatable}{lemma}{goodSubsysyProp}
  \label{lem:goodsybsysprop}
  For all $\boldsymbol{\sigma} \in \{0,1\}^n$ we have: $\Pr^{\M_1}_{S_{\boldsymbol{\sigma}}}(\lozenge \goal) = \iota \cdot M_{\sigma_1}^1 \cdots M_{\sigma_n}^n \cdot f$.
\end{restatable}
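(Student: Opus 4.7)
The plan is to proceed by backward induction along the layers of $\M_1$, using the matrix-multiplication gadget of \Cref{fig:matrmult} to supply the inductive step. Fix $\boldsymbol{\sigma} \in \{0,1\}^n$ and consider the subsystem induced by $S_{\boldsymbol{\sigma}}$. For each layer $i \in \{1,\ldots,n\}$ exactly one of the triples $\llay{i}$ or $\rlay{i}$ is retained, as determined by $\sigma_i$; I denote the three retained states by $s^{(i)}_x, s^{(i)}_y, s^{(i)}_z$, and write $v^{(i)} = (v^{(i)}_x, v^{(i)}_y, v^{(i)}_z)^T$ for the vector whose entries are the probabilities of reaching $\goal$ from these states under the induced subsystem. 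For $i = n{+}1$ the triple is $\{x_{n{+}1}, y_{n{+}1}, z_{n{+}1}\}$.

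For the base case, the states $x_{n{+}1}, y_{n{+}1}, z_{n{+}1}$ each have a single outgoing transition to $\goal$ carrying probability $f(x), f(y), f(z)$ respectively, so $v^{(n+1)} = f$. For the inductive step, I would argue that by construction the edges between the retained states of layer $i$ and those of layer $i{+}1$ realize exactly the gadget of \Cref{fig:matrmult} instantiated for the matrix $M^i_{\sigma_i}$. Crucially, this gadget sits entirely inside $S_{\boldsymbol{\sigma}}$: the definition of a good subsystem only removes the opposite layer triple, never any of the auxiliary intermediate states of the gadget, nor any of its internal transitions. Applying the gadget property then gives $v^{(i)} = M^i_{\sigma_i} \cdot v^{(i+1)}$. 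Iterating from $i = n$ down to $i = 1$ yields
\[
v^{(1)} \;=\; M^1_{\sigma_1} \cdots M^n_{\sigma_n} \cdot f.
\]

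To conclude, I would translate $v^{(1)}$ into the reachability probability from the initial distribution. In $\M_1$ each of $\lst{1}{x}$ and $\rst{1}{x}$ receives initial probability $\iota(x)$ (analogously for $y,z$), but inducing the subsystem on $S_{\boldsymbol{\sigma}}$ zeroes out the initial probabilities of the discarded triple, so in $S_{\boldsymbol{\sigma}}$ exactly the three retained layer-$1$ states carry positive initial probability, namely $\iota(x), \iota(y), \iota(z)$. Hence
\[
\Pr^{\M_1}_{S_{\boldsymbol{\sigma}}}(\lozenge \goal) \;=\; \iota(x)\, v^{(1)}_x + \iota(y)\, v^{(1)}_y + \iota(z)\, v^{(1)}_z \;=\; \iota \cdot v^{(1)} \;=\; \iota \cdot M^1_{\sigma_1} \cdots M^n_{\sigma_n} \cdot f,
\]
which is the claimed identity.

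The main point of care — and where I expect the only real subtlety — is justifying the gadget step cleanly: one must verify that the gadget's encoding of matrix-vector multiplication (i.e.\ that each input state of the gadget attains probability $(M \cdot v')_i$ to reach $\goal$ when the output states attain $v'$) still applies when the gadget is embedded inside the larger subsystem, and that the sub-stochasticity needed for this to be a valid Markov chain is preserved everywhere by the range assumption on the matrix entries. Both follow because $S_{\boldsymbol{\sigma}}$ preserves each gadget verbatim and the assumed bounds on entries (together with the fact that only one of $\llay{1}, \rlay{1}$ is retained as initial) keep all row sums and the initial distribution sub-stochastic.
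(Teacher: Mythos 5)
Your proof is correct and follows essentially the same backward induction along the layers as the paper's own proof: identical base case $v^{(n+1)} = f$, identical inductive step $v^{(i)} = M^i_{\sigma_i}\, v^{(i+1)}$ via the matrix-multiplication gadget, and the same final pairing with $\iota$ using that the discarded triple's initial mass is zeroed out when inducing $S_{\boldsymbol{\sigma}}$. One minor and harmless imprecision: the gadget of \Cref{fig:matrmult} as used in $\M_1$ has no auxiliary intermediate states (it is simply the set of direct transitions between the two triples), so the caveat about preserving such states is vacuous---what actually matters, and what you implicitly establish, is that the transitions remaining between the retained triples of consecutive layers after inducing the subsystem are exactly those given by $M^i_{\sigma_i}$.
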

It follows that the 3-MCP reduces to deciding whether there exists a \emph{good} subsystem whose probability to reach goal is at least $\lambda$.
However, it could still be the case that while the 3-MCP instance is a no-instance, there is some \emph{bad} subsystem of size $n$ that satisfies the threshold condition.
We now show how $\M_1$ can be adapted such that the subsystems of size $3n + 4$ with greatest probability are the good ones.

\subsubsection{Interconnecting states}
\label{sec:interconnectingstates}
The idea is to make sure that bad subsystems have decisively less probability to reach $goal$.
To this end we adapt the matrix multiplication gadget from~\Cref{fig:matrmult} such that removing any state leads to a large drop in probability.
This is achieved by adding a cycle which connects the upper states, as shown in~\Cref{subfig:newmultgen}.
The states $x_i,y_i,z_i$ represent one of the triples $\lst{i}{x},\lst{i}{y},\lst{i}{z}$ or $\rst{i}{x},\rst{i}{y},\rst{i}{z}$, and likewise for $x_{i+1},y_{i+1},z_{i+1}$.
The probability of staying inside the cycle is $\gamma$ in each state and the matrix that contains the pairwise probabilities of reaching the states $(x_{i+1},y_{i+1},z_{i+1})$ from states $(x_i,y_i,z_i)$ is:
\begin{equation}
  \label{eqn:gammacyclesprob}
  M = 
  \underbrace{
    \frac{1-\gamma}{1-\gamma^3} \cdot
    \begin{pmatrix}
      1 & \gamma & \gamma^2 \\
      \gamma^2 & 1 & \gamma \\
      \gamma & \gamma^2 & 1 \\
  \end{pmatrix}}_{R} \
  M' \quad \text{ with } \;
  R^{-1} = 
  \frac{1}{1{-}\gamma}
  \begin{pmatrix}
    1 & -\gamma & 0 \\
    0 & 1 & -\gamma \\
    -\gamma & 0 & 1 \\
  \end{pmatrix}
\end{equation}

The edges between the last layer $(x_{n+1},y_{n+1},z_{n+1})$ and $\goal$ are adapted in a similar way.
\begin{figure}
  \begin{subfigure}[t]{.5\linewidth}
    \hfill
  \resizebox{0.9\linewidth}{!}{
    \includegraphics{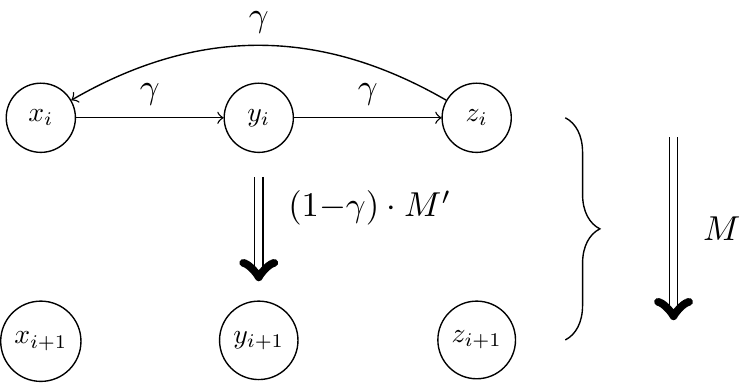}
  }
  \caption{}
  \label{subfig:newmultgen}
  \end{subfigure} \hfill
  \begin{subfigure}[t]{.5\linewidth}
    \hfill
  \resizebox{0.9\linewidth}{!}{
    \includegraphics{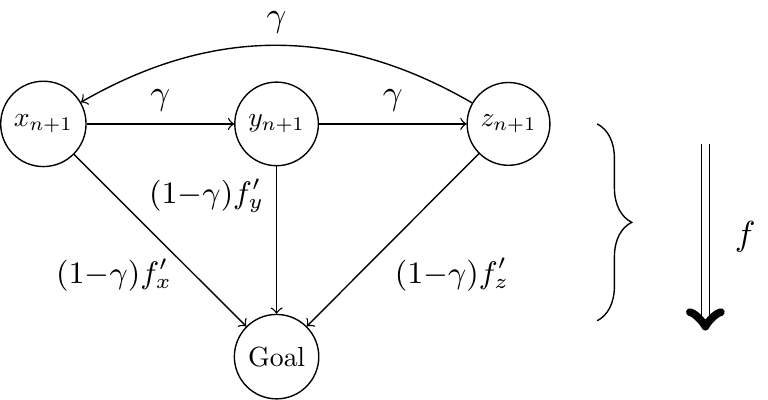}
  }
  \caption{}
  \label{subfig:newmultlastlay}
  \end{subfigure}
  \caption{A $\gamma$-cycle is added to the upper states of the matrix multiplication gadget (see~\Cref{fig:matrmult}) to make sure that removing any state on the cycle leads to a significant drop in probability.
    \Cref{subfig:newmultgen} shows the construction used in all but the last layer, which is handled by the construction in~\Cref{subfig:newmultlastlay}.
    In~\Cref{subfig:newmultgen}, the matrix $M'$ is chosen such that the probability of reaching $(x_{i+1},y_{i+1},z_{i+1})$ is $\theta \cdot M$, where $\theta$ is any initial distribution on states $(x_{i},y_{i},z_{i})$, and similarly in~\Cref{subfig:newmultlastlay}.}
  \label{fig:matrmult2}
\end{figure}
Let us assume that we are given a matrix $(M)_{ij} = a_{ij}$ (this will be one of the input matrices of the 3-MCP) and we want to find $M'$ such that the gadget from \Cref{subfig:newmultgen} realizes the matrix multiplication $M$.
In other words, we want the probability to reach $x_{i+1}$ from $x_i$ to be exactly $a_{11}$, and similarly for the other states.
Solving the equation above for $M'$ gives:
\begin{equation}
\label{eqn:gammamatrix}
M' = R^{-1} \cdot M =
\frac{1}{1{-}\gamma}
{\renewcommand*{\arraycolsep}{5pt}\begin{pmatrix}
  a_{11} - \gamma a_{21} & a_{12} - \gamma a_{22} & a_{13} - \gamma a_{23} \\
  a_{21} - \gamma a_{31} & a_{22} - \gamma a_{32} & a_{23} - \gamma a_{33} \\
  a_{31} - \gamma a_{11} & a_{32} - \gamma a_{12} & a_{33} - \gamma a_{13} \\
\end{pmatrix}}
\end{equation}
We choose $\gamma$ to satisfy:
\begin{equation}
  \label{eqn:gammachoice}
  12 \epsilon < 1{-}\gamma < 1/3 \cdot \big(3(1/12 - \epsilon)\big)^{n+2}
\end{equation}
which is possible due to the assumption of~\Cref{eqn:epsbound}.
This makes sure that all entries of $M'$ are nonnegative.
The argument uses that the entries $a_{ij}$ are assumed to be in the range $[1/12 - \epsilon, 1/12]$:
\[ \frac{1}{1-\gamma}(a - \gamma a') \geq \frac{1}{1-\gamma}(1/12 - \epsilon - \gamma/12) = 1/12 - \frac{\epsilon}{1-\gamma} > 0 \;\; \text{ for all entries } a,a' \text{ of } M\]
where the last inequality follows from $12 \epsilon < 1 {-} \gamma$.
Furthermore, we have:
\[ \frac{1}{1-\gamma}(a - \gamma a') \leq \frac{1}{1-\gamma}(1/12 - \gamma (1/12 - \epsilon)) = 1/12 + \frac{\gamma \epsilon}{1-\gamma} < 1/6 \;\; \text{ for all entries } a,a' \text{ of } M\]
where the last inequality follows from $\gamma < 1$ and $12 \epsilon < 1 - \gamma$, which is equivalent to $\epsilon / (1 - \gamma) < 1/12$.
The fact that $1/6$ is an upper bound on all entries of $M'$ implies that using the gadgets from~\Cref{fig:matrmult2} in the reduction yields a DTMC (observe that all states in~\Cref{fig:mainstruct} have at most 6 outgoing edges).

We call the Markov chain that is obtained by adding the $\gamma$-cycles and adapting the probabilities as discussed above $\M_2$.
The construction ensures that the good subsystems (defined as for $\M_1$) have the same probability to reach $\goal$ in both DTMCs, and hence~\Cref{lem:goodsubsysoptimal} holds as well for $\M_2$.
The main point of adding the $\gamma$-cycles was to make sure that if one state from $\{x',y',z'\}$ is excluded in a subsystem, then the probability of any state in $x_i,y_i,z_i$ to reach the next layer drops significantly.
Now this value is indeed bounded by $(1 + \gamma + \gamma^2) \cdot (1-\gamma) < 3 \cdot (1-\gamma)$ (as $\gamma < 1$).
In a bad subsystem, both $\gamma$-cycles are interrupted on some layer.
Hence, the probability to reach $\goal$ is less than $3 \cdot (1-\gamma)$. This value in turn is less than $\big(3(1/12 - \epsilon)\big)^{n+2}$ by \Cref{eqn:gammachoice}.
On the other hand,
the fact that all entries of matrices $M_i^j$ (with $1 \leq i \leq n$ and $0 \leq j \leq 1$) and vectors $\iota,f$ have value at least $1/12 - \epsilon$ implies that $\big(3(1/12 - \epsilon)\big)^{n+2}$ is a lower bound on the reachability probability that is achieved by any good subsystem. A detailed discussion of these facts that lead to the following lemma can be found in \Cref{app:goodsubsysoptimal}.

\begin{restatable}{lemma}{goodsubsystemoptimal}
  \label{lem:goodsubsysoptimal}
  Let  $S_1$ and $S_2$ be a  subsystems of $\M_2$ with $3n + 4$ states. If $S_1$ is bad and $S_2$ good, then
  \[\Pr^{\M_2}_{S_1}(\lozenge \goal) \leq \Pr^{\M_2}_{S_2}(\lozenge \goal)\]
\end{restatable}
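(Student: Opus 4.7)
The plan is to sandwich the two probabilities. I will show that $\Pr^{\M_2}_{S_2}(\lozenge \goal) \geq (3(1/12-\epsilon))^{n+2}$ for any good $S_2$, and that $\Pr^{\M_2}_{S_1}(\lozenge \goal) < 3(1-\gamma)$ for any bad $S_1$ of size $3n+4$; the claim then follows from the choice of $\gamma$ in \Cref{eqn:gammachoice}, which guarantees $3(1-\gamma) < (3(1/12-\epsilon))^{n+2}$.

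The good-subsystem lower bound is immediate from the design of the $\gamma$-cycle gadget: as long as all three upper states of a layer are present, the effective layer-to-layer reachability matrix equals the original $M_{\sigma_i}^i$ (this is precisely how $M'$ was derived from $M$). Hence the analogue of \Cref{lem:goodsybsysprop} for $\M_2$ gives $\Pr^{\M_2}_{S_{\boldsymbol{\sigma}}}(\lozenge \goal) = \iota \cdot M_{\sigma_1}^1 \cdots M_{\sigma_n}^n \cdot f$. Because every entry of $\iota$, of each $M_{\sigma_i}^i$, and of $f$ is at least $1/12-\epsilon$, iterated multiplication with entrywise lower bounds, where each three-term row/column contraction contributes a factor of at least $3(1/12-\epsilon)$, yields the stated lower bound.

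For the bad-subsystem upper bound, first dispose of the trivial case $\goal \notin S_1$, where the probability is $0$. Otherwise set $\ell_i = |S_1 \cap \llay{i}|$, $r_i = |S_1 \cap \rlay{i}|$, and $m = |S_1 \cap \{x_{n+1},y_{n+1},z_{n+1}\}|$. A counting argument proceeds by cases. If $m < 3$, the final-layer gadget of \Cref{subfig:newmultlastlay} is itself broken and the throughput argument below applies to it directly. If $m=3$, then $\sum_i(\ell_i+r_i)=3n$; and if every layer also had at least one intact side ($\ell_i=3$ or $r_i=3$), the size constraint would force configuration $(3,0)$ or $(0,3)$ in each layer, making $S_1$ good, a contradiction. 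Hence there is some layer $i^*$ (possibly the final one) in which both $\gamma$-cycles are broken. In such a layer the geometric amplification $1/(1-\gamma^3)$ encoded in matrix $R$ collapses, and a direct calculation on the broken gadget bounds the total probability transported across the layer by $(1+\gamma+\gamma^2)(1-\gamma) = 1-\gamma^3 < 3(1-\gamma)$, which in turn upper-bounds $\Pr^{\M_2}_{S_1}(\lozenge \goal)$ because the flow to $\goal$ must cross this bottleneck.

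The main technical obstacle is making the throughput bound at the broken layer rigorous. With only two of the three cycle states present on each side, the subsystem still admits short partial chains along the cycle (e.g.~$\lst{i^*}{x} \to \lst{i^*}{y}$ when $\lst{i^*}{z}$ is absent), and one must verify that no such chain recovers significant probability mass through iterated back-edges. The key observation is that the geometric series $\sum_k \gamma^{3k}$ responsible for the $1/(1-\gamma^3)$ amplification in the complete cycle reduces to a finite sum once the full cycle is broken, so only a bounded number of in-layer traversals contribute, each of which leaks at most $1-\gamma$ worth of probability out of the layer; summing these contributions over the (at most three) relevant entry points yields the bound $1-\gamma^3$.
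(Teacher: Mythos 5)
Your proof is correct and follows essentially the same sandwiching strategy as the paper: lower-bound good subsystems by $(3(1/12-\epsilon))^{n+2}$ via the entrywise lower bound $1/12-\epsilon$, upper-bound bad subsystems by $(1+\gamma+\gamma^2)(1-\gamma) = 1-\gamma^3 < 3(1-\gamma)$ via the broken-cycle bottleneck, and close with the choice of $\gamma$ in~\Cref{eqn:gammachoice}. The paper reaches the existence of a doubly-broken layer by a slightly terser pigeonhole argument (some layer exceeds 3 states iff some layer has fewer than 3, and the all-exactly-3 case is either good or doubly broken), whereas you split on $m<3$ versus $m=3$ and count $\sum_i(\ell_i+r_i)=3n$; both routes are equivalent and correct.
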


Finally, note that the directed path-partition-width and tree-partition-width of $\M_2$ is the same as of $\M_1$, as $\M_2$ includes more edges but still allows the directed path-partition which partitions states along the layers.
Hence we have $\dtpw(\M_2) = \dppw(\M_1) = 6$. 
Together with~\Cref{lem:goodsubsysoptimal}, \Cref{lem:goodsybsysprop} and the fact that the probabilities of good subsystems in $\M_1$ and $\M_2$ coincide, this proves:
\begin{theorem}
  \label{thm:witnesshard}
  The witness problem is NP-hard for Markov chains with $\dppw{} = 6$ (and hence also for Markov chains with $\dtpw{} = 6$).
\end{theorem}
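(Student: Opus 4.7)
The plan is to weave together the pieces already assembled into a polynomial-time reduction from the nonnegative $3$-MCP restricted to entries in $[1/12{-}\epsilon,1/12]$ with $\epsilon$ as in~\Cref{eqn:epsbound}; this restricted problem is NP-hard by the argument culminating in~\Cref{sec:matrassumptions}. Given such an instance, I output the DTMC $\M_2$, the size bound $k := 3n+4$, and the original threshold $\lambda$. The construction uses $O(n)$ states and all rational entries are computable in polynomial time once a $\gamma$ satisfying~\Cref{eqn:gammachoice} is fixed. The width bound $\dppw(\M_2) = 6$ is inherited from~\Cref{lem:constrwidth}, since the added $\gamma$-cycles only introduce edges \emph{inside} the partition blocks $\llay{i}$ and $\rlay{i}$ and therefore respect the same directed path partition used for $\M_1$; the $\dtpw$ bound then follows because every directed path partition is also a directed tree partition.

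For soundness, given $\boldsymbol{\sigma} \in \{0,1\}^n$ with $\iota \cdot M^1_{\sigma_1} \cdots M^n_{\sigma_n} \cdot f \geq \lambda$, the candidate witness is $S_{\boldsymbol{\sigma}}$, which has exactly $3n+4$ states. Because the $\gamma$-cycle gadget of~\Cref{fig:matrmult2} was calibrated via \Cref{eqn:gammacyclesprob,eqn:gammamatrix} to realise the same matrix product as the bare gadget of~\Cref{fig:matrmult}, \Cref{lem:goodsybsysprop} carries over verbatim from $\M_1$ to $\M_2$, yielding $\Pr^{\M_2}_{S_{\boldsymbol{\sigma}}}(\lozenge\goal) \geq \lambda$ as required.

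For completeness, suppose $S$ is a subsystem of $\M_2$ with $|S| \leq 3n+4$ and $\Pr^{\M_2}_S(\lozenge\goal) \geq \lambda$. Adding states to a sub-stochastic subsystem cannot decrease the probability of reaching $\goal$, so I enlarge $S$ to a subsystem $S'$ of exactly $3n+4$ states while preserving the threshold. If $S' = S_{\boldsymbol{\sigma}}$ is good, \Cref{lem:goodsybsysprop} immediately hands back a $3$-MCP witness. If $S'$ is bad, \Cref{lem:goodsubsysoptimal} forces $\Pr^{\M_2}_{S_{\boldsymbol{\tau}}}(\lozenge\goal) \geq \Pr^{\M_2}_{S'}(\lozenge\goal) \geq \lambda$ for \emph{every} $\boldsymbol{\tau} \in \{0,1\}^n$, so any good subsystem witnesses the $3$-MCP yes-instance via~\Cref{lem:goodsybsysprop}. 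The hard part has already been isolated inside~\Cref{lem:goodsubsysoptimal}: the delicate choice of $\gamma$ together with the entry window $[1/12{-}\epsilon,1/12]$ is precisely what forces every bad subsystem of size $3n+4$ to be strictly beaten by every good one, and with this uniform dominance in hand only the monotonicity bookkeeping sketched above remains.
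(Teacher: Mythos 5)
Your argument is correct and follows the paper's intended proof essentially verbatim: you reduce from the nonnegative $3$-MCP with entries restricted to $[1/12-\epsilon,1/12]$, build $\M_2$ with size bound $3n+4$ and threshold $\lambda$, inherit $\dppw(\M_2)=6$ from \Cref{lem:constrwidth}, use \Cref{lem:goodsybsysprop} (carried over to $\M_2$ because the $\gamma$-cycle gadget was calibrated to realise the same matrix product) for soundness, and use \Cref{lem:goodsubsysoptimal} together with the monotonicity of reachability under adding states for completeness. This is exactly the chain of observations the paper assembles in the paragraph preceding the theorem, so there is nothing to flag beyond agreement.
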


\section{A dedicated algorithm for MDPs and a given tree partition}
\label{sec:algorithm}
This section introduces an algorithm (\Cref{alg:treelike}) that computes a minimal witnessing subsystem using a given \emph{directed tree partition} of the system.
The main idea is to proceed bottom-up along the induced tree order and enumerate partial subsystems for each block and to compute the values achieved by the ``interface'' states for each partial subsystem.
Interface states are those that have incoming edges from the predecessor block in the tree partition.
A domination relation between partial subsystems is used to prune away all partial subsystems that do not need to be considered further up, as a ``better'' one exists.

Let $\M = (S,\Act,P,\iota)$ be a fixed MDP for the rest of this section, and $\P = \{B_1, \ldots, B_n\}$ be a directed tree partition of $\M$.
We will assume that for all $B \in \P$ we have $B \subseteq \goal$ or $B \cap \goal = \varnothing$.
This is not a real restriction as states in $\goal$ are trap states, which means that they can always be moved to a separate block.
Furthermore, we assume that all initial states are in the root block of the tree partition.
We denote by $\sucs(B_i) \subseteq \P$ the children of $B_i$ in the associated tree order, and by $\pre(B_i) \in \P$ the unique parent of $B_i$.
For each block $B_i$ we denote by $\interface(B_i)$ the states in $B_i$ which have some incoming edge from a state in $\pre(B_i)$ or are initial.
Using this notion we define $\outint(B_i) = \bigcup_{B \in \sucs(B_i)} \interface(B)$.
We denote by $\closure(B_i)$ the union of blocks $B \in \P$ such that $B$ is reachable from $B_i$ in the tree order.

For a given partial function $f$ from $S$ to $[0,1]$ and subset $S' \subseteq S$ we consider the MDP $\M_{S'}^f$ constructed as follows.
In the subsystem $\M_{S'}$ induced by $S'$ remove all outgoing edges from states $s \in \dom(f)$ (the domain of $f$) and replace them by an action with a single transition to (some state in) $\goal$ with probability $f(s)$ (resulting again in a sub-stochastic MDP).
We define for each state $q \in S'$:
\[\minval_{S'}^f(q) = \prb^{\min}_{\M_{S'}^f(q)}(\lozenge \goal) \qquad \text{and} \qquad \maxval_{S'}^f(q) = \prb^{\max}_{\M_{S'}^f(q)}(\lozenge \goal)\]
We write $\minval_{S'}$ or $\maxval_{S'}$ for the respective values in the unchanged MDP $\M_{S'}$.
The following lemma shows that to compute the values of states in $B_i$ under any subsystem, one can first compute the values of states in $\outint(B_i)$, then replace the edges of those states by an edge to $\goal$ carrying this value, and finally compute the values for states in $B_i$ in the adapted system.
\begin{restatable}{lemma}{CompositionLemma}
  \label{lem:composition}
  Let  $S' \subseteq S$, $S_2 = S' \cap (\closure(B_i) \setminus B_i)$ (with $1 \leq i \leq n$) and $S_1 = S' \setminus S_2$.
  Let $v \in \{\maxval,\minval\}$ and define $f$ over domain $\outint(B_i)$ by: $f(q) = v_{S_2}(q)$ for all $q \in \outint(B_i)$.
  Finally, let $S_1' = S_1 \cup \outint(B_i)$.

  Then, for all $q \in S_1'$:
  \[v_{S'}(q) =  v_{S_1'}^{f}(q)\]
\end{restatable}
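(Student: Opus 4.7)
The plan rests on the following structural consequence of the directed tree partition: since the quotient is a directed tree, every inter-block edge points from a parent to a child. Hence once a path leaves $B_i$ by stepping into a state in $\outint(B_i)$, it stays within $\closure(B_i)\setminus B_i$ forever; it can never return to $B_i$ or to any ancestor block.

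The first step is to show that $v_{S'}(q) = f(q)$ for every $q \in \outint(B_i)$. If $q \notin S'$ both sides equal $0$, so assume $q \in S'$. Then the maximal paths of $\M_{S'}(q)$, together with their probability measure and their admissible schedulers, coincide with those of $\M_{S_2}(q)$, because both are confined to $S' \cap (\closure(B_i)\setminus B_i) = S_2$. Therefore $v_{S'}(q) = v_{S_2}(q) = f(q)$.

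The second step handles $q \in S_1' \setminus \outint(B_i)$. For any scheduler $\S$ of $\M_{S'}$ I would partition the maximal paths starting at $q$ according to whether $\outint(B_i)$ is ever hit and, if so, at which state. Writing $p_\S(q{\to}q')$ for the probability of first hitting $\outint(B_i)$ at $q' \in \outint(B_i) \cap S'$, the strong Markov decomposition gives
\[\Pr^{\M_{S'}(q)}_{\S}(\lozenge \goal) = \Pr^{\M_{S'}(q)}_{\S}\bigl(\lozenge \goal \wedge \pi \text{ avoids } \outint(B_i)\bigr) + \sum_{q'} p_\S(q{\to}q') \cdot \Pr^{\M_{S'}(q')}_{\S_{q'}}(\lozenge \goal),\]
where $\S_{q'}$ is the residual scheduler after the first hit of $q'$. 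Because the post-$q'$ trajectory lives entirely in $S_2$ and never re-enters $S_1$, each $\S_{q'}$ can be optimised independently of the $S_1$-decisions of $\S$, and by step~1 its optimal value is $f(q')$ (for both $\maxval$ and $\minval$). The first summand is identical in $\M_{S'}$ and in $\M_{S_1'}^{f}$, while the second matches exactly the contribution of the inserted edges to $\goal$ with probability $f(q')$. This yields $v_{S'}(q) = v_{S_1'}^{f}(q)$.

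The main obstacle I anticipate is rigorously justifying that the sup (or inf) over global schedulers factorises into independent optimisations above and below $\outint(B_i)$. The subtlety is that in principle a global scheduler may make its decisions after the hit of $q'$ depend on the $S_1$-history. I would remove this subtlety by invoking the standard fact that reachability objectives admit optimal memoryless deterministic schedulers, so one may restrict attention to schedulers whose $S_2$-behaviour ignores the $S_1$-history. An alternative, arguably cleaner route is to compare the Bellman fixed-point equations characterising $v_{S'}$ on $S_1'$ with those characterising $v_{S_1'}^{f}$: the two systems coincide once the boundary conditions $v_{S'}(q') = f(q')$ from step~1 are in place, and uniqueness of the appropriate fixed point closes the argument; this can be organised as an induction on the height of $B_i$ in the tree partition.
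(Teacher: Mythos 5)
Your approach is essentially the same as the paper's: fix an optimal memoryless scheduler, exploit the directed tree structure to show that its behaviour above and below $\outint(B_i)$ decouples, and then show that the induced values on $S_1'$ agree. Where you phrase this as a first-hit decomposition at $\outint(B_i)$, the paper phrases it as equality of the linear equation systems of the two DTMCs induced by the (appropriately translated) memoryless scheduler; these are two formalizations of the same idea. The subtlety you flag about factorizing the supremum/infimum is precisely what the paper resolves by carrying out two directions explicitly: translating an optimal memoryless scheduler from $\M_{S'}$ to $\M_{S_1'}^f$ gives $v_{S'}(q) \leq v_{S_1'}^f(q)$, and conversely starting from an optimal memoryless scheduler on $\M_{S_1'}^f$ and filling in an optimal memoryless scheduler on $S_2$ gives $v_{S'}(q) \geq v_{S_1'}^f(q)$. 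Your sketch conflates these two directions into one computation; to close the gap you should state them separately as the paper does. The induction on the height of $B_i$ that you mention as an alternative is not needed once this two-sided argument is made explicit.
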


\begin{figure}[tbp]
  \centering
  \includegraphics{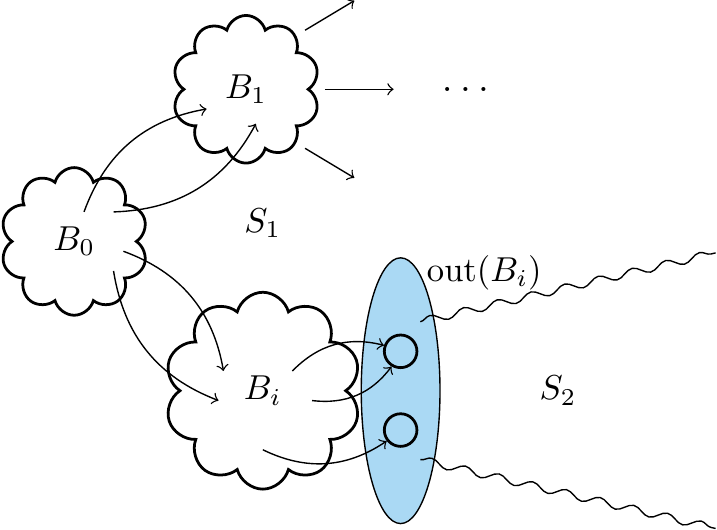}
  \caption{Visualizuation of the situation in~\Cref{lem:composition}. Some block $B_i$ is fixed, and $\outint(B_i)$ are the states outside of $B_i$ which are reachable from some state in $B_i$. The state set is partitioned into the sets $S_1$, which includes $B_i$ and all states that are not reachable from $B_i$, and $S_2$, which includes all states reachable from $B_i$ but excluding $B_i$. Additionally, in~\Cref{lem:composition} all of these sets are intersected with a set $S'$ in~\Cref{lem:composition}, which represents some subset of the entire system (this is not depicted here).}
\end{figure}

\subsection{The domination relation}
\label{sec:dominationrel}
In the following, the vector $f$ can be thought of as an assumption on the value that is achieved in states in $\dom(f)$.
Different partial subsystems of the system in a subtree will correspond to different vectors $f$, where $\dom(f)$ are the interface states.
For two partial functions $f_1,f_2 : S \to [0,1]$ such that $\dom(f_1) = \dom(f_2)$ we define $(f_1 + f_2)(q) = f_1(q) + f_2(q)$, for $q \in \dom(f_1)$, and write $f_1 \leq f_2$ to mean $f_1(q) \leq f_2(q)$ for all $q \in \dom (f_1)$. By $\mathbf{1}$ we denote the constant $1$-function with suitable domain.
\begin{restatable}{lemma}{propertiesValuefuncs}
  \label{lem:propertiesValuefuncs}
  Let $T \subseteq S$, $v \in \{\maxval,\minval\}$ and $f_1,f_2 : S \to [0,1]$ be partial functions such that $\dom(f_1) = \dom(f_2)$ and let $I \subseteq T$ be a set of states that cannot reach $\goal$ without seeing $\dom(f_1)$ in $\M$.
  Then, for all $q \in I$:
  \begin{enumerate}
  \item $f_1 \geq f_2  \implies v_{T}^{f_1}(q) \geq v_{T}^{f_2}(q)$,
  \item for all $a \in \mathbb{Q}_{\geq 0}$ such that $a \cdot f \leq \mathbf{1}$: $ \quad a \cdot v_{T}^{f}(q) = v_{T}^{a \cdot f}(q)$,
  \item if $f_1 + f_2 \leq \mathbf{1}$, then: $\maxval_{T}^{f_1 {+} f_2}(q) \leq \maxval_{T}^{f_1}(q) + \maxval_{T}^{f_2}(q)$.
    \end{enumerate}
\end{restatable}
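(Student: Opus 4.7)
The plan is to reduce all three claims to a single linear decomposition of the value function. The key structural observation is that, because every state $q \in I$ can reach $\goal$ in $\M$ only by first visiting some state in $\dom(f_1) = \dom(f_2)$, the same is true in the subsystem $\M_T^{f}$: every path from $q$ to $\goal$ in $\M_T^{f}$ must begin with a finite path in $\M_T$ from $q$ to some first-visited state $q' \in \dom(f)$ (no intermediate visit to $\dom(f)$), and then take the replaced edge to $\goal$ with probability $f(q')$. Consequently, for every scheduler $\sigma$ of $\M_T^f$, denoting by $p_\sigma(q, q')$ the probability under $\sigma$ of first hitting $q' \in \dom(f)$ starting from $q$ (a quantity that does not depend on $f$ itself, since the dynamics up to reaching $\dom(f)$ is the unaltered $\M_T$), one has
\[
v_{T,\sigma}^{f}(q) \;=\; \sum_{q' \in \dom(f)} p_\sigma(q,q')\cdot f(q').
\]

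With this identity in hand, part (1) follows at once: for any fixed $\sigma$, $f_1 \geq f_2$ pointwise implies $v_{T,\sigma}^{f_1}(q) \geq v_{T,\sigma}^{f_2}(q)$, and then both $\sup_\sigma$ and $\inf_\sigma$ preserve the inequality (for the infimum case, argue by $v_{T,\sigma}^{f_1}(q) \geq v_{T,\sigma}^{f_2}(q) \geq \inf_{\sigma'} v_{T,\sigma'}^{f_2}(q)$, then take $\inf_\sigma$ on the left). Part (2) is immediate from the above linear form: multiplying $f$ by $a \in \mathbb{Q}_{\geq 0}$ multiplies each $v_{T,\sigma}^f(q)$ by $a$, and $\sup$/$\inf$ commute with multiplication by a nonnegative constant. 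The hypothesis $a \cdot f \leq \mathbf{1}$ just ensures that $a \cdot f$ is a legal input vector (each value lies in $[0,1]$) so the modified MDP $\M_T^{a f}$ is well defined.

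For part (3), the linear form gives $v_{T,\sigma}^{f_1+f_2}(q) = v_{T,\sigma}^{f_1}(q) + v_{T,\sigma}^{f_2}(q)$ for every scheduler $\sigma$ (with $f_1 + f_2 \leq \mathbf{1}$ ensuring $\M_T^{f_1+f_2}$ is well defined). Taking the supremum over $\sigma$ and applying the standard inequality $\sup_\sigma (a_\sigma + b_\sigma) \leq \sup_\sigma a_\sigma + \sup_\sigma b_\sigma$ yields the desired subadditivity. Note that equality generally fails here because the scheduler that is optimal for $f_1 + f_2$ need not be optimal for $f_1$ or $f_2$ individually; this is also why the analogous inequality for $\minval$ is not claimed.

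The only step that requires genuine care is justifying the linear decomposition itself, i.e.\ arguing that a path from $q \in I$ to $\goal$ in $\M_T^f$ factors uniquely into a first-hit prefix in $\M_T$ and a single final replaced transition, and that the first-hit prefix probability $p_\sigma(q,q')$ is independent of $f$. This is where the hypothesis on $I$ is used; once it is in place, parts (1)--(3) are straightforward consequences of monotonicity, homogeneity, and additivity of the linear form in $f$.
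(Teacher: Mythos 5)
Your proof is correct and follows essentially the same strategy as the paper's: fix a scheduler, observe that the induced value is a linear function of $f$, and then pull $\sup/\inf$ through (using that $\sup$ and $\inf$ preserve pointwise inequalities, commute with nonnegative scalars, and that $\sup$ is subadditive). The only difference is cosmetic: the paper establishes the per-scheduler linearity via the linear equation system $(\Ib-\Ab)\xb = f$ (Lemma on DTMC reachability from \cite{BaierK2008}), whereas you obtain the same linearity via the first-hit decomposition $v_{T,\sigma}^{f}(q) = \sum_{q'\in\dom(f)} p_\sigma(q,q')\, f(q')$, which is the probabilistic reading of the same fact.
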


Let $v \in \{\maxval,\minval\}$ be fixed for the remainder of this section.
For a given set $I \subseteq S$, we denote the states \emph{reachable} from $I$ in the underlying graph of $\M$ by $\reach(I)$.
A \emph{partial subsystem} for $I$ is a set $T \subseteq \reach(I)$ and the \emph{$I$-point} corresponding to $T$ is defined to be the vector $\pointval_I(T) \in \mathbb{Q}^{I}$ with $\pointval_I(T)(q) = v_{T}(q)$ if $q \in T \cap I$ (where $v_{T}$ is the value vector under subsystem $T$) and $\pointval_I(T)(q) = 0$ if $q \in I \setminus T$.
Let $\pi$ be the function which collects all possible projections of a vector $\theta \in \mathbb{Q}^{I}$ onto a subset of the axes as follows:
\[\pi(\theta) = \{ \pi(\theta,D) \mid D \subseteq I\} \qquad\qquad \text{and} \qquad \qquad
\pi(\theta,D)(x) = 
\begin{cases}
  \theta(x) & x \in D \\
  0 & \text{otherwise}
\end{cases}
\]
If a partial subsystem $T$ for $I$ includes no $\goal$ state, then $\pointval_I(T)$ will be zero in all entries, and hence we will not be interested in such $T$.
However, $T$ does not have to include all $\goal$ states reachable from $I$.

The core of~\Cref{alg:treelike} is the domination relation (see~\Cref{fig:dominationexample}) which is used to discard partial subsystems. Without it the algorithm would amount to an explicit enumeration of all subsystems.

\begin{definition}
  \label{def:domination}
  Let $I \subseteq S$ and $\{T\} \cup \mathcal{S}$ be a set of partial subsystems for $I$.
  We say that $\mathcal{S}$ dominates $T$ if there exists $\mathcal{S}' \subseteq \mathcal{S}$ such that
  \begin{enumerate}
  \item For all $T' \in \mathcal{S}'$ we have $|T'| \leq |T|$, and
  \item $\pointval_I(T)$ is a convex combination of $\ \bigcup\{\pi(\pointval_I(T')) \mid T' \in \mathcal{S}'\}$.
  \end{enumerate}
  We say that $\mathcal{S}$ \emph{strongly} dominates $T$ if there exists a \emph{singleton} set $\mathcal{S}' \subseteq \mathcal{S}$ such that $\mathcal{S}'$ dominates $T$.
\end{definition}
\begin{figure}[tbp]
  \centering
  \includegraphics[width=0.27\textwidth]{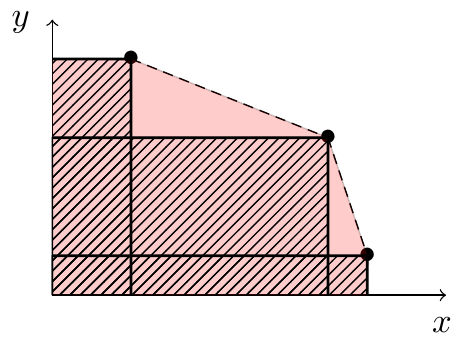}
  \caption{The black points represent three partial subsystems for $I =\{x,y\}$ via their $I$-points. The red area indicates $I$-points of partial subsystems which are dominated by these three points, while the dashed area indicates the partial subsystems \emph{strongly} dominated by one of them. The size of the partial subsystems is not considered here, but is important in general (see~\Cref{def:domination}).}
  \label{fig:dominationexample}
\end{figure}
\newpage
\begin{restatable}{lemma}{dominationRel}
\label{lem:dominationrel}
Let $S' \subseteq S$, $S_2 = S' \cap (\closure(B_i) \setminus B_i)$ (for some $1 \leq i \leq n$) and $S_1 = S' \setminus S_2$.
Furthermore, let $I = \outint(B_i)$ and $\mathcal{S}$ be a set of partial subsystems for $I$.
\begin{enumerate}
\item If $\M_{S'}$ is a witnessing subsystem for $\prb^{\min}(\lozenge \goal) \geq \lambda$ and $\cal S$ \emph{strongly dominates} $S_2$, then there is a $T \in \mathcal{S}$ such that $\M_{S_1 \cup T}$ is a witnessing subsystem for $\prb^{\min}(\lozenge \goal) \geq \lambda$ and $|T| \leq |S_2|$.
\item If $\M_{S'}$ is a witnessing subsystem for $\prb^{\max}(\lozenge \goal) \geq \lambda$ and $\cal S$ \emph{dominates} $S_2$, then there is a $T \in \mathcal{S}$ such that $\M_{S_1 \cup T}$ is a witnessing subsystem for $\prb^{\max}(\lozenge \goal) \geq \lambda$ and $|T| \leq |S_2|$.
\end{enumerate}
\end{restatable}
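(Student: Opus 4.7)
The plan is to reduce everything to values at the boundary $I := \outint(B_i)$ via \Cref{lem:composition}, and then to exploit the (strong) domination hypothesis on the associated $I$-points. Since a directed tree partition forces every inter-block edge to descend from parent to child, every partial subsystem $T \in \mathcal{S}$ for $I$ lies inside $\closure(B_i) \setminus B_i$; this fact will be needed to apply \Cref{lem:composition} to $S_1 \cup T$ with $T$ in the role of ``$S_2$''. Let $v \in \{\minval, \maxval\}$ match the case and set $f := \pointval_I(S_2)$ and $g_T := \pointval_I(T)$. Applying \Cref{lem:composition} twice then yields, for every $q \in S_1 \cup I$,
\[
v_{S'}(q) \;=\; v_{S_1 \cup I}^{f}(q) \qquad \text{and} \qquad v_{S_1 \cup T}(q) \;=\; v_{S_1 \cup I}^{g_T}(q),
\]
reducing the comparison of $\M_{S'}$ and $\M_{S_1 \cup T}$ to comparing the two ``cut'' value functions on $S_1 \cup I$.

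For part~1, strong domination supplies a single $T^\star \in \mathcal{S}$ with $|T^\star| \leq |S_2|$ such that $\pointval_I(S_2)$ is a convex combination of projections of $\pointval_I(T^\star)$. Reading this coordinate by coordinate shows $\pointval_I(S_2)(d) = \beta_d \cdot \pointval_I(T^\star)(d)$ with $\beta_d \in [0,1]$, so $f \leq g_{T^\star}$ pointwise. Monotonicity of $v_{S_1 \cup I}^{h}(q)$ in $h$ (\Cref{lem:propertiesValuefuncs}(1)) for $v = \minval$ then gives $v_{S'}(q) \leq v_{S_1 \cup T^\star}(q)$ at every $q$ with $\iota(q) > 0$, and integrating against $\iota$ yields $\prb^{\min}_{\M_{S_1 \cup T^\star}}(\lozenge \goal) \geq \prb^{\min}_{\M_{S'}}(\lozenge \goal) \geq \lambda$.

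For part~2, expand the dominating combination as $\pointval_I(S_2) = \sum_j \alpha_j\, \pi(\pointval_I(T_j), D_j)$ with $T_j \in \mathcal{S}$, $|T_j| \leq |S_2|$, $\alpha_j \geq 0$, $\sum_j \alpha_j = 1$. Since $\pi(\cdot, D)$ only zeroes out entries, $f \leq \sum_j \alpha_j g_{T_j}$ coordinate-wise. Combining monotonicity with positive homogeneity and sub-additivity of $\maxval_{S_1 \cup I}^{h}(q)$ in $h$ (all three parts of \Cref{lem:propertiesValuefuncs}) together with the two composition identities then produces
\[
v_{S'}(q) \;\leq\; \sum_j \alpha_j \, v_{S_1 \cup T_j}(q)
\]
at each initial state $q$. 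Integrating against $\iota$ gives $\prb^{\max}_{\M_{S'}}(\lozenge \goal) \leq \sum_j \alpha_j\, \prb^{\max}_{\M_{S_1 \cup T_j}}(\lozenge \goal)$, so the convex combination on the right forces at least one summand $T_{j^\star}$ to attain the $\lambda$-threshold.

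The main obstacle will be the precondition of \Cref{lem:propertiesValuefuncs}, which requires that the states at which the inequalities are read off cannot reach $\goal$ in $\M$ without first passing through $\dom(\cdot) = I$; this can fail at initial states sitting in a subtree disjoint from $B_i$'s. I plan to work around it by a direct scheduler decomposition of $v_{S_1 \cup I}^{f}(q)$: under any scheduler $\sigma$, the reachability probability splits as $p_\sigma(q) + \sum_{d \in I} h_\sigma(q,d)\, f(d)$, where $p_\sigma(q)$ (the probability of reaching $\goal$ while remaining in $S_1$) and $h_\sigma(q,d)$ (the first-hit distribution on $I$) depend only on the restriction of $\sigma$ to $S_1$ and are therefore independent of $f$. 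Because this expression is affine in $f$, the required monotonicity, linearity under non-negative scalar multiplication, and the ``$\max$ of a convex combination is at most the convex combination of $\max$es'' all hold at every $q \in S_1$ without the auxiliary hypothesis, and the same inequality immediately isolates the winning summand $T_{j^\star}$ in part~2.
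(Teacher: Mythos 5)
Your proposal follows the same overall architecture as the paper's proof: reduce, via two applications of \Cref{lem:composition}, to a comparison of the ``cut'' value functions $v_{S_1\cup I}^{f}$ and $v_{S_1\cup I}^{g_T}$ at the initial states, then use the (strong) domination hypothesis to compare $f$ with the $g_T$'s. The paper's version of that last step invokes all three parts of \Cref{lem:propertiesValuefuncs}: monotonicity, positive homogeneity $a\cdot\maxval_T^{f} = \maxval_T^{a\cdot f}$, and subadditivity $\maxval_T^{f_1+f_2}\le \maxval_T^{f_1}+\maxval_T^{f_2}$. You correctly spot that this is delicate: \Cref{lem:propertiesValuefuncs} is stated only for states $q$ that cannot reach $\goal$ in $\M$ without first hitting $\dom(f)=I=\outint(B_i)$, and the paper applies it at initial states in $S_1'$ for which that hypothesis need not hold whenever $\goal$ is also reachable outside the subtree below $B_i$. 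Homogeneity and subadditivity genuinely rely on that hypothesis (if a state has a direct route to $\goal$, then $v_T^{a f}(q)$ does not scale to $0$ as $a\to 0$), so the chain $\maxval^{\sum_j\lambda_j\gamma_j'}\le\sum_j\maxval^{\lambda_j\gamma_j'}=\sum_j\lambda_j\maxval^{\gamma_j'}$ is not licensed by the cited lemma without an extra argument. The paper does not address this.

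Your fix --- the affine decomposition $\Pr_\sigma^{\M_{S_1\cup I}^h(q)}(\lozenge\goal) = p_\sigma(q) + \sum_{d\in I} h_\sigma(q,d)\,h(d)$ with $p_\sigma$, $h_\sigma$ independent of $h$ --- is sound and actually gives the cleanest route. For part~2, after normalizing the convex combination to $\sum_j\alpha_j = 1$ (always possible since the zero vector lies in every $\pi(\theta)$), one gets directly
\[
\maxval_{S_1\cup I}^{g}(q) \le \max_\sigma\Bigl[\sum_j\alpha_j\bigl(p_\sigma(q)+h_\sigma(q)\cdot g_{T_j}\bigr)\Bigr] \le \sum_j\alpha_j\,\maxval_{S_1\cup I}^{g_{T_j}}(q),
\]
which is exactly the needed inequality; no separate homogeneity or subadditivity step is required. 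Integrating against $\iota$ and observing that a convex combination at level $\ge\lambda$ must have a summand $\ge\lambda$ finishes part~2. For part~1, the decomposition also justifies monotonicity at every state without the auxiliary hypothesis, which is all that case needs.

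One inaccuracy: you write that the affine decomposition yields ``linearity under non-negative scalar multiplication.'' It does not --- $\max_\sigma[p_\sigma + a\,h_\sigma\cdot f]\neq a\max_\sigma[p_\sigma + h_\sigma\cdot f]$ whenever some $p_\sigma(q)>0$; one only gets $\ge$. Fortunately your actual argument never uses this; it only uses monotonicity in $h$ and the inequality ``max of a convex combination $\le$ convex combination of maxes,'' both of which do hold unconditionally given the affine form. You should drop the homogeneity claim. Everything else --- including the observation that any partial subsystem for $I=\outint(B_i)$ is contained in $\closure(B_i)\setminus B_i$ because the quotient is a directed tree, which is needed to re-apply \Cref{lem:composition} to $S_1\cup T$ --- is correct.
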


\Cref{alg:convhull} details how the domination relation can be computed using an incremental convex-hull algorithm.
The ConvexHull object that is used in line 3 allows to add points incrementally, and stores the vertices of the convex hull of points added so far in the field \emph{vertices}.
The convex hull of $a$ points in $d$ dimensions can be computed in $O(a \cdot \log a + a^{\lfloor d/2 \rfloor})$~\cite{Chazelle1993}.
In our case $d$ corresponds to the number of interface states $|I|$, and as a number of dedicated and fast algorithms exist to compute the convex hull in low dimensions\cite{Chan1996,BarberDH1996,Graham1972} tree partitions with few interface states in each block are desirable.

\SetKw{KwForIn}{in}
\SetKw{KwNot}{not}
\SetKw{KwSuchThat}{such that}
\SetKw{KwFor}{for}

\begin{algorithm}[tbp]
  \footnotesize
  \SetAlgoLined
  \KwIn{Set of partial subsystems $\mathcal{S}$ for $I$, with $I \subseteq S$.}
  \tcc{Group partial subsystems by their size.}
  $\mathcal{S}[k] := \{S' \in \mathcal{S} \mid |S'| = k\}$ \\
  $m := \max \{|S'| \mid S' \in \mathcal{S}\}$ \\
  \tcc{Initialise an empty ConvexHull object}
  $\mathcal{H} := $ ConvexHull() \\
  \For{$k = 1$ \KwTo $m$ \label{alg:convhull:forloop}}{
    \tcc{Compute projections of value vectors in $\mathcal{S}[k]$.}
    $\Pi := \bigcup\{\pi(\pointval_I(S')) \mid S' \in \mathcal{S}[k]\}$\\
    \tcc{Add $\Pi$ to the incremental ConvexHull object.}
    $\mathcal{H}$.addPoints($\Pi$)\\
    \tcc{Remember only subsystems in $\mathcal{S}[k]$ that are vertices of $\mathcal{H}$.}
    $R := R \cup \{S' \in \mathcal{S}[k] \mid \pointval_I(S') \in \mathcal{H}$.vertices $\}$ \label{alg:convhull:updateR}\\
  }
  \KwRet{R}
  \caption{removeDominated}
  \label{alg:convhull}
\end{algorithm}

\begin{restatable}{lemma}{removeDominated}
  \label{lem:removedom}
  Let $\mathcal{S}$ be a set of partial subsystems for $I \subseteq S$ and $R =$ removeDominated$({\cal S})$.
  Then,
  \begin{itemize}
  \item for any $T \in \mathcal{S} \setminus R$ it holds that $R$ dominates $T$.
  \item no $T \in R$ is dominated by $R \setminus \{T\}$.
  \end{itemize}
\end{restatable}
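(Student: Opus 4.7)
My plan is to prove, by induction on the loop index $k$ in line~\ref{alg:convhull:forloop}, the following invariant: at the end of iteration $k$, every vertex $v$ of $\mathcal{H}$ can be written as $\pi(\pointval_I(S'), D)$ for some $S' \in R$ with $|S'| \leq k$ and some $D \subseteq I$. The standard fact underlying the induction is that once a point in $\mathcal{H}$ fails to be a vertex, subsequent additions can never restore its vertex status, so every newly-created vertex at iteration $k$ must lie in the freshly added set $\Pi_k := \bigcup\{\pi(\pointval_I(S')) \mid S' \in \mathcal{S}[k]\}$.

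Granted the invariant, the first bullet of the lemma is immediate: for $T \in \mathcal{S} \setminus R$ with $|T| = k$, the vector $\pointval_I(T)$ was added to $\mathcal{H}$ at iteration $k$ but is not in the vertex set afterwards (otherwise line~\ref{alg:convhull:updateR} would have placed $T$ into $R$). So $\pointval_I(T)$ is a convex combination of vertices of $\mathcal{H}$, and by the invariant those vertices are projections of elements of $R$ of size at most $k$. This yields the $\mathcal{S}' \subseteq R$ required by \Cref{def:domination}.

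For the second bullet, let $T \in R$ with $|T| = k$, so $\pointval_I(T)$ is a vertex of $\mathcal{H}$ after iteration $k$. Assuming toward contradiction that $R \setminus \{T\}$ dominates $T$, \Cref{def:domination} provides $\pointval_I(T) = \sum_i \lambda_i \pi(\pointval_I(T_i), D_i)$ with $T_i \in R \setminus \{T\}$ and $|T_i| \leq k$. Each such projection was added to $\mathcal{H}$ by iteration $k$, and since $\pointval_I(T)$ is a vertex, every summand with $\lambda_i > 0$ must itself equal $\pointval_I(T)$; so some $T_i \in R \setminus \{T\}$ would have the same $I$-point as $T$, a degenerate case ruled out by identifying equivalent subsystems when maintaining $R$.

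The main technical hurdle is the inductive step for the invariant in the subcase where a new vertex $v = \pi(\pointval_I(S'), D)$ at iteration $k$ is a proper projection (i.e., $D \subsetneq I$) and $\pointval_I(S')$ itself is not a vertex. Decomposing $\pointval_I(S') = \sum_r \mu_r w_r$ over vertices $w_r$ and applying $\pi(\cdot, D)$ yields $v = \sum_r \mu_r \pi(w_r, D)$; every $\pi(w_r, D)$ belongs to $\mathcal{H}$ (the algorithm inserts all projections of each added vector), so the vertex property of $v$ collapses the sum to a single representative $v = \pi(w_{r_0}, D)$ for some vertex $w_{r_0}$. Iterating this along a chain $v = v_0, v_1, \ldots$ with $v_j = \pi(v_{j+1}, D_j)$ and $v_{j+1} \neq v_j$, the quantity $\sum_x v_j(x)$ strictly increases at each step, so the chain terminates at a vertex $v_N$ whose originating full projection $\pointval_I(S_N)$ is itself a vertex; then $S_N \in R$ by line~\ref{alg:convhull:updateR}, and composing the projections exhibits $v$ as a projection of $S_N$.
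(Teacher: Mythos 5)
Your proof follows the same overall strategy as the paper's (vertices of incrementally built convex hulls, monotonicity of the vertex property under adding points, iteration by subsystem size), but you are notably more careful on one point: the paper's argument for the first bullet essentially asserts that every vertex of $\mathcal{H}$ after iteration $k$ is of the form $\pi(\pointval_I(S'),D)$ with $\pointval_I(S')$ itself a vertex (so $S'\in R$), without justifying it. This is genuinely nontrivial, because a proper projection $\pi(\pointval_I(S'),D)$ with $D\subsetneq I$ can be extremal even when $\pointval_I(S')$ is not (all $w_r$ in the decomposition may coincide with $v$ on $D$ and disagree off $D$). Your chain argument with the strictly increasing coordinate sum closes exactly this gap; to make it airtight you should justify that one can always pick $w_{r_0}\neq v_j$: if every $w_r$ with $\mu_r>0$ equalled $v_j$, then $\pointval_I(S_j)=v_j$ would itself be a vertex, contradicting the assumption that triggered the chain step.

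For the second bullet, your argument (like the paper's) does not quite close the degenerate case, and your diagnosis of it is slightly off. From the vertex property you can only conclude that $\pointval_I(T)=\pi(\pointval_I(T_i),D_i)$ for some $T_i\in R\setminus\{T\}$ with $|T_i|\le|T|$; this does \emph{not} force $\pointval_I(T_i)=\pointval_I(T)$ — e.g., $\pointval_I(T)=(1,0)$, $\pointval_I(T_i)=(1,0.5)$, $D_i=\{a\}$ — so ``identifying equivalent subsystems'' does not rule it out. In that situation both $\pointval_I(T)$ and $\pointval_I(T_i)$ can simultaneously be vertices of $\mathcal H$, so removeDominated would keep both even though $\{T_i\}$ dominates $T$ by \Cref{def:domination}. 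The paper's own proof has the same unaddressed case (it passes silently from ``is a convex combination'' to ``is not a vertex'', which requires the combination to be nontrivial), so this is a shared imprecision rather than a flaw unique to your proposal; it affects only the minimality claim of the second bullet, not the soundness property expressed by the first.
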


In order to avoid enumerating all subsets of a block we first apply a filter based on a Boolean condition.
It requires that any state in the subset either is an interface state or has a predecessor in the subset.
Likewise it should either have a successor in the subset, or an outgoing edge to another block.
Consider the following Boolean formula with variables in $S$:
\begin{align*}
  \phi(B_i) = \bigwedge_{s \not\in \interface(B_i)} \left( s \rightarrow \bigvee_{s' \in \Pre(s)} s' \right) \land \bigwedge_{s \not\in \out(t_i)} \left( s \rightarrow \bigvee_{s' \in \Post(s)} s' \right)
\end{align*}
where $\out(B_i) = \{s \in B_i \mid \Post(s) \setminus B_i \neq \varnothing\}$.
Now any partial subsystem $S'$ such that $S' \cap B_i$ is not a model of $\phi(B_i)$ is dominated by another subsystem, which one gets by removing unnecessary states.

\subsection{An algorithm based on the domination relation}

\begin{algorithm}[tbp]
  \footnotesize
  \SetAlgoLined
  \KwIn{MDP $\M$, directed tree partition $\P$, rational $\lambda$}
  \KwOut{Minimal witnessing subsystem for $\prb^{\max}_{\M}(\lozenge \goal) \geq \lambda$.}
  \tcc{Bottom-up traversal of the tree partition.}
  \For{B \KwForIn reverse(topologicalSort($\P$)) \label{alg:treelike:revtop}}{
    $I := \interface(B)$\\
    $O := \outint(B)$ \\
    \tcc{Consider only subsets of $B$ that satisfy $\phi(B)$}
    \For{$S_B \subseteq B$ \KwSuchThat $S_B \models \phi(B)$ \label{alg:treelike:BDD}}{
      \tcc{Consider each combination of partial subsystems of the children of $B$.}
      \For{$(S',\pointval_O(S'))$ \KwForIn successorPoints(\emph{\partsubsysmap},B) \label{alg:treelike:succpoints}}{
        $f := \pointval_O(S')$\\
        \tcc{The new partial subsystem $S_{new}$ for $I$ combines $S_B$ and $S'$.}
        $S_{new} := S_B \cup S'$\\
        $\pointval_I(S_{new}) := (\maxval_{S'}^{f})|_I$ \label{alg:treelike:computeval}\\
        \tcc{Remember the corresponding partial subsystem.}
        $\partsubsysmap[B]$.insert($S_{new}$) \label{alg:treelike:insertp}
      }
      \tcc{Remove dominated points}
      $\partsubsysmap[B]$ := removeDominated($\partsubsysmap[B]$) \label{alg:treelike:removedom}
    }
  }
  \tcc{Here $B_r$ is assumed to be the root of the tree associated with $\P$.}
  \KwRet{argmin$\{|S'|$ \KwFor $S'$ \KwForIn \emph{\partsubsysmap}$[B_r]$ \KwSuchThat $\iota \cdot \pointval_{\supp(\iota)} \geq \lambda \}$}
  \caption{A dedicated algorithm for MDPs using a given directed tree partition.}
  \label{alg:treelike}
\end{algorithm}

\Cref{alg:treelike} computes a minimal witnessing subsystem of $\M$ for $\prb^{\max}_{\M}(\lozenge \goal) \geq \lambda$, using the structure of the tree decomposition $\cal P$.
Witnesses for $\prb^{\min}_{\M}(\lozenge \goal) \geq \lambda$ can be handled by replacing the call to removeDominated in~\Cref{alg:treelike:removedom} by a method which computes the \emph{strong} domination relation (see~\Cref{table:instances} for the possible instances of the algorithm).
Computing the strong domination relation requires checking whether the $I$-point of a new partial subsystem is pointwise smaller than that of any of the given partial subsystems.
The algorithm keeps a map $\partsubsysmap$ from blocks $B \in \mathcal{P}$ to partial subsystems for $\interface(B)$.
This map is populated in a bottom-up traversal of $\mathcal{P}$ (\Cref{alg:treelike:revtop}).
For a given block $B$, the models of $\phi(B)$ (which are subsets of $B$) are enumerated (\Cref{alg:treelike:BDD}).
The method \emph{successorPoints} in~\Cref{alg:treelike:succpoints} returns all partial subsystems for $O = \outint(B)$ which can be obtained by combining partial subsystems in $\partsubsysmap[B_i]$ for all $B_i \in \sucs(B)$. More precisely, if $\sucs(B_i) = \{P_1, \ldots P_k\}$, then:
\[successorPoints(\partsubsysmap,B) = \{ \bigl(T, \pointval_O(T)\bigr) \mid S_1 \in \partsubsysmap[P_1], \ldots, S_k \in \partsubsysmap[P_k]\}\]
where $T = \bigcup_{1 \leq i \leq k} S_i$ and $\pointval_O(T)$ is the vector one gets by concatenating vectors $\pointval_{\interface(P_i)}(S_i)$ (recall that $O = \outint(B) = \bigcup_{1 \leq i \leq k} \interface(P_i)$, and the blocks are pairwise disjoint).
The vectors $\pointval_{\interface(P_i)}(S_i)$ have been computed during a previous iteration of the for loop in~\Cref{alg:treelike:computeval} and are assumed to be in global memory (they are also needed in the algorithm removeDominated).

\begin{restatable}{proposition}{algCorrectness}
  If \Cref{alg:treelike} returns $S'$ on input $(\M,\mathcal{P},\lambda)$, then $S'$ is a minimal witness for $\prb^{\max}_{\M}(\lozenge \goal) \geq \lambda$.
  It returns within exponential time in the size of the input.
\end{restatable}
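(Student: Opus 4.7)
The plan is to prove correctness by induction on the bottom-up traversal order of $\mathcal{P}$, maintaining an invariant that captures the key property preserved by each iteration, and then separately bound the running time. Fix the $\max$ case; the $\min$ case is analogous using strong domination.

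\textbf{Invariant.} After processing block $B$, for every subsystem $S' \subseteq S$ with $S' \cap \closure(B) \neq \varnothing$, there exists a family $\mathcal{S}' \subseteq \partsubsysmap[B]$ such that (i) every $T \in \mathcal{S}'$ satisfies $|T| \leq |S' \cap \closure(B)|$, and (ii) the $\interface(B)$-point of $S' \cap \closure(B)$ is a convex combination of points in $\bigcup \{\pi(\pointval_{\interface(B)}(T)) \mid T \in \mathcal{S}'\}$. The first step is to set this invariant up formally and check the base case (leaves), where $\outint(B) = \varnothing$ and the for loop over $\phi(B)$ enumerates all minimally-closed subsets of $B$ so that every candidate $S' \cap B$ is either directly present or is dominated by the enumerated family (applying the observation on $\phi(B)$).

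\textbf{Inductive step.} Assume the invariant holds for every child of a block $B$. Fix an arbitrary subsystem $S'$, decompose it as $S' \cap B$ together with restrictions to children subtrees, and apply~\Cref{lem:composition} to rewrite the value on $\interface(B)$ as values computed in $\M^f_{S_1'}$ where $f$ records the children's values on $\outint(B)$. By the inductive hypothesis, each children-restriction is dominated inside $\partsubsysmap[\text{child}]$ by partial subsystems of no larger size; using~\Cref{lem:propertiesValuefuncs} (monotonicity, linearity, and subadditivity of $v^f$) one transfers a convex combination on the children side to a convex combination on the $\interface(B)$ side. The enumeration in line~\ref{alg:treelike:BDD} combined with successorPoints in line~\ref{alg:treelike:succpoints} then produces, before pruning, a family of partial subsystems whose projections majorise the required convex combination. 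Finally, \Cref{lem:removedom} guarantees that removeDominated preserves the dominating family, so the invariant holds for $B$. The reduction to subsets $S_B$ satisfying $\phi(B)$ is justified separately by noting that any $S_B \not\models \phi(B)$ contains a removable state that yields a strictly smaller partial subsystem with the same $I$-point.

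\textbf{Root and extraction.} Applied to the root $B_r$, the invariant implies: for every witnessing subsystem $S^\star$ there is $T \in \partsubsysmap[B_r]$ with $|T| \leq |S^\star|$ whose $\interface(B_r)$-point dominates (in convex-combination sense) $\pointval_{\supp(\iota)}(S^\star)$, hence satisfies $\iota \cdot \pointval_{\supp(\iota)}(T) \geq \lambda$ by~\Cref{lem:dominationrel}(2) applied to the initial distribution acting as a nonnegative linear functional. Conversely every $T \in \partsubsysmap[B_r]$ is itself a subsystem of $\M$ with the claimed value, so the $\argmin$ over the filtered set returns a witness of minimum size.

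\textbf{Complexity.} The outer loop runs over the $n$ blocks. The enumeration $S_B \subseteq B$ contributes $2^{|B|}$; the product over successorPoints is bounded by $\prod_{B' \in \sucs(B)} |\partsubsysmap[B']|$, each factor itself at most exponential in the subtree size. The inner computations---evaluating $v_{S'}^f$ on a sub-MDP, incremental convex hull in dimension $|\interface(B)|$, and $\phi(B)$-model enumeration---are all at most exponential in the input size. Therefore the total runtime is exponential in $|\M|$. The main obstacle in this proof is the inductive step: one must carefully justify that taking convex combinations on children $I$-points, then composing via the sub-MDP $\M_{S_1'}^f$ whose terminal values are exactly those combinations, and finally projecting onto $\interface(B)$, still yields a convex combination of points produced by the algorithm; this is precisely where the three parts of~\Cref{lem:propertiesValuefuncs} are invoked in concert with~\Cref{lem:composition}.
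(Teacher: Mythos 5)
Your proof is correct but takes a genuinely different route from the paper. The paper proves minimality by starting from an arbitrary witness $T$ and constructing a sequence $S_0 = T, S_1, \ldots, S_n$ of witnessing subsystems of non-increasing size, at step $i{+}1$ swapping $S_i \cap \reach(\interface(B_{i+1}))$ for an entry of $\partsubsysmap[B_{i+1}]$ via a direct application of \Cref{lem:dominationrel}; the construction terminates with $S_n \in \partsubsysmap[B_r]$. You instead maintain a \emph{universal} domination invariant --- after processing block $B$, the map $\partsubsysmap[B]$ dominates the $\interface(B)$-point of every possible subtree restriction --- and derive the root claim from it by an averaging argument.

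Both are sound, but yours puts more burden on the inductive step. You need to show that (a) the per-child convex combinations granted by the induction hypothesis combine into a single convex combination on $\outint(B)$ (this needs the product of the children's weight distributions and disjointness of the children's subtrees for the size bound); (b) the resulting inequality $\maxval^g_{S_1'} \le \sum_k \nu_k \maxval^{f_{T_k}}_{S_1'}$ obtained from \Cref{lem:propertiesValuefuncs} can be turned back into a convex combination of \emph{projections} of the points the algorithm computes (this uses that the projection operator $\pi$ makes the dominated region coordinate-wise downward closed, so a pointwise upper bound by a convex combination implies membership in the convex hull); and (c) domination is transitive across the pruning step in \Cref{alg:treelike:removedom}, since \Cref{lem:removedom} only guarantees that removed points are dominated by the retained ones. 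All three checks work out, but you gloss over them; the paper sidesteps (a) and (c) by only ever manipulating a single candidate witness and invoking \Cref{lem:dominationrel} as a black box, and the $\phi(B)$ case is handled by both proofs in essentially the same way. A small imprecision: your appeal to \Cref{lem:dominationrel}(2) at the root is really just the averaging argument from its proof --- the hypothesis of the lemma is not literally what you have in hand at that point. The complexity argument matches the paper's.
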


\paragraph{Additional heuristics to exclude partial subsystems.}%
In addition to the domination relation we propose two conditions on when a partial subsystem can be excluded.
First, suppose we are considering block $B_i$ with interface $I = \interface(B_i)$, and let $\closure(B_i)$ be the union of blocks reachable from $B_i$ and $R = S \setminus \closure(B_i)$.
If using \emph{all states} from $R$ together with a partial subsystem $T$ for $\interface(B_i)$ does not lead to a value above $\lambda$ then $T$ can be excluded.
A sufficient condition for this which can easily be checked is if $\goal \subseteq \closure(B_i)$ holds and the sum of entries of the value vector $\pointval_I(T)$ is less than $\lambda$.

For the second condition, assume that $N$ is an upper bound on the size of a minimal witnessing subsystem (this could have been computed by a heuristic approach) and let $M$ be the length of a shortest path from the initial state into any state of $I$ (these can be computed in advance and in polynomial time).
Now if $M + |T| > N$, then $T$ cannot be part of any minimal witness, and can be excluded.
\subsection{Experimental evaluation}
\label{sec:experiments}
We have implemented~\Cref{alg:treelike} in the tool \textsc{Switss}~\cite{JantschHFB2020} using the convex hull library qhull\footnote{http://www.qhull.org/}.
The experiments were performed on a computer with two Intel E5-2680 8 cores at \(2.70\)\,GHz running Linux, where each instance got assigned a single core, a maximum of 10GiB of memory and 900 seconds.
All datasets and instructions to reproduce can be found in~\cite{Jantsch2021}.
At the moment, the implementation only supports DTMCs as input and only returns the size of a minimal witnessing subsystem.
To evaluate it, we consider the \emph{bounded retransmission protocol} (brp) for file transfers, which is a standard benchmark included in the PRISM benchmark suite\cite{KwiatkowsaNP2012}.
It is parametrized by $N$ (the number of ``chunks'') and $K$ (the number of retransmissions).
We fix $K=1$ but consider increasing values for $N$, yielding instances of size between $151$ ($N = 8$) and $1591$ ($N = 80$) in terms of state numbers.
We consider the probabilistic reachability constraint $\Pr(\lozenge \goal) \geq \lambda$, for varying thresholds $\lambda$ and fixed $\goal$.

The protocol maintains a counter which is only increased up to maximal value $N$, and using this fact one can compute a natural directed path partition for the model which essentially partitions the state space along the possible values of the counter.
The directed path partitions that we get have length $N{+}1$ and constant width $37$.
After filtering out the subsets of a block $B$ that do not satisfy $\phi(B)$ (see~\Cref{sec:dominationrel}) at most $15$ subsets remain.
\Cref{fig:experiments} compares the computation time of~\Cref{alg:treelike} against known \emph{mixed-integer linear programming} (MILP) based approaches to compute minimal witnessing subsystems~\cite{WimmerJABK2012,FunkeJB2020}.
The computation times do not include the generation of the path partition, which is straight forward in this particular case.
In the figure, ``min'' and ``max'' refer to the two MILPs derived from the polytopes $\mathcal{P}^{\min}$ and $\mathcal{P}^{\max}$ defined in~\cite[Lemmas 5.1 and 6.1]{FunkeJB2020}.
To solve the MILPs, we use the solvers \gurobi{}~\cite{gurobi21} (version 9.0.1) and \cbc{}\footnote{https://github.com/coin-or/Cbc} (version 2.9.0).
More data regarding this experiment can be found in~\Cref{sec:appendix:exp} (\Cref{tab:detailedtable}).

The evaluation shows that for instances which have a favourable directed path decomposition (provided it can be easily computed) it may pay off to use~\Cref{alg:treelike}.
While a result is not returned within $900$ seconds using the MILP-based approaches for the larger threshold and instances with $N \geq 30$, our implementation returns in less than $100$ seconds for instances up to $N = 88$.
Still, even for these instances it has an exponential increase in runtime and doesn't scale to very large state spaces.

\begin{figure}[tbp]
  \begin{subfigure}{.46\linewidth}
    \centering
    \includegraphics[width=\textwidth]{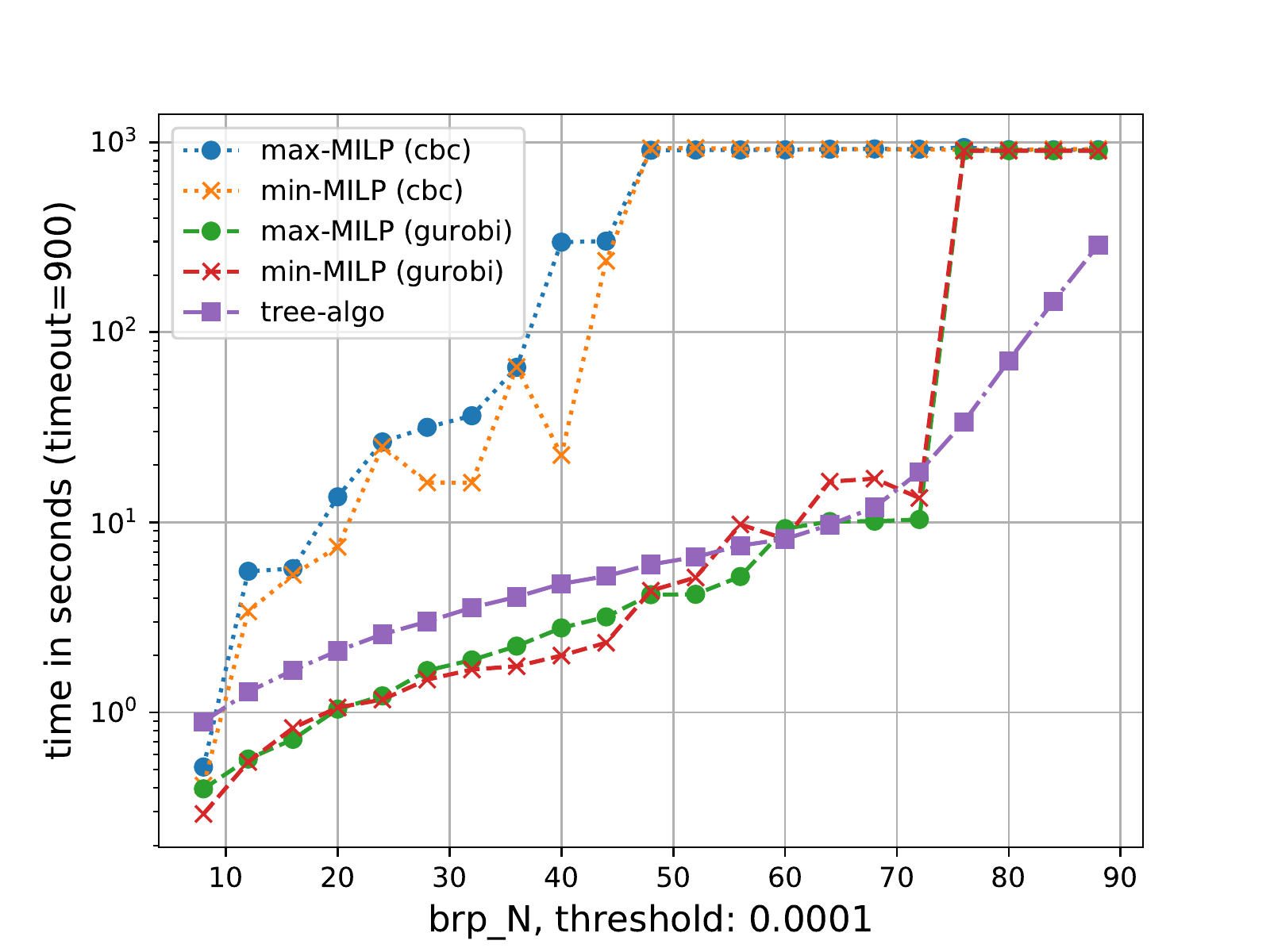}
  \end{subfigure}
  \hfill
  \begin{subfigure}{.46\linewidth}
    \centering
    \includegraphics[width=\textwidth]{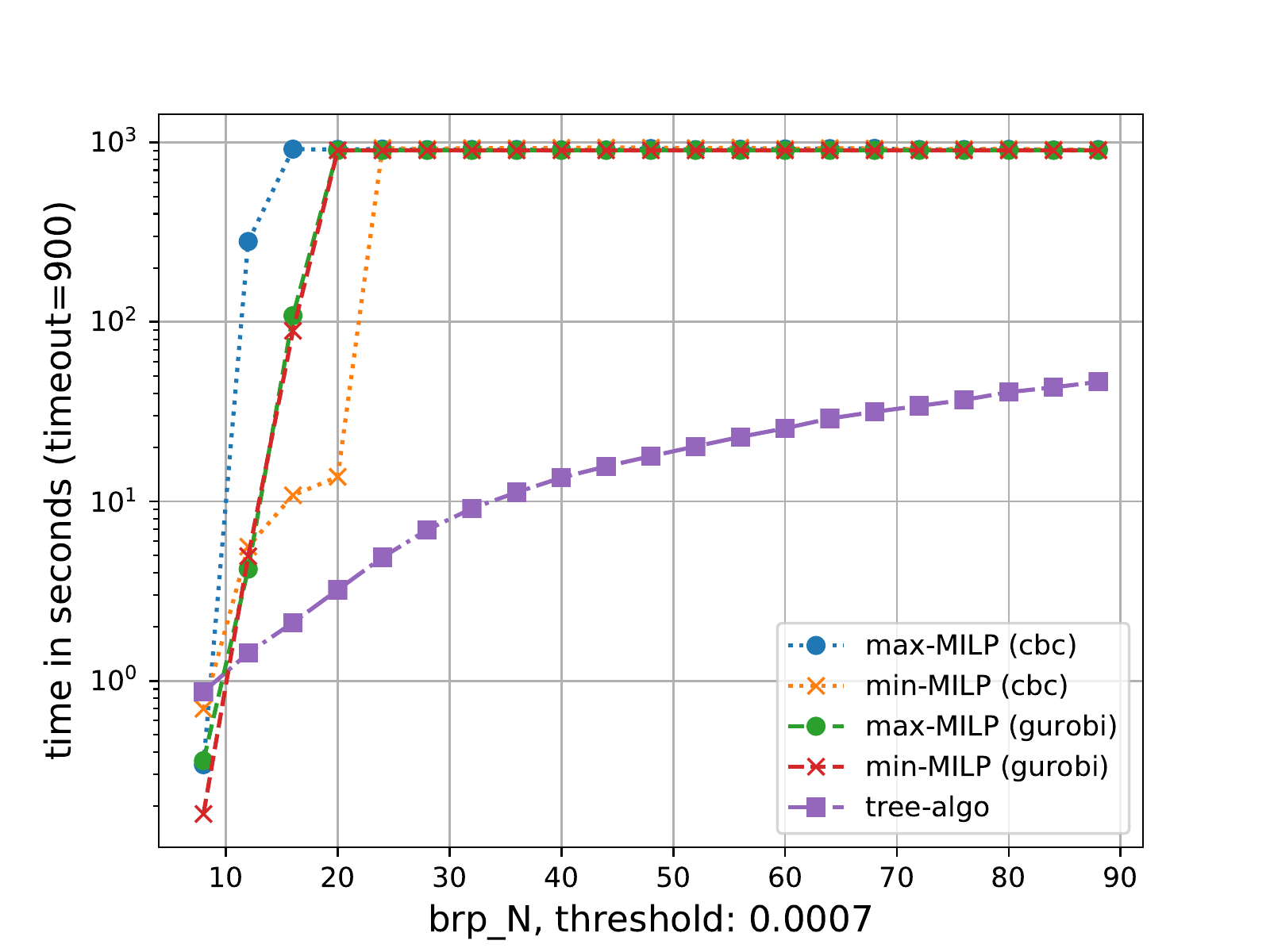}
  \end{subfigure}
  \caption{The computation times of the MILP approaches and~\Cref{alg:treelike} for two different thresholds.}
  \label{fig:experiments}
\end{figure}

\section{Conclusion}

\begin{table}[tbp]
  \caption{Instances of the algorithm.}
  \label{table:instances}
  \renewcommand\arraystretch{1.5}
  \footnotesize
  \centering
  \resizebox{0.9\linewidth}{!}{
  \begin{tabular}{*{2}{>{\raggedright\arraybackslash}p{0.15\linewidth}}*{2}{>{\raggedright\arraybackslash}p{0.3\linewidth}}}
    \toprule
    Model & value function & computing the value (\Cref{alg:treelike:computeval}) & domination relation (\Cref{alg:treelike:removedom}) \\ \hline
    DTMC & $\Pr_{S'}(\lozenge \goal)$ & linear equations & standard \\
    \multirow{2}{*}{MDP} &  $\prb^{\max}_{S'}(\lozenge \goal)$ & \multirow{2}{*}{linear program} & standard \\
    & $\prb^{\min}_{S'}(\lozenge \goal)$ & & strong \\
    \bottomrule
  \end{tabular}
  }
\end{table}

This paper considered the problem of computing minimal witnessing subsystems for probabilistic systems whose underlying graph has low tree width.
The main result is that the corresponding decision problem remains NP-hard for systems with bounded directed tree partition width.
To prove this, the \emph{matrix-pair chain problem} is introduced and shown to be NP-hard for fixed-dimension nonnegative matrices.
In a second step, this problem is reduced to the witness problem.
Finally, an algorithm is described which takes as input a directed tree partition of the system and computes a minimal witnessing subsystem, aiming to utilize the special structure of the system.
A preliminary experimental analysis shows that it outperforms existing approaches for a standard benchmark that allows a good tree partition.

A direction for future work, which would help enabling practical usage of the algorithm described in this paper, is to study how to compute good directed tree partitions, or to characterize systems which allow for natural ones.
Another direction would be to find algorithms which work on standard tree decompositions of the system, as approximation techniques exist to compute them.
Furthermore, it would be interesting to consider heuristic or approximate approaches that utilize the knowledge of a given directed tree partition.
For example, the algorithm described in this paper could be adapted to only store a fixed number of partial subsystems for each block.

\bibliographystyle{splncs04}
\bibliography{references}

\newpage
\appendix

\section{Proofs for~\Cref{sec:treelikesystems}}

\relationToOtherNotions*
\begin{proof}
  Let $G = (V,E)$ be a directed graph and $G_u$ be the undirected graph induced by $G$.
  Let us denote by \utw($G$) the standard notion of (undirected) treewidth of $G_u$, by \dtw($G$) the notion of \emph{directed tree width} from~\cite{JohnsonRST2001}, by \Dw($G$) the notion of D-width from~\cite{Safari2005} and by \utpw($G$) the notion of undirected tree-partition-width of $G_u$ from~\cite{Seese1985}.
  Our aim ist to show that the width of $G$ with respect to any of these notions is bounded from above by a function in \dtpw($G$).

  \textbf{undirected tree (partition) width.} First we observe that any directed tree partition of $G$ directly yields a tree partition of $G_u$ of the same size.
  It follows that $\utpw(G) \leq \dtpw(G)$.
  It was shown in\cite[Fact 2.]{Seese1985} that $2 \: \utpw(G) \geq \utw(G) + 1$.
  Hence the standard undirected tree width \utw($G$) of $G$ is also bounded from above by a function in $\dtpw(G)$.
  
  \textbf{D-width}. A \emph{$d$-decomposition} of $G$ is a pair $(T,X)$ where $T$ is a tree and $X$ is a function which labels the nodes of $T$ by subsets of $V$ such that: 1. all vertices of $G$ appear in one of the sets and 2. for every strongly connected component $\cal S$ of $G$ the nodes $t$ of $T$ such that $X(t) \cap \mathcal{S} \neq \varnothing$ form a connected subtree of $T$~\cite{Safari2005}.
  Clearly, a directed tree partition satisfies this property as every strongly connected component needs to be contained in a single block.
  Hence every directed tree partition induces a $d$-decomposition of the same size, which implies $\Dw(G) \leq \dtpw(G)$.

  \textbf{directed tree width.} It is shown in~\cite[Corollary 1.]{Safari2005} that the directed tree width of any graph is smaller than its $D$-width ($\dtw(G) \leq \Dw(G)$), and hence it follows that $\dtw(G) \leq \dtpw(G)$.
\end{proof}

\treePartitionWithNPHardness*
\begin{proof}
  Membership in NP holds in both cases as one can guess a partition $\cal P$ and check whether it is a valid directed path partition (resp. directed tree partition) and whether it satisfies $\max_{S \in {\cal P}} |S| \leq k$.

  For hardness, we reduce from the \emph{oneway bisection problem} of directed graphs, which was shown to be NP-hard in~\cite{FeigeY2003}.
  It asks, given a directed graph $G$, whether there exists a bisection $V_0,V_1$ of $G$ (that is a partition of the vertices into $V_0$ and $V_1$ satisfying $|V_0| = |V_1|$) such that there are no directed edges from $V_1$ to $V_0$.
  To reduce the oneway bisection problem to the question of whether the directed path width is at most $k$, let us fix a graph $G = (V,E)$.
  Let us construct a new graph $G' = (V \cup \{i,e\}, E')$ (assuming $\{i,e\} \cap V = \varnothing$), where $E' = E \cup \{(i,v), (v,e) \mid v \in V\}$.
   We claim:
  \[\dppw(G') \leq \frac{|V|}{2} + 1 \quad \text{ if and only if } \quad G \text{ has a oneway bisection}\]
  Suppose first that $G$ has a oneway bisection $V_0, V_1$.
  Then $(\{i\} \cup V_0, \{e\} \cup V_1)$ a directed path partition of $G'$.
  This follows directly from the fact that there is no directed edge from $V_1$ to $V_0$.
  The width of this path partition is $|V|/2 + 1$, as $|V_0| = |V_1| = |V|/2$.

  For the other direction, we first observe that any directed path partition of $G'$ has length between one and three.
  This can be seen as follows.
  Vertex $e$ must appear in one of the first three blocks, as any vertex of $G'$ has a path to $e$ of length at most three.
  This also implies that all other vertices must be part of a block preceeding the block that contains $e$.
  
  We now show that a path partition of $G'$ with width at most $|V|/2 + 1$ has length two.
  It cannot have length one, as the single block would then have to contain all vertices.
  So suppose that it has length three.
  Then the first block, which must include $i$, cannot include any other vertex $v \in V$.
  This is because then $e$ must be contained in the first or second block, as there exists an edge from $v$ to $e$.
  In both cases, the third block remains empty.
  At the same time, no vertex $v \in V$ can be included in the third block, as it is reachable from $i$ in a single step.
  Hence the second block contains all $|V|$ vertices, contradicting the fact that the width is at most $|V|/2 + 1$.

  So take a path partition of length two with width at most $|V|/2 + 1$.
  Then, the two blocks have exactly $|V|/2 + 1$ elements, and hence $|V|/2$ vertices from $V$ respectively.
  This partition of $V$ induces a oneway bisection of $G$ as there cannot be any directed edges from the second block to the first one.

  To see that deciding $\dtpw(G) \leq k$ is also NP-hard it suffices to observe that the only directed tree partitions of the graph $G'$ as used in the above reductions are already directed path partitions, as $e$ is reachable from all vertices.
\end{proof}

\section{Proofs for~\Cref{sec:hardnessmcp}}\label{app:proofsMCP}

\twoMCP*

\begin{proof}
To show NP-hardness of the $2$-MCP, we reduce from the \emph{partition} problem, which is among Karp's 21 NP-complete problems\cite{Karp1972}.
Given a finite set $S = \{s_1, \ldots, s_n\} \subseteq \mathbb{Z}$, it asks to decide whether there exists $W \subseteq S$ such that $\sum W = \sum (S \setminus W)$, where $\sum X = \sum_{x \in X} x$.
The main idea of the construction is to relate each entry $s_i \in S$ to a pair of matrices $M^{i}_0,M^{i}_1$ where $M^{i}_0$ is a two-dimensional matrix realizing the clockwise rotation by an angle which corresponds to the 
 value of $s_i$, and $M^{i}_1$ realizes the counter-clockwise rotation by the same angle.
Then for all $\sigma_1, \ldots, \sigma_n \in \{0,1\}^n$ we have that $W = \{s_i \mid \sigma_i = 1\}$ satisfies $\sum W = \sum (S \setminus W)$ iff $M^1_{\sigma_1} \cdots M^n_{\sigma_n}$ equals the identity matrix, which is used in the sequel.

  Let $S = \{s_1, \ldots, s_n\} \subseteq \mathbb{Z}$ be an instance of the partition problem and $m$ be the maximal absolute value that can be accumulated by any subset of $S$, that is $m = \max \{\sum S \cap \mathbb{Z}_{>0}, - \sum S \cap \mathbb{Z}_{<0}\}$.
  We let $\gamma = 3/(4m)<\pi/(4m)$, which is the granularity of rotation we will consider.
  Let $R^{+\varphi}$ be the rotation matrix (in $\mathbb{R}^{2\times 2}$) that rotates a point clockwise by $\varphi$ and $R^{-\varphi}$ be the matrix that rotates counter-clockwise by $\varphi$ (assuming $\varphi \geq 0$ represents an angle in radian).
  Furthermore, let $M_0^{i} = R^{s_i \gamma}$ and $M_1^{i} = R^{- s_i \gamma}$.
  We observe that for all $\sigma_1,\ldots,\sigma_n \in \{0,1\}^n$ we have:
\[M_{\sigma_1}^1 \cdots M_{\sigma_n}^n = I \quad \iff \quad \sum_{\sigma_i = 0} s_i  = \sum_{\sigma_i = 1} s_i \tag{$\ast$}\]
  where $I$ is the identity matrix in two dimensions.
  This uses that $\gamma$ is chosen in a way that prevents a total rotation by more than $\pi/4$.
  Let us fix $\iota = (1/2, 1/2)$, $f = (1/2, 1/2)^T$ and $\lambda = 1/2$.
  We claim that there exist $\sigma_1,\ldots,\sigma_n \in \{0,1\}^n$ such that
  \[\iota \cdot M_{\sigma_1}^1 \cdots M_{\sigma_n}^n \cdot f \geq \lambda\]
  if and only if $S$ is a yes-instance of the partition problem.
  If this is the case, then by ($\ast$) we find $\sigma_1,\ldots,\sigma_n \in \{0,1\}^n$ such that $\iota \cdot M_{\sigma_1}^1 \cdots M_{\sigma_n}^n \cdot f = 1/2$.
  If it is not, then for all $\sigma_1,\ldots,\sigma_n \in \{0,1\}^n$ the product $M_{\sigma_1}^1 \cdots M_{\sigma_n}^n$ is different from $I$ by ($\ast$).
  By construction $M_{\sigma_1}^1 \cdots M_{\sigma_n}^n \cdot f$ is a point on the circle of radius $1/\sqrt{2}$ and by the previous observation it is not $(1/2, 1/2)$.
  It can be easily checked that the sets $\{ v \mid (1/2, 1/2) \cdot v \geq 1/2 \}$ and $\{ v \mid |v| = 1/\sqrt{2}\}$ intersect only in the point $(1/2,1/2)$ and hence $\iota \cdot M_{\sigma_1}^1 \cdots M_{\sigma_n}^n \cdot f < \lambda$ must hold.

  The fact that the constructed matrices may be irrationally-valued dissallows using them directly as input for $2$-MCP.
  It is shown in\cite{CannyDR1992} that for any rational rotation angle $\varphi$ and $\epsilon \in \mathbb{Q}_{>0}$, a rotation matrix $R^{\theta}$ (rotating by angle $\theta$) with rational entries can be computed in polynomial time in $\log (1/ \epsilon)$ such that $|\varphi - \theta| < \epsilon$.
  Let $\epsilon < 3/(8mn)$ and replace all $M_i^j$ by the result $N_i^j$ of the mentioned algorithm.
  Let $\sigma_1,\ldots,\sigma_n \in \{0,1\}^n$ and consider $N_{\sigma_1}^1 \cdots N_{\sigma_n}^n = R^{\varphi_1}$ and $M_{\sigma_1}^1 \cdots M_{\sigma_n}^n = R^{\varphi_2}$.
  We have $|\varphi_1 - \varphi_2| < n \cdot \epsilon < 3/8m = \gamma/2$.
  Finally, we have to adapt the threshold to account for the error terms.
  We choose $\lambda'$ such that
  \[(1/2,1/2) \cdot R^{n \cdot \epsilon} \cdot
  \begin{pmatrix}
    1/2 \\ 1/2
  \end{pmatrix} \geq \lambda' > (1/2,1/2) \cdot R^{\frac{\gamma}{2}} \cdot
  \begin{pmatrix}
    1/2 \\ 1/2
  \end{pmatrix}\]
  Now if $M_{\sigma_1}^1 \cdots M_{\sigma_n}^n = I$ we have $\varphi_2 = 0$ and hence $|\varphi_1| < n \cdot \epsilon$.
  It follows that
  \[(1/2, 1/2) \cdot R^{\varphi_1} \cdot \begin{pmatrix}
    1/2 \\ 1/2
  \end{pmatrix} \geq (1/2,1/2) \cdot R^{n \cdot \epsilon} \cdot
  \begin{pmatrix}
    1/2 \\ 1/2
  \end{pmatrix} \geq \lambda'\]
  If $M_{\sigma_1}^1 \cdots M_{\sigma_n}^n \neq I$ we have $|\varphi_2| \geq \gamma$, as $\gamma$ is granularity of the rotations defined by $M_j^i$.
  In that case $|\varphi_1| \geq \gamma/2$ and hence
  \[(1/2, 1/2) \cdot R^{\varphi_1} \cdot \begin{pmatrix}
    1/2 \\ 1/2
  \end{pmatrix} \leq (1/2,1/2) \cdot R^{\frac{\gamma}{2}} \cdot
  \begin{pmatrix}
    1/2 \\ 1/2
  \end{pmatrix} < \lambda'\]
  It follows that there exists a sequence $\sigma_1,\ldots,\sigma_n \in \{0,1\}^n$ such that
  \[(1/2,1/2) \cdot N_{\sigma_1}^1 \cdots N_{\sigma_n}^n \cdot \begin{pmatrix}
    1/2 \\ 1/2
  \end{pmatrix} \geq \lambda' \]
  if and only if $W = \{s_i \mid \sigma_i = 1\}$ satisfies $\sum W = \sum (S \setminus W)$.
\end{proof}

\threeMCP*

\begin{proof}
  The proof goes by reduction from $2$-MCP.
  Let $(M_0^1,M_1^1),\ldots,(M_0^n,M_1^n)$, with $M_i^j \in \mathbb{Q}^{2 \times 2}$ for all $(i,j) \in \{0,1\} \times \{1,\ldots,n\}$, and $\iota \in \mathbb{Q}^{1 \times 2},f \in \mathbb{Q}^{2 \times 1}$ be an instance of $2$-MCP.
  Define: \[B = 
  \begin{pmatrix}
    1 & 1 & 1 \\
    -1 & 1 & 1 \\
    0 & -2 & 1
  \end{pmatrix}
  \qquad \text{with inverse } \quad B^{-1} = 
  1/6\cdot \begin{pmatrix}
    3 & -3 & 0 \\
    1 & 1 & -2 \\
    2 & 2 & 2
  \end{pmatrix}
  \]

  For a given $\kappa \in \mathbb{Q}$, we define:

  \begin{equation}
    \label{eq:newinstance}
    N_i^j = 
    B \begin{pmatrix}
      M_i^j & \mathbf{0} \\ \mathbf{0} & \kappa
    \end{pmatrix} B^{-1}, \quad
    \iota' = 
    \begin{pmatrix}
      \iota & \kappa
    \end{pmatrix} B^{-1}, \quad
    f' =
    B\begin{pmatrix}
    f \\ \kappa
    \end{pmatrix} \;\; \text{ and } \;\; \lambda' = \lambda + \kappa^{n+2}
  \end{equation}
  For any $\sigma_1,\ldots,\sigma_n \in \{0,1\}^n$ we get:
  \begin{align*}
    N_{\sigma_1}^1 \cdots N_{\sigma_n}^n =
    B \begin{pmatrix}
    M_{\sigma_1}^1 & \mathbf{0} \\ \mathbf{0} & \kappa
  \end{pmatrix} B^{-1} \cdot B \begin{pmatrix}
    M_{\sigma_{2}}^{2} & \mathbf{0} \\ \mathbf{0} & \kappa
  \end{pmatrix} B^{-1} \cdots B \begin{pmatrix}
    M_{\sigma_n}^n & \mathbf{0} \\ \mathbf{0} & \kappa
  \end{pmatrix} B^{-1} = 
    B \begin{pmatrix}
      M_{\sigma_1}^1 \cdots M_{\sigma_n}^n & \mathbf{0} \\ \mathbf{0} & \kappa^n
    \end{pmatrix} B^{-1}
  \end{align*}
  Applying initial and final weights and comparing with the threshold yields:
  \begin{align*}
    &\iota' \cdot B \begin{pmatrix}
      M_{\sigma_n}^n \cdots M_{\sigma_1}^1 & \mathbf{0} \\ \mathbf{0} & \kappa^n
    \end{pmatrix} B^{-1} \cdot f' \geq \lambda' \\
    \iff & \begin{pmatrix}
    \iota & \kappa
  \end{pmatrix} \cdot B^{-1} \cdot B \begin{pmatrix}
      M_{\sigma_n}^n \cdots M_{\sigma_1}^1 & \mathbf{0} \\ \mathbf{0} & \kappa^n
    \end{pmatrix} B^{-1} \cdot B \cdot \begin{pmatrix}
    f \\ \kappa
    \end{pmatrix} \geq \lambda' \\
    \iff & \iota \cdot M_{\sigma_n}^n \cdots M_{\sigma_1}^1 \cdot f + \kappa^{n+2} \geq \lambda + \kappa^{n+2} \\
    \iff & \iota \cdot M_{\sigma_n}^n \cdots M_{\sigma_1}^1 \cdot f \geq \lambda \\
  \end{align*}
  It remains to find $\kappa$ such that all matrices defined in \Cref{eq:newinstance} are nonnegative.
  We observe that $N_i^{j}$ can be written as 
  \begin{equation}
    \label{eqn:specmatrices}
    N_i^{j} = A_i^j + \frac{2 \kappa}{6} \cdot \mathbf{1}^{3 \times 3}
  \end{equation}
  where $\mathbf{1}^{3 \times 3}$ is the three times three matrix containing just ones, and $A_i^j$ is easily computable in polynomial time from $B, B^{-1}$ and $M_i^j$, which can be seen as follows:
  \begin{align*}
    B \begin{pmatrix}
    M_{i}^j & \mathbf{0} \\ \mathbf{0} & \kappa
  \end{pmatrix} B^{-1} \;\; = \;\;
\begin{pmatrix}
    1 & 1 & 1 \\
    -1 & 1 & 1 \\
    0 & -2 & 1
  \end{pmatrix} \cdot 
  \begin{pmatrix}
    a & b & 0 \\
    c & d & 0 \\
    0 & 0 & \kappa
  \end{pmatrix} \cdot
  1/6
  \begin{pmatrix}
    3 & -3 & 0 \\
    1 & 1 & -2 \\
    2 & 2 & 2
  \end{pmatrix} = \\
  1/6 \left(
\underbrace{\begin{pmatrix}
    3(a{+}c) + b + d & - 3(a{+}c) + b + d & - 2(b {+} d) \\
    3(c{-}a) - b + d & - 3(c{-}a) - b + d & - 2(d {-} b) \\
    -6c - 2d & 6c - 2d & 4d
  \end{pmatrix}}_{6\cdot A_i^j}
  + 
  \begin{pmatrix}
    2\kappa & 2\kappa & 2\kappa \\
    2\kappa & 2\kappa & 2\kappa \\
    2\kappa & 2\kappa & 2\kappa \\
  \end{pmatrix} \right)
  \end{align*}
  Hence, it is enough to choose $\kappa$ such that $2 \kappa /6$ is larger than any entry in $A_i^j$ for each $i,j$.
  Similar equations hold for $\iota'$ and $f'$, which means that one can find a $\kappa$ in polynomial time such that all matrices in~\Cref{eq:newinstance} contain only positive entries and concludes the proof.
\end{proof}

\subsection{The MCP for nearly equal-valued matrices.}
\label{sec:matrassumptions}
Finally, we argue that the nonnegative 3-MCP remains hard under the assumption that we have made.
Namely, that all appearing numbers of the 3-MCP instance apart from $\lambda$ are in the range $[1/12-\epsilon,1/12]$ for an $\epsilon$ that satisfies~\Cref{eqn:epsbound}.
Let us first fix a value for $\epsilon$ which depends solely on the length $n$ of an MCP-instance.
We define $\epsilon = 1/ (3 \cdot 12^{n+3} \cdot 2^{n+2})$ and argue that this choice satisfies~\Cref{eqn:epsbound}, which requires:
\[0 < 12 \epsilon < 1/3 \cdot \big(1/12 - \epsilon\big)^{n+2}\]
To see this, observe first that:
\[1/3 (1/12  - \epsilon)^{n+2} = \frac{1}{3\cdot 12^{n+2}} (1 - \frac{1}{3 \cdot 12^{n+2} \cdot 2^{n+2}})^{n+2} > \frac{1}{3 \cdot 12^{n+2}} \cdot (1- \frac{1}{2^{n+2}})^{n+2} \]
Furthermore:
\begin{align*}
  & \frac{1}{3 \cdot 12^{n+2}} \cdot (1- \frac{1}{2^{n+2}})^{n+2} > 12 \epsilon \iff \big(1- \frac{1}{2^{n+2}}\big)^{n+2} > \frac{1}{2^{n+2}} \iff \big(2 - \frac{2}{2^{n+2}}\big)^{n+2} > 1 \iff n \geq 0
\end{align*}

The proof of~\Cref{prop:nonnegativematrixchain} shows that the nonnegative MCP is already NP-hard when restricted to instances of the form:
\[N_i^{j} = A_i^j + \kappa' \cdot \mathbf{1}^{3 \times 3}, \qquad f' = f + \kappa' \cdot \mathbf{1}^3, \qquad \iota' = \iota + \kappa' \cdot \mathbf{1}^3, \qquad \lambda' = \lambda + \kappa^{n+2} \]
with $0 \leq i \leq 1$ and $1 \leq j \leq n$ and $\kappa' = 2 \kappa/6$ for large enough $\kappa$.
This 3-MCP instance can be transformed (in polynomial time in the size of $A_i^j,f,\iota$ and $\lambda$) into an equivalent one which satisfies our assumptions as follows.
Let $max$ and $min$ be the maximal and minimal values appearing in $A_i^j,f,\iota$. Define
\[M_i^j = \frac{1}{12(max + \kappa')} N_i^j, \qquad f_{new} = \frac{1}{12(max + \kappa')} f', \qquad \iota_{new} = \frac{1}{12(max + \kappa')} \iota'\]
The largest entry of $M_i^j,f_{new}$ and $\iota_{new}$ is $1/12$, and the smallest one is $1/12 \cdot (min + \kappa')/(max + \kappa')$.
The largest difference between any two entries is then: $(max-min)/(12(max + \kappa'))$.
Accordingly, we choose $\kappa'$ as follows:
\[\kappa' = \frac{max - min}{12 \epsilon} - max\]
This ensures that all entries of matrices $M_i^j$ and vectors $f_{new}$ and $\iota_{new}$ are in the range $[1/12 - \epsilon, 1/12]$.
As all numbers defining $\kappa'$ have a polynomial binary representation, we can compute it in polynomial time and with it the new 3-MCP instance $M_i^j$ (with $0 \leq i \leq 1$, $1 \leq j \leq n$), $f_{new},\iota_{new}$ and $\lambda_{new} = \lambda'/(12(max + \kappa'))$.
It satisfies:
\begin{lemma}
  For any $\sigma_1,\ldots,\sigma_n \in \{0,1\}^n$ we have:
  \[\iota_{new} \cdot M_{\sigma_1}^1 \cdots M_{\sigma_n}^n \cdot f_{new} \geq \lambda_{new} \; \iff \; \iota' \cdot N_{\sigma_1}^1 \cdots N_{\sigma_n}^n  \cdot f' \geq \lambda'\]
\end{lemma}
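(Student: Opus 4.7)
The plan is to verify the equivalence by a direct calculation, exploiting the fact that the transformation from the triple $(N_i^j, \iota', f')$ to $(M_i^j, \iota_{new}, f_{new})$ is a uniform scaling by the constant $c = 1/(12(\textit{max} + \kappa'))$ applied to every matrix and every vector in the chain. First I would set $c = 1/(12(\textit{max} + \kappa'))$ and substitute the definitions to obtain
\[
\iota_{new} \cdot M_{\sigma_1}^1 \cdots M_{\sigma_n}^n \cdot f_{new}
\;=\; (c\,\iota')\cdot (c\,N_{\sigma_1}^1)\cdots (c\,N_{\sigma_n}^n)\cdot (c\,f')
\;=\; c^{\,n+2}\cdot \iota'\cdot N_{\sigma_1}^1\cdots N_{\sigma_n}^n\cdot f'.
\]

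Next I would divide both sides of the inequality on the right of the claimed equivalence by $c^{n+2}$ (a strictly positive number, as $\textit{max}+\kappa' > 0$ by construction), which preserves the direction of the inequality. This reduces the equivalence to the identity $\lambda_{new} = c^{n+2}\cdot \lambda'$, i.e.\ $\lambda_{new} = \lambda'/(12(\textit{max}+\kappa'))^{n+2}$. (The expression $\lambda'/(12(\textit{max}+\kappa'))$ appearing right above the lemma I would read as a shorthand/typo for this correctly-exponentiated value; alternatively one can absorb the difference by rescaling $\iota_{new}$ only, so that the scaling happens only once rather than $n+2$ times. Either reading yields the same equivalence after a routine algebraic adjustment.)

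The two steps above together give the forward and backward implications of the stated biconditional at once, because each rewriting is reversible and $c > 0$. There is no real obstacle in the proof: the only point requiring a small sanity check is that $c$ is well-defined and positive, which follows from $\textit{max} \geq 0$ (as $A_i^j$ contains a nonnegative entry by construction, e.g.\ from the rotation matrix plus shift) and $\kappa' > 0$ (which is explicitly ensured in the choice of $\kappa'$ so that all entries of $M_i^j$, $\iota_{new}$ and $f_{new}$ lie in $[1/12-\epsilon,1/12]$). Hence the lemma follows by a direct substitution and division.
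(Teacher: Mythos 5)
Your calculation is correct, and it is the only natural argument: the map $(N_i^j,\iota',f') \mapsto (M_i^j,\iota_{new},f_{new})$ is a uniform scaling by $c = 1/(12(\textit{max}+\kappa'))$ of each of the $n+2$ factors in the chain, so the product picks up a factor $c^{n+2}$, and since $c>0$ the threshold inequality transfers both ways. The paper itself leaves this lemma unproved (it is stated immediately after the definitions with the phrase ``It satisfies:'' and no argument), so there is no competing proof to compare against; your substitution-and-divide argument is exactly what is implicitly intended.

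You have also correctly spotted a typo in the paper: the text defines $\lambda_{new} = \lambda'/(12(\textit{max}+\kappa'))$, i.e.\ $c\,\lambda'$, whereas the computation forces $\lambda_{new} = c^{n+2}\lambda'$. Your primary reading (missing exponent $n+2$) is the right fix. Your suggested alternative of rescaling only $\iota_{new}$ would indeed make $\lambda_{new}=c\lambda'$ correct, but it would defeat the purpose of the transformation, which is to force \emph{all} entries of the matrices and vectors into $[1/12-\epsilon,1/12]$ so that the subsequent Markov-chain gadget construction produces a valid sub-stochastic DTMC; rescaling only the initial vector leaves the matrices $N_i^j$ untouched. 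So the typo interpretation is the one to commit to, and with it your proof is complete.
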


\section{Proofs of Section \Cref{sec:hardness_witness}}\label{app:hardness_witness}

\constrWidth*
\begin{proof}
  Partitioning the states of $\M_1$ along the $n+1$ layers yields a directed path partition with partition-width $6$.
  It remains to argue that there is no directed tree partition with a smaller width.
  First we observe that any directed tree partition of $\M_1$ is a path partition.
  This is because for all states pairs of states of $\M_1$ are able to reach the state $\goal$, and hence cannot belong to independent parts of a tree partition.
  Let $\{B_1, \ldots, B_m\}$ be a directed path partition of $\M_1$ where we assume that $B_{i+1}$ is the successor of $B_i$ in the path order, for $1 \leq i \leq m$.
  Now consider arbitrary state of $\M_1$ apart from $\{x_{n+1},y_{n+1},z_{n+1},\goal\}$, for example $\lst{i}{x}$ for any $1 < i < n$.
  Let $B_j$ be the block such that $\lst{i}{x} \in B_j$.
  As $\lst{i}{x}$ is connected by an incoming edge to all states in $\llay{i-1} \cup \rlay{i-1}$ and by an outgoing edge to all states in $\llay{i+1} \cup \rlay{i+1}$, it follows that all these states must be included in the three blocks $B_{j-1},B_{j}, B_{j+1}$ (if $j = 1$ or $j = m$ there are only two blocks and the argument follows in the same way).
  The same holds for the states in $\llay{i} \cup \rlay{i}$, which means that at least $18$ states are included in the three blocks $B_{j-1},B_{j}, B_{j+1}$.
  Then, by the pigeon hole principle, one of the blocks $B_{j-1},B_{j}, B_{j+1}$ includes at least six states.
\end{proof}

\goodSubsysyProp*
\begin{proof}
  Let $\nu(s)$ be the probability of reaching $\goal$ from state $s$ in the DTMC induced by subsystem $S_{\sigma}$ and define for all $1 \leq j \leq n$: $(x_{j},y_j,z_j) = (\lst{j}{x},\lst{j}{y},\lst{j}{z})$ if $\sigma_j = 0$, and else $(x_{j},y_j,z_j) = (\rst{j}{x},\rst{j}{y},\rst{j}{z})$.
  We show by induction on $i$ (for $0 \leq i \leq n$) that
  \[
  \begin{pmatrix}
    \nu(x_{n+1{-}i}) \\
    \nu(y_{n+1{-}i}) \\
    \nu(z_{n+1{-}i})
  \end{pmatrix} = M_{\sigma_{n{+}1{-}i}}^{n{+}1{-}i} \cdots M_{\sigma_n}^n \cdot f \]
  This is enough, as:
  \[Pr_{S_{\boldsymbol{\sigma}}}(\lozenge \goal) = \iota \cdot \begin{pmatrix}
  \nu(x_{1}) \\
  \nu(y_{1}) \\
  \nu(z_{1})
  \end{pmatrix}\]
  For $i = 0$ it is clear, as the probability vector to reach $\goal$ from $(x_{n+1},y_{n+1},z_{n+1})$ is $f$.
  For $i=i'+1$, we have:
  \[\begin{pmatrix}
  \nu(x_{n+1{-}i}) \\
  \nu(y_{n+1{-}i}) \\
  \nu(z_{n+1{-}i})
  \end{pmatrix} = A \cdot \begin{pmatrix}
  \nu(x_{n+1{-}i'}) \\
  \nu(y_{n+1{-}i'}) \\
  \nu(z_{n+1{-}i'})
  \end{pmatrix}\]
  where $A$ contains the pairwise probabilities to reach $(x_{n+1{-}i'},y_{n+1{-}i'},y_{n+1{-}i'})$ from $(x_{n+1{-}i},y_{n+1{-}i},y_{n+1{-}i})$.
  For example, $(A)_{11} = Pr_{S_{\boldsymbol{\sigma}},x_{n+1{-}i}}(\lozenge x_{n+1{-}i'})$.
  But then $A = M^{n{+}1{-}i}_{\sigma_{n+1{-}i}}$ and by induction hypothesis the claim follows.
\end{proof}

\subsection{Singeling out good subsystems.}\label{app:goodsubsysoptimal}
Let $\M_2$ be the Markov chain that we get by substituting all ``matrix gadgets'' in~\Cref{fig:mainstruct} by the construction in~\Cref{fig:matrmult2}.
That is, we add a $\gamma$-cycle to the states $(x_{n+1},y_{n+1},z_{n+1})$ and additionally to each triple of states $\lst{i}{x},\lst{i}{y},\lst{i}{z}$ and $\rst{i}{x},\rst{i}{y},\rst{i}{z}$.
The probabilities on edges between states on layer $i$ and states on layer $i{+}1$ (previously defined using the matrices $M_0^i$ and $M_1^i$ directly) are replaced by the entries of matrix $(M_0^i)'$ or $(M_1^i)'$  (as defined in~\Cref{eqn:gammamatrix}).
To see that the resulting Markov chain really realizes a multiplication by matrices $M_0^i$ and $M_1^i$ in the corresponding layer we now argue that the matrix $M$ in~\Cref{eqn:gammacyclesprob} indeed contains the pairwise reachability probabilities from $(x_i,y_i,z_i)$ to $(x_{i+1},y_{i+1},z_{i+1})$.
Consider first the vector $(v_{xx},v_{yx},v_{zx})$ containing the probabilities to reach $x_{i+1}$ from $x_i$, $y_i$ and $z_i$, respectively.
Let 
\[M' = (1-\gamma) \cdot 
\begin{pmatrix}
  m_{11} & m_{12} & m_{13} \\
  m_{21} & m_{22} & m_{23} \\
  m_{31} & m_{32} & m_{33}
\end{pmatrix}\]
Then we have: \[v_{xx} = (1-\gamma) m_{11} + \gamma \: v_{yx} \qquad v_{yx} = (1-\gamma) m_{21} + \gamma \: v_{zx} \qquad v_{zx} = (1-\gamma) m_{31} + \gamma \: v_{xx}\]
which implies
\[(1-\gamma^3)v_{xx} = (1-\gamma) (m_{11} + \gamma \: m_{21} + \gamma^2 \: m_{31}) \]
and analogously for $v_{yx}$ and $v_{zx}$.
Now one can check that $(v_{xx},v_{yx},v_{zx})^T$ is indeed the first column of the matrix $M$ defined as follows in~\Cref{eqn:gammacyclesprob}:
\[M = 
\frac{1-\gamma}{1-\gamma^3} \cdot
\begin{pmatrix}
  1 & \gamma & \gamma^2 \\
  \gamma^2 & 1 & \gamma \\
  \gamma & \gamma^2 & 1 \\
\end{pmatrix} \
M' \]
The other two columns are derived in completely analogous fashion.

The result of this construction is a valid Markov chain as the entries of all matrices $M'$ are between $0$ and $1/6$ (as shown in~\Cref{sec:interconnectingstates}), and any state has at most $6$ outgoing edges whose probability corresponds to one entry of such a matrix.

By construction of the gadget from~\Cref{fig:matrmult2} it is ensured that if the $\gamma$-cycle \emph{is not interrupted}, then the (pairwise) probabilities of reaching states $(x_{i+1},y_{i+1},z_{i+1})$ from $x_i,y_i,z_i$ are exactly as given by matrix $M_0^i$ (and analogously for $\operatorname{right}$-states and matrices $M_1^i$).
Hence the good subsystems in $\M_2$, which by definition do not break any $\gamma$-cycles, still satisfy the property of~\Cref{lem:goodsybsysprop}.

The fact that all entries of matrices $M_i^j$ (with $1 \leq i \leq n$ and $0 \leq j \leq 1$) and vectors $\iota,f$ have value at least $1/12 - \epsilon$ implies that $\big(3(1/12 - \epsilon)\big)^{n+2}$ is a lower bound on the reachability probability that is achieved by any good subsystem.
This is because assuming that all entries in all matrices and vectors are equal to the lower bound yields a subsystem with this probability, and adding probability to any edge can never decrease the overall probability to reach $\goal$.
Our next aim is to show that no bad subsystem of size $3n +4$ achieves this probability.
\goodsubsystemoptimal*
\begin{proof}
   Let $S_1$ be a \emph{bad} subsystem with at most $3n+4$ states  which include $\goal$.
  By the pigeon hole principle, if there exist a $j$ such that $|\lay{j} \cap \ S_1| > 3$, then there exists $j'$ such that $|\lay{j'} \cap \ S_1| < 3$ (where $1 \leq j,j' \leq n {+} 1$).
  But then both $\gamma$-cycles on layer $\lay{j'}$ are interrupted (or the single one, if $j' = n{+}1$).
  Furthermore, if $S_1$ includes exactly three states in all layers and does not interrupt both $\gamma$-cycles in any layer, then $S_1$ is a good subsystem.

  So it suffices to show that if all $\gamma$-cycles are interrupted in any layer of $S_1$, then the probability to reach $\goal$ is less than $\big(3(1/12 - \epsilon)\big)^{n+2}$, as this is a lower bound on the probability achieved by any good subsystem.
  So let $j$ (with $1 \leq j \leq n{+}1$) be a layer in which both $\gamma$-cycles are interrupted.
  Then, the probability of reaching the next layer (and hence $\goal$) from any state in layer $j$ is at most $3 (1{-}\gamma)$.
  This implies that $Pr_{S_1}(\lozenge \goal)$ is bounded from above by $3(1-\gamma)$.
  Using the assumption that $\gamma$ satisfies~\Cref{eqn:gammachoice} we get:
  \[Pr_{S_1}(\lozenge \goal) \leq \ 3 (1- \gamma) \ < \ \left(3(1/12 - \epsilon)\right)^{n+2}\]
\end{proof}

For completeness we now explain how the the values $f_x',f_y'$ and $f_z'$ in~\Cref{subfig:newmultlastlay} are computed.
The computation is essentially the same as for the main layers.
One can see that the probability to reach $\goal$ from the states $(x_{n+1},y_{n+1},z_{n+1})$ respectively is given by the vector
\[f = 
\underbrace{\frac{1-\gamma}{1-\gamma^3} \cdot
\begin{pmatrix}
  1 & \gamma & \gamma^2 \\
  \gamma^2 & 1 & \gamma \\
  \gamma & \gamma^2 & 1 \\
\end{pmatrix}}_{R} \
f'\]
Solving for $f'$ yields:
\[f' = R^{-1} \cdot f = \frac{1}{1-\gamma}
\begin{pmatrix}
  f(x) - \gamma f(y) \\
  f(y) - \gamma f(z) \\
  f(z) - \gamma f(x)
\end{pmatrix}
\]
This corresponds to the definition of $M'$ in \Cref{eqn:gammamatrix} and it follows in the same way as for the main layers that the entries of $f'$ are between $0$ and $1/6$.

\section{Proofs for~\Cref{sec:algorithm}}

\CompositionLemma*
\begin{proof}
  We will consider the case where $v = \maxval$, the proof for $v = \minval$ is analogous.

  ``$\leq$''. We first prove $\maxval_{S'}(q) \leq \maxval_{S_1'}^f(q)$.
  Let $\S$ be a memoryless scheduler on $\M_{S'}$ such that $\Pr_{\M_{S'}(q)}^{\S}(\lozenge \goal) = \maxval_{S'}(q)$ for all $q \in S'$.
  Let $\S'$ be the scheduler on $\M_{S_1'}^f$ defined exactly as $\S$ on states in $S_1$.
  We claim that $\Pr_{\M_{S_1'}^f(q)}^{\S'}(\lozenge \goal) = \Pr_{\M_{S'}(q)}^{\S}(\lozenge \goal)$ for all $q \in S_1$.
  For states in $\outint(B_i)$ this holds by definition of $f$.
  But, after instantiating the corresponding values of states in $\outint(B_i)$, the standard linear equation system\cite[Theorem 10.19]{BaierK2008} used to compute reachability probabilities for the two Markov chains is the same. 

  ``$\geq$''. We now prove $\maxval_{S'}(q) \geq \maxval_{S_1'}^f(q)$.
  Let $\S$ be a memoryless scheduler on $\M_{S_1'}^f$ such that $\Pr_{\M_{S_1'}^f(q)}^{\S}(\lozenge \goal) = \maxval_{S_1'}^f(q)$ for all $q \in S_1'$.
  Let $\S'$ be a scheduler on $\M_{S'}$ which is defined as $\S$ on states in $S_1$, and for all other states simulates any optimal memoryless scheduler.
  By definition of $f$, $\Pr_{\M_{S_1'}^f(q)}^{\S}(\lozenge \goal) = Pr_{\M_{S'}(q)}^{\S'}(\lozenge \goal)$ holds for states $q \in \outint(B_i)$, and for states $q \in S_1$  this follows as above.
\end{proof}

\begin{lemma}
  \label{lem:valueFuncCompMC}
  Let $\M = (S,P,\iota)$ be a DTMC and $f_1,f_2 : S \to [0,1]$ be two partial functions such that $\dom(f_1) = \dom(f_2)$.
  Let $I \subseteq S$ be a set of states such that no state $q \in I$ can reach $\goal$ in $\M$ without seeing $\dom(f_1)$.
  Then, for all $q \in I$:
  \begin{enumerate}
  \item  If $f_1 + f_2 \leq \mathbf{1}$, then:
    \[\Pr_{\M^{f_1}(q)}(\lozenge \goal) + \Pr_{\M^{f_2}(q)}(\lozenge \goal) = \Pr_{\M^{f_1 + f_2}(q)}(\lozenge \goal).\]
  \item If $a \cdot f_1 \leq \mathbf{1}$, then:
    \[\Pr_{\M^{a \cdot f_1}(q)}(\lozenge \goal) = a \cdot \Pr_{\M^{f_1}(q)}(\lozenge \goal).\]
  \end{enumerate}
\end{lemma}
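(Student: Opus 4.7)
The plan is to exploit the fact that, under the hypothesis on $I$, the probability to reach $\goal$ in $\M^{f}$ starting from any $q \in I$ depends \emph{linearly} on the vector $f$. Once this linearity is made precise, both claims follow by plugging in $f_1 + f_2$, respectively $a \cdot f_1$, on the right-hand side.

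Concretely, for each $q \in I$ and each $s \in \dom(f_1) = \dom(f_2)$, let $p(q,s)$ denote the probability in $\M$ of a finite path from $q$ to $s$ whose intermediate states avoid $\dom(f_1)$ (i.e.\ the first-visit probability of $\dom(f_1)$ at $s$ starting from $q$). Crucially, $p(q,s)$ is determined entirely by $\M$, $I$ and $\dom(f_1)$ and is independent of the values assigned by $f_1$ or $f_2$, because the modification carried out by $\M^{f}$ only affects outgoing transitions of states in $\dom(f)$, which are visited for the first time exactly when the prefix reaches $\dom(f)$. The first step is therefore to show the key identity
\[
  \Pr_{\M^{f}(q)}(\lozenge \goal) \;=\; \sum_{s \in \dom(f)} p(q,s) \cdot f(s)
\]
for every $q \in I$ and every partial function $f$ with $\dom(f) = \dom(f_1)$ and $f \leq \mathbf{1}$. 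This identity is obtained by decomposing paths from $q$ to $\goal$ in $\M^{f}$ according to the unique first state $s \in \dom(f)$ at which they leave the ``$I$-side'' of the system: the assumption that $q \in I$ cannot reach $\goal$ without first seeing $\dom(f_1)$ guarantees that every path to $\goal$ factors through some such $s$, and once in $s$ the only transition available in $\M^{f}$ is the one to $\goal$ carrying probability $f(s)$.

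Given the identity, both parts are immediate by linearity of finite sums: $\sum_s p(q,s)\bigl(f_1(s) + f_2(s)\bigr) = \sum_s p(q,s) f_1(s) + \sum_s p(q,s) f_2(s)$ and $\sum_s p(q,s)\bigl(a\cdot f_1(s)\bigr) = a \sum_s p(q,s) f_1(s)$. The sub-stochasticity conditions $f_1 + f_2 \leq \mathbf{1}$ and $a \cdot f_1 \leq \mathbf{1}$ are only needed to ensure that $\M^{f_1+f_2}$ and $\M^{a\cdot f_1}$ are well-defined sub-stochastic DTMCs, so that the identity above applies to them.

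The main technical obstacle is the justification of the first-visit decomposition, i.e.\ verifying that $p(q,s)$ really is independent of $f$. The cleanest way to carry this out is via the cylinder-set definition of the probability measure: the set of paths in $\M^{f}$ reaching $\goal$ is the disjoint union over $s \in \dom(f)$ of cylinder sets $\Cyl(\tau \cdot s \cdot \goal)$ where $\tau$ ranges over finite $q$-to-$s$ paths avoiding $\dom(f)$ on their interior. The probability of such a cylinder is $\iota$-independent of $f$ on the $\tau$-part, and equals $f(s)$ on the final transition, which yields the claimed factorisation. The hypothesis that no $q \in I$ can reach $\goal$ in $\M$ without seeing $\dom(f_1)$ is exactly what is needed to rule out $\goal$-reaching paths that never enter $\dom(f)$, which would otherwise contribute terms independent of $f$ and destroy linearity.
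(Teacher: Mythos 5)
Your proof is correct, and it takes a genuinely different route from the paper's. The paper's proof works algebraically: it restricts to the set $S_?$ of states that can still reach $\dom(f_1)$ in $\M^{f_1}$, observes that the reachability-probability vector on $S_?$ is the unique solution of a linear system $(\Ib - \Ab)\xb = \bb$ in which the matrix $\Ab$ is determined by $\M$ alone (identical for $\M^{f_1}$, $\M^{f_2}$, $\M^{f_1+f_2}$, $\M^{a f_1}$) while the right-hand side satisfies $\bb = f$, and then reads off both claims from linearity and uniqueness of the solution. You instead argue measure-theoretically, decomposing $\goal$-reaching paths in $\M^{f}(q)$ by the first state of $\dom(f)$ they hit, which gives the identity $\Pr_{\M^f(q)}(\lozenge\goal) = \sum_{s\in\dom(f)} p(q,s)\, f(s)$ with the first-hit weights $p(q,s)$ independent of $f$; linearity is then immediate. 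Both proofs hinge on the same two structural facts — (i) the hypothesis on $I$ kills every $\goal$-path that avoids $\dom(f_1)$, and (ii) only the last step of a $\goal$-path depends on $f$ — but package them differently: the paper buries (ii) in the observation that $\Ab$ is $f$-independent and $\bb = f$, while you make it explicit at the level of cylinder sets. Your version is more elementary and self-contained; the paper's is shorter given that it already cites the linear-equation characterization of reachability probabilities. One small wording slip: ``$\iota$-independent of $f$'' in your last paragraph should just read ``independent of $f$''; and to be fully rigorous you should state that $p(q,s)$ is the \emph{total} probability of the event ``first visit to $\dom(f_1)$ occurs at $s$,'' i.e.\ a countable sum over cylinders, rather than the probability of a single finite path — but the argument as a whole is sound.
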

\begin{proof}
  Let $I^* \supseteq I$ be the closure of $I$ under reachability in $\M^{f_1}$ and $S_? \subseteq I^*$ be the states in $I^*$ which have a path to the set $\dom(f_1)$ in $\M^{f_1}$.
  All states in $I^* \setminus S_?$ have probability $0$ to reach $\goal$ in $\M^f$ for any $f$ with $\dom(f) = \dom(f_1)$, by assumption.
  The vector containing the probabilities to reach $\goal$ from each state in $S_?$ in $\M^{f_1}$ is the unique solution of a linear equation system $(\Ib - \Ab) \xb = \bb$, where $\Ab = (P(s,t))|_{s,t \in S_?}$, $\bb = (P(s,\goal))|_{s \in S_?}$ and $\Ib$ is the identity matrix\cite[Theorem 10.19]{BaierK2008}.
  By construction of $\M^{f_1}$, and our assumption that only states in $\dom(f_1)$ have direct edges to $\goal$, we have $\bb = f_1$, were $f_1$ stands for the vector in $[0,1]^{S_?}$ with entries $0$ for all states outside of $\dom(f_1)$.
  Now if $\xb_1$ is the unique solution of $(\Ib - \Ab) \xb = f_1$ and $\xb_2$ is the unique solution of $(\Ib - \Ab) \xb = f_2$, then $\xb_1 + \xb_2$ is the unique solution of $(\Ib - \Ab) \xb = f_1 + f_2$, and vice versa.
  Similarly, if $a \cdot \xb_1$ is the unique solution of $(\Ib - \Ab) \xb = a \cdot f_1$, then $\xb_1$ is the unique solution of $(\Ib - \Ab) \xb = f_1$, and vice versa.
\end{proof}

\propertiesValuefuncs*
\begin{proof}

  1. Take $f' = f_2 - f_1$. For any scheduler $\S$ on $\M_{T}$ we have:
  \[\Pr^{\S}_{\M^{f_1}_{T}(q)}(\lozenge \goal) + \Pr^{\S}_{\M^{f'}_{T}(q)}(\lozenge \goal) = \Pr^{\S}_{\M^{f_2}_{T}(q)}(\lozenge \goal)\]
  for all states $q \in I$ by~\Cref{lem:valueFuncCompMC}.
  As this holds for all schedulers, it holds in particular for the optimal ones.
  Hence the statement follows for both $\minval$ and $\maxval$.

  2. From~\Cref{lem:valueFuncCompMC} we get that for any scheduler $\S$ on $\M^f$, with $\dom(f) = \dom(f_1)$, and state $q \in I$ we have:
  \[a \cdot \Pr^{\S}_{\M^f(q)}(\lozenge \goal) = \Pr^{\S}_{\M^{a \cdot f}(q)}(\lozenge \goal)\]
  Again, the statement follows for both $\minval$ and $\maxval$ as this holds for all schedulers and in particular the optimal ones.

  3. Assume, for contradiction, that $\maxval_{T}^{f_1 {+} f_2}(q) > \maxval_{T}^{f_1}(q) + \maxval_{T}^{f_2}(q)$ for some $q \in I$.
  Let $\S$ be an optimal scheduler on $\M_{T}^{f_1 + f_2}$.
  By~\Cref{lem:valueFuncCompMC} we have 
  \[\Pr^{\S}_{\M^{f_1 + f_2}_{T}(q)}(\lozenge \goal) = \Pr^{\S}_{\M^{f_1}_{T}(q)}(\lozenge \goal) + \Pr^{\S}_{\M^{f_2}_{T}(q)}(\lozenge \goal)\]
  and, as a consequence:
  \[\maxval_{T}^{f_1}(q) + \maxval_{T}^{f_2}(q) < \Pr^{\S}_{\M^{f_1}_{T}(q)}(\lozenge \goal) + \Pr^{\S}_{\M^{f_2}_{T}(q)}(\lozenge \goal)\]
  But then $\maxval_{T}^{f_i}(q) < \Pr^{\S}_{\M^{f_i}_{T}(q)}(\lozenge \goal)$ holds for some $i \in \{1,2\}$, which is impossible.
  \end{proof}

\begin{remark}
  Point (3) of~\Cref{lem:propertiesValuefuncs} does not holds for $\minval$. Rather, the following holds 
  \[\minval^{f_1 + f_2}_T \geq \minval^{f_1}_T + \minval^{f_2}_{T}\]
  by a similar argument as above for $\maxval$.
  \Cref{fig:minvalnotsublinear} gives an example where the inequation with $\leq$ fails.
\end{remark}
\begin{figure}
  \centering
  \includegraphics{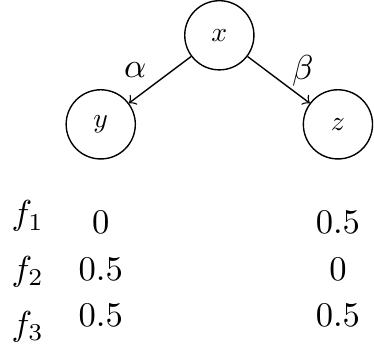}
  \caption{An example showing that $\minval$ does not satisfy (3) of~\Cref{lem:propertiesValuefuncs}. Consider functions $f_1,f_2,f_3 : \{y,z\} \to [0,1]$ defined by $f_1(y) = 0, f_1(z) = 0.5$, $f_2(y) = 0.5, f_2(z) = 0$ and $f_3(y) = 0.5, f_3(z) = 0.5$.
    We have $\minval_{f_3}(x) = \minval_{f_1 + f_2}(x) = 0.5 > \minval_{f_1}(x) + \minval_{f_2}(x) = 0$.}
  \label{fig:minvalnotsublinear}
\end{figure}

\dominationRel*
\begin{proof}
  1. As $\mathcal{S}$ strongly dominates $S_2$ there exists a partial subsystem $T \in \mathcal{S}$ such that $\{T\}$ dominates $S_2$.
  Hence we have $|T| \leq |S_2|$ and $\pointval_I(S_2)$ is a convex combination of $\pointval_I(T)$, which implies that $\pointval_I(S_2) \leq \pointval_I(T)$.
  Let $f_1  = \pointval_I(T)$, $f_2  = \pointval_I(S_2)$ and $S_1' = S_1 \cup I$.
  Then, we have $\minval_{S_1 \cup T}(q) = \minval^{f_1}_{S_1'}(q)$ and $\minval_{S_1 \cup S_2}(q) = \minval^{f_2}_{S_1'}(q)$ for all $q \in S_1'$ by~\Cref{lem:composition}.
  Furthermore, using $f_1 \geq f_2 $ and~\Cref{lem:propertiesValuefuncs} we have $\minval_{S_1 \cup T}(q) \geq \minval_{S_1 \cup S_2}(q)$ for all $q \in S_1'$.
  As all initial states are in the root of the tree partition by assumption, and hence in $S_1'$, it follows that $S_1 \cup T$ induces a witnessing subsystem for $\prb^{\min}(\lozenge \goal) \geq \lambda$.

  2. As $\mathcal{S}$ dominates $S_2$ there exists a subset $\{P_1, \ldots, P_k\} \subseteq \mathcal{S}$ such that $\{P_1, \ldots, P_k\}$ dominates $S_2$.
  Let $f_i = \pointval_I(P_i)$ for $1 \leq i \leq k$ and $g = \pointval_I(S_2)$.
  By definintion, $g$ is a convex combination of vectors $\bigcup \{ \pi(f_i) \mid 1 \leq i \leq k\}$.
  That is, there exists $\lambda_1, \ldots, \lambda_m$ such that:
  \[g = \sum_{j = 1}^m \lambda_j \cdot \gamma_j \qquad \text{ with } \quad \sum_{j = 1}^{m} \lambda_j \leq 1 \quad \text{ and } \quad \gamma_j \in  \bigcup \{ \pi(f_i) \mid 1 \leq i \leq k\} \text{ for } 1 \leq j \leq m\]
  Consider any sequence $\gamma_1', \ldots, \gamma_m' \in \{f_1,\ldots,f_k\}^m$ such that if $\gamma_j' = f_i$, then $\gamma_j \in \pi(f_i)$ for all $1 \leq j \leq m$.
  As $f_i$ is pointwise larger than any vector in $\pi(f_i)$ (for $1 \leq i \leq k$) we have:
  \[g \leq \sum_{j = 1}^m \lambda_j \cdot \gamma_j'\]
  By statements (2.) and (3.) of~\Cref{lem:propertiesValuefuncs} it follows that:
  \[\maxval_{S_1'}^{g} \leq \sum_{j = 1}^m \lambda_j \cdot \maxval_{S_1'}^{\gamma_j'}\]
  Now $\prb^{\max}_{S_1 \cup S_2}(\lozenge \goal) = \iota \cdot \maxval_{S_1'}^g$, and $\prb^{\max}_{S_1 \cup P_j}(\lozenge \goal) = \iota \cdot \maxval_{S_1'}^{f_i}$ by~\Cref{lem:composition} as $\supp(\iota)$ is included in the root of the tree partition, and hence in $S_1$.
  Choose $\gamma_j'$ (with $1 \leq j \leq k$) such that $\iota \cdot \maxval_{S_1'}^{\gamma_j'}$ is maximal and let $f_* = \gamma_j'$.
  Then:
  \[\iota \cdot \maxval_{S_1'}^{f_*} \geq \iota \cdot \sum_{j = 1}^m \lambda_j \cdot \maxval_{S_1'}^{\gamma_j'} \geq \iota \cdot \maxval_{S_1'}^{g}\]
  Hence it follows that $S_1 \cup P_*$ is a witnessing subsystem for $\prb^{\max}(\lozenge \goal) \geq \lambda$.
\end{proof}

\removeDominated*
\begin{proof}
  We adress the first claim.
  For each $k$ in $\{1,\ldots, \max\{|S'| \mid S' \in \mathcal{S}\} \}$ the set $\mathcal{H}$.vertices in~\Cref{alg:convhull:updateR} contains the vertices of the convex hull of $\bigcup\{\pi(\pointval_I(S')) \mid S' \in \mathcal{S}, |S'| \leq k\}$.
  If $\pointval_I(T)$, for $T \in \mathcal{S}[k]$, is not in $\mathcal{H}$.vertices at that point it is a convex combination of $\bigcup\{\pi(\pointval_I(S')) \mid S' \in \mathcal{S}, |S'| \leq k\}$.
  Hence, $T$ is dominated by $\bigcup\{S' \in \mathcal{S}[k] \mid \pointval_I(S') \in \mathcal{H}$.vertices$\}$, and thereby by $R$.
  
  Now suppose that some partial subsystem $T \in R$ is dominated by $R \setminus \{T\}$.
  Then, in particular $T$ is dominated by $\{S' \in R \setminus \{T\} \mid |S'| \leq |T|\}$, and hence also by $\mathcal{S}[|T|]$, as the former is a subset of the latter.
  It follows that $T$ is not a vertex of the convex hull of $\bigcup\{\pi(\pointval_I(S')) \in \mathcal{S}[|T|]\}$, as it is a convex combination of vectors therein.
  But then $T$ cannot be in $R$, as it is not added in~\Cref{alg:convhull:updateR} in the loop iteration corresponding to $k = |T|$.
\end{proof}

\algCorrectness*
\begin{proof}
  First, we argue that if~\Cref{alg:treelike} returns $S'$ then $\M_{S'}$ is a witnessing subsystem for $\prb^{\max}_{\M}(\lozenge \goal) \geq \lambda$.
  This holds because by~\Cref{lem:composition} the values of states in partial subsystems are computed correctly in~\Cref{alg:treelike:computeval}.

  Next, we show that for any witnessing subsystem $\M_{T}$ for the property we have $|T| \geq |S'|$.
  So let $T$ be a witnessing subsystem.
  Let $B_1,\ldots,B_n$ be a reverse-topological order of the tree partition.
  We will construct a sequence $S_0, \ldots, S_n$ of subsets of $S$ inductively such that
  \begin{itemize}
  \item for all $0 \leq i \leq n$: $|S_i| \leq |T|$ and $S_i$ induces a witnessing subsystem for $\prb^{\max}_{\M}(\lozenge \goal) \geq \lambda$.
  \item for all $1 \leq i \leq n$ and $j \leq i$ the partial subsystem $S_i \cap \reach(\interface(B_j))$ for $\interface(B_j)$ is in $\partsubsysmap[B_j]$ at the end of the execution of~\Cref{alg:treelike}.
  \end{itemize}
 
  We start by setting $S_0 = T$.
  To find $S_{i+1}$ we assume that the above properties hold for all $S_j$ with $j \leq i$.
  In case that the partial subsystem $S_i \cap \reach(\interface(B_{i+1}))$ is included in $\partsubsysmap[B_{i+1}]$, we can set $S_{i+1} = S_i$.
  Otherwise, we proceed as follows.
  Let $\{B_{l_1}, \ldots, B_{l_m}\} = \sucs(B_{i+1})$.
  It follows that for each $B_{l_j} \in \sucs(B_{i+1})$ the partial subsystem $P_{l_j} = S_i \cap \reach(\interface(B_{l_j}))$ is in $\partsubsysmap[B_{l_j}]$ at the end of~\Cref{alg:treelike}.
  Hence, the partial subsystem $P = \bigcup\{P_{l_1},\ldots, P_{l_m}\}$ appears in $successorPoints(\partsubsysmap,B_i)$ when considering block $B_i$ in~\Cref{alg:treelike:succpoints} in~\Cref{alg:treelike}.

  We make a case-distinction on whether $S_i \cap B_i$ is a model of the formula $\phi(B_i)$.

  \noindent \textbf{Case 1: $S_i \cap B_i \models \phi(B_i)$}.
  In this case, the partial subsystem $P \cup (S_i \cap B_i)$ is inserted into $\partsubsysmap[B_i]$ in~\Cref{alg:treelike:insertp}.
  As it is not in $\partsubsysmap[B_i]$ at the end of the execution of~\Cref{alg:treelike} by assumption, it must have been removed in~\Cref{alg:treelike:removedom}.
  Hence, by~\Cref{lem:removedom}, $P \cup (S_i \cap B_i)$ is dominated by $\partsubsysmap[B_i]$.
  By~\Cref{lem:dominationrel} we can conclude that that there exists a partial subsystem $P' \in \partsubsysmap[B_i]$ such that $(S_i \setminus \reach(\interface(B_i))) \cup P'$ induces a witnessing subsystem for $\prb^{\max}_{\M}(\lozenge \goal) \geq \lambda$, and $|P'| \leq |P \cup (S_i \cap B_i)|$.
  We set $S_{i+1} = (S_i \setminus \reach(\interface(B_i))) \cup P'$.

  \noindent \textbf{Case 2: $S_i \cap B_i \not\models \phi(B_i)$}.
  In this case there exists some $L \subseteq S_i \cap B_i$ such that $L \models \phi(B_i)$ and for any partial subsystem $P$ for $\outint(B_i)$ the partial subsystem $L \cup P$ for $\interface(B_i)$ strongly dominates the partial subsystem $(S_i \cap B_i) \cup P$.
  Now the argument of \textbf{Case 1} can be applied by observing that if a set of partial subsystems dominate $L \cup P$, then the same set dominates $(S_i \cap B_i) \cup P$.

  This shows that we can construct the sequence $S_0,\ldots,S_n$ satisfying the above properties.
  But then $S_n$ induces a witnessing subsystem and satisfies $|S_n| \leq |T|$.
  As $S_n$ is part of $\partsubsysmap[B_r]$ in the last line of the algorithm, we have $|S'| \leq |S_n|$.

  Finally, we argue that the algorithm takes at most exponential time to return.
  Let $S$ be the states of $\M$, $N = |S|$ and $K$ be the width of the given tree partition.
  The outermost for-loop is taken at most $N$ times.
  The for-loop starting in~\Cref{alg:treelike:BDD} is taken at most $2^K$ times, as it ranges over subsets of $B$ which has at most $K$ states.
  The innermost for-loop in~\Cref{alg:treelike:succpoints} is taken at most $2^N$ times, as it ranges over subsets of $S$.
  The value computation in~\Cref{alg:treelike:computeval} can be done in polynomial time in $\M$.
  The subroutine removeDominated (\Cref{alg:convhull}) which is called in~\Cref{alg:treelike:removedom} requires exponential time in $K$ and polynomial in the size of the input (in this case $\partsubsysmap$[B]).
  This is because the convex hull of $a$ points in dimension $d$ can be computed in time $O(a \log a + a^{\lfloor d/2 \rfloor})$~\cite{Chazelle1993}.
  In our case $d = O(K)$ and $a = O(2^K \cdot |\partsubsysmap[B]|) = O(2^K \cdot 2^N)$.
  The factor of $2^K$ in the number of points used in the convex hull computation comes from the fact that we include all projections of any point for a partial subsystem in $\partsubsysmap[B]$.
  All in all the algorithm requires at most exponential time in the size of $\M$.
\end{proof}

\section{Additional experimental data}
\label{sec:appendix:exp}

\begin{table}[h!]
  \caption{Computation times for computing minimal witnessing subsystems in the bounded retransmission protocol using either the MILP approaches (with solvers cbc and gurobi) or~\Cref{alg:treelike}. The ``min'' and ``max'' in brackets refer to which MILP formulation was used (either the one derived from $\P^{\min}$ or $\P^{\max}$ as defined in~\cite[Lemma 5.1]{FunkeJB2020}).}
  \label{tab:detailedtable}
  \centering
  \includegraphics[width=\textwidth]{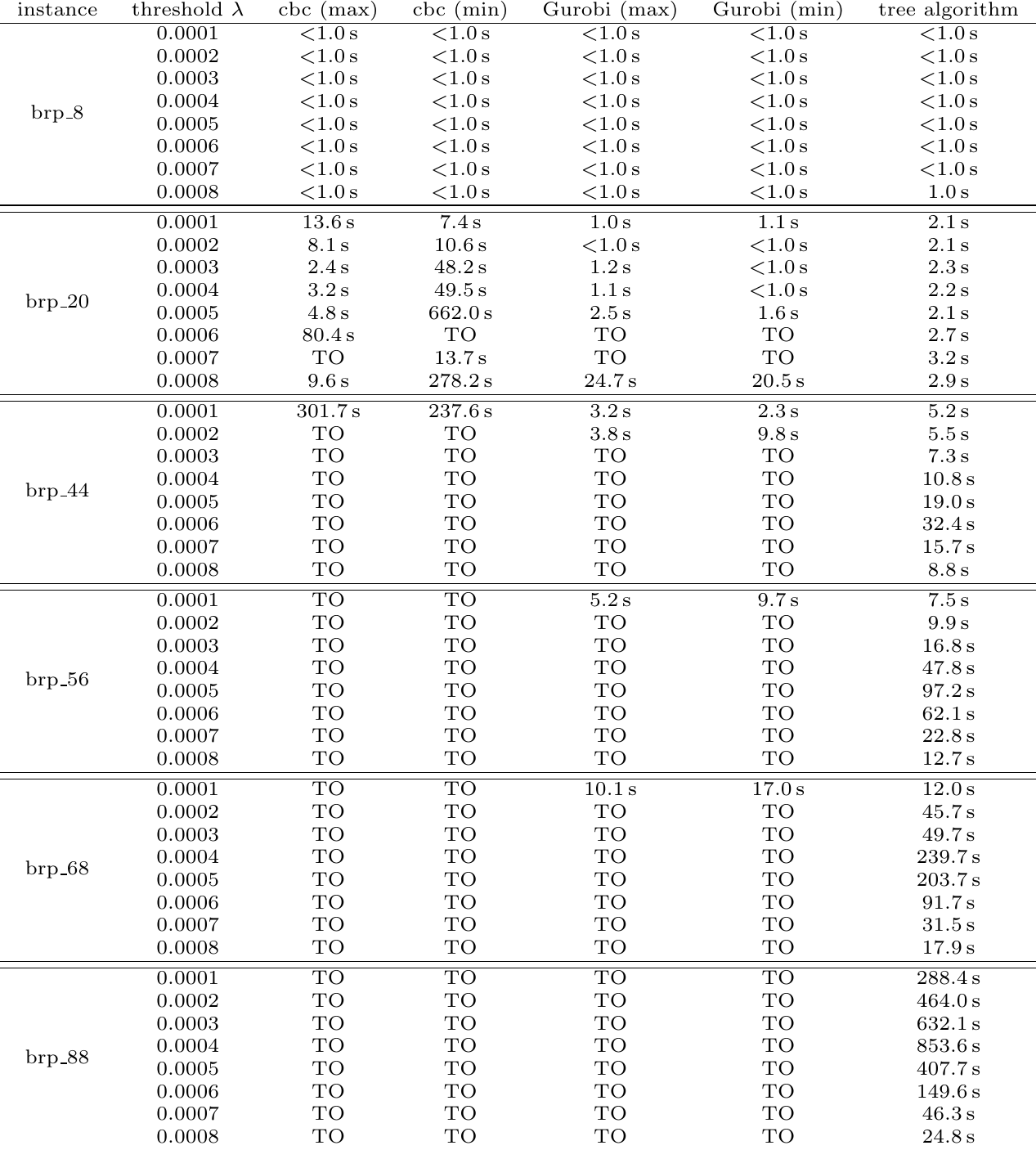}
\end{table}
\end{document}